\documentclass[12pt]{article}
\synctex=1

\usepackage [english]{babel}
\usepackage[utf8x]{inputenc}

\usepackage{amssymb} %maths
\usepackage{amsmath} %maths
\usepackage{amsthm} %theorem environment

\usepackage{bigints}
\usepackage{float}
\usepackage{bbm}
\usepackage{geometry}
\geometry{
	margin=1in,
}

\usepackage{natbib}
\usepackage{newtxmath}

\usepackage{enumerate}% http://ctan.org/pkg/enumerate
\usepackage{setspace} %spacing
%\onehalfspacing
%\setlength{\parindent}{0pt}
%\onehalfspacing
%\usepackage[pass]{geometry}

\usepackage{pgfplots}

\usepackage{tikz}
\usetikzlibrary{arrows,automata, fit, shapes, patterns, calc, intersections, decorations.pathreplacing, pgfplots.fillbetween}

\usepackage[breaklinks,backref,pagebackref]{hyperref}

\definecolor{solarizedBase03}{HTML}{002B36}
\definecolor{solarizedBase02}{HTML}{073642}
\definecolor{solarizedBase01}{HTML}{586e75}
\definecolor{solarizedBase00}{HTML}{657b83}
\definecolor{solarizedBase0}{HTML}{839496}
\definecolor{solarizedBase1}{HTML}{93a1a1}
\definecolor{solarizedBase2}{HTML}{EEE8D5}
\definecolor{solarizedBase3}{HTML}{FDF6E3}
\definecolor{solarizedYellow}{HTML}{B58900}
\definecolor{solarizedOrange}{HTML}{CB4B16}
\definecolor{solarizedRed}{HTML}{DC322F}
\definecolor{solarizedMagenta}{HTML}{D33682}
\definecolor{solarizedViolet}{HTML}{6C71C4}
\definecolor{solarizedBlue}{HTML}{268BD2}
\definecolor{solarizedCyan}{HTML}{2AA198}
\definecolor{solarizedGreen}{HTML}{859900}

       %allows coordinate calculations.

\theoremstyle{definition}
\newtheorem{definition}{Definition}[section]

\theoremstyle{definition}

\theoremstyle{plain}
\newtheorem{lemma}{Lemma}[section]

\theoremstyle{plain}
\newtheorem{proposition}{Proposition}[section]

\theoremstyle{plain}

\theoremstyle{plain}

\theoremstyle{plain}
\newtheorem{theorem}{Theorem}[section]

\theoremstyle{definition}

\theoremstyle{definition}
\newtheorem{assumption}{Assumption}[section]

\theoremstyle{remark}
\newtheorem{remark}{Remark}[section]

\theoremstyle{plain}
\newtheorem{corollary}{Corollary}[section]

\bibliographystyle{apalike}

\newcommand{\unt}{\underline{\theta}}
\newcommand{\ovt}{\overline{\theta}}

\title{Coexistence of Centralized and Decentralized Markets}

\author{Berk Idem\footnote{Penn State University. Email address: berkidem@psu.edu. I am grateful to my advisor Vijay Krishna and my committee members Kalyan Chatterjee, Henrique De Oliveira, Nima Haghpanah, and Ran Shorrer for their invaluable guidance. I would also like to thank Nageeb Ali, Nino Doghonadze, Miaomiao Dong, Marc Henry, Rohit Lamba, Ece Teoman, and Neil Wallace for many insightful conversations, and the participants of EC'21, Midwest Theory Conference '21, Pennsylvania Theory Conference '21, Stony Brook GT Conference '21 for their comments and questions.}\\Penn State University}

%\date{\parbox{\linewidth}{\centering%
%		\today\endgraf\bigskip
%		\href{https://berkidem.com/documents/coexistence.pdf}{\textbf{\textcolor{blue}{Click here for the latest version}}} 	}}

\begin{document}
	
	\maketitle

\begin{abstract}

In this paper, I introduce a profit-maximizing centralized marketplace into a decentralized market with search frictions. Agents choose between the centralized marketplace and the decentralized bilateral trade. I characterize the optimal marketplace in this market choice game using a mechanism design approach. In the unique equilibrium, the centralized marketplace and the decentralized trade coexist. The profit of the marketplace decreases as the search frictions in the decentralized market are reduced. However, it is always higher than the half of the profit when the frictions are prohibitively high for decentralized trade. I also show that the ratio of the reduction in the profit depends only on the degree of search frictions and not on the distribution of valuations. The thickness of the centralized marketplace does not depend on the search frictions. I derive conditions under which, this equilibrium results in higher welfare than either institution on its own.

\end{abstract}

\newpage
\section{Introduction}

Centralized marketplaces that bring buyers and sellers together have seen massive growth in the last decade. For instance, it has been estimated that Amazon generates half of all e-commerce sales in the US.\footnote{\cite{amazon}.} Another analysis estimates that usage of ride-share apps surpassed taxis in NYC as early as 2017.\footnote{\cite{uber}.} Airbnb and Vrbo's market shares in vacation rentals reached half of the market.\footnote{\cite{airbnb}.} All of these evidence suggest that the centralized marketplaces have significant positions in their respective markets.

These centralized marketplaces are successful because they reduce search and information frictions present in decentralized markets. In a decentralized market, an agent may not meet with a trading partner, and even when he meets with a partner, it may not be the right one. Centralized marketplaces reduce these frictions by attracting agents, collecting information from them, and making sure the realized matches are efficient enough to allow the marketplace to make some profit. In turn, the near certainty of trade on these platforms attracts many agents and gives the platforms significant market shares.

But the reduction in frictions is not without cost. Centralized marketplaces take commissions from the participants --buyers or sellers, or both. It is up to the individual traders to decide whether it is worthwhile to buy/sell in the centralized marketplace or to do so in a decentralized manner. The two institutions compete. And, the payoff from trading in the decentralized market is a function of the self-selected agents who also choose to trade there. Given the rising market shares of platforms like Amazon, Uber, etc., there are concerns about the possible monopolization of trade by such platforms \citep{khan}. Indeed, a congressional panel investigating competition in the digital markets has asserted that Amazon has monopoly power as an intermediary in the US e-commerce market \citep{amazon}.

In this paper I ask: Would a centralized marketplace monopolize all trade or is there room for some forms of decentralized trade to coexist? Moreover, if coexistence is possible, what are the welfare and profit consequences of multiple trading modes? How is the centralized marketplace affected by search frictions?

The literature on competing platforms sheds some light on the competition when there are multiple profit-maximizing platforms. Search theory provides an understanding of the decentralized markets on their own. However, we know less about the impact of a centralized marketplace that competes with numerous other trading venues. Recent contributions to matching theory \citep{ashlagiroth,ekmekciyenmez,rothshorrer} have focused on how to make a marketplace attractive to agents, when there are other options. They show that even a benevolent marketplace can have difficulty in recruiting agents. It is natural to expect it to be even harder when the marketplace is a profit maximizer. To answer the questions above, I study a centralized marketplace introduced into a decentralized market with some frictions.

To illustrate the point, consider the example of someone who wants to buy or sell a used car. She can check the prices offered by Carvana, a platform specialized in the used car market. If the price offered by Carvana is acceptable, she can simply take this deal. However, if she thinks she can get a better deal by searching privately via newspaper ads, she can choose to do so. Even though Carvana may have a large market share in this sector, there are still endless possibilities for trading privately. When Carvana chooses what price to offer for each car, ignoring these possibilities would harm its profit. Moreover, it is hard to guess the impact of Carvana's response to these other options.

Financial markets provide another example. Many assets can be traded at the stock exchanges as well as over-the-counter. In the stock exchanges, there is essentially no uncertainty; agents can buy or sell at the posted prices. However, in the over-the-counter markets, trade is not as transparent; dealers often do not post prices in a public manner. Instead, they provide quotes when someone is interested in trading with them. The agents who trade with these dealers only observe the prices offered by a limited number of dealers before they trade. This paper provides a framework to think about the problems faced by a stock exchange that competes with over-the-counter trades.

In this paper, I develop a model of a centralized marketplace that competes for agents who also have the option to trade in a decentralized manner. I consider a setup with a single, indivisible good where each of a continuum of agents can buy or sell one unit of the good. The endowments are common knowledge whereas the valuations are the private information of the agents. A designer chooses an individually rational, incentive compatible mechanism to maximize revenue of the marketplace -- say, the commissions charged for intermediation.

The decentralized market is modeled as in Diamond-Mortensen-Pissarides \citep{diamond,mortensen,pissarides}; agents are randomly matched among those who choose to participate in the decentralized trade and then they engage in Nash Bargaining in each realized match.\footnote{Later I establish robustness by showing that the results extend to trading with a double auction, instead of Nash Bargaining under some distributions.}

I consider a market choice game where (i) the marketplace designer announces a mechanism, (ii) agents choose whether to join the mechanism or to search for a trading partner, (iii) outcomes are realized in both markets. I first establish that in the unique equilibrium, centralized marketplace and decentralized trade coexist. Thus, when the agents can choose between these two modes of trade, it is never an equilibrium for all agents to join the same market.\footnote{In solving the model, I focus on equilibria with no profitable bilateral deviations, i.e., in equilibria where there is not a pair of agents who would rather deviate to the other market together and trade there. Without this strengthening of the equilibrium notion, it would also be an equilibrium for all agents to join the same market.} In the equilibrium, the agents with low and high values join the marketplace while the agents with intermediate values choose to search. To summarize, high surplus trades take place in the marketplace while low surplus trades happen privately.

One might expect that competition from the decentralized market will significantly decrease the profits of the marketplace. This is not the case. I demonstrate that the profit of the marketplace in the coexistence equilibrium is at least half of the profit that the marketplace would make if there had been no decentralized trade. Moreover, I show that the ratio of the reduction in profit of the marketplace as result of competition from the decentralized market is independent of the distribution of agents valuations. In fact, this ratio is \textit{only} a function of the search friction -- the probability of finding a trading partner -- in the decentralized market. A decrease in the search frictions in the decentralized market decreases the profits of the marketplace. However, even if these frictions were absent, the profit of the marketplace is half of its profit when it operates on its own.

The thickness of the centralized marketplace is independent of the search frictions. Even at the extremes, where the search frictions are absent or prohibitively high, the centralized marketplace targets and successfully attracts exactly the same agents to trade there. As the frictions decrease, the centralized marketplace has to ``sweeten the deal'' for traders to join there. Thus, with lower frictions in the decentralized market, the profit of the centralized marketplace from each trade is lower. It would be reasonable to expect the centralized marketplace to become more exclusive as a response. However, it is in fact optimal for the centralized marketplace to attract exactly the same agents.

Next, I provide two types of welfare comparison. First, I focus on the traders' welfare in a Pareto sense. I show that decreasing the frictions in the decentralized market increases the payoff of each trader, no matter which market they choose to trade in. Then, I compare the total welfare -measured as gains from trade- created in this equilibrium to the welfare from either modes of trade operating on their own. Coexistence always improves the welfare over the centralized marketplace alone. Furthermore, I provide conditions under which the coexistence generates higher total welfare than the search market alone. Essentially, the decentralized market extends the extensive margin of trade (more agents trades in the coexistence equilibrium than in the baseline, single marketplace) while the marketplace extends the intensive margin (some agents trade with a higher probability in the coexistence equilibrium than when the decentralized market operates alone). Thus, the combination of these leads to increased efficiency.

Finally, I discuss an alternative setup where there are multiple profit-maximizing marketplace designers competing with each other. First, I study the case where they are restricted to choose direct mechanisms. This environment is akin to Bertrand Competition and the insight from the Bertrand Equilibrium carries over: In equilibrium, the designers make zero profit and the outcome is equivalent to a Walrasian Equilibrium. Next, I allow the designers to choose more complex mechanisms and construct one particular equilibrium using strategies that include ``price-matching guarantees.'' In standard Bertrand Competition, price-matching guarantees are known to allow monopoly pricing to be a Nash equilibrium. Here, each marketplace posts prices that are equivalent to the baseline marketplace where there is no decentralized market and the agents uniformly randomize over the marketplaces. In this case, the marketplaces share the baseline profit. This discontinuity between zero profit with direct mechanism and the collective baseline profit (similar to a cartel's monopoly profit) reflects the inadequacy of direct mechanisms when there are multiple designers. If agents could also join the decentralized market, then the marketplaces would each post prices equivalent to the coexistence equilibrium mechanism of the main model with price-matching guarantees, and the agents again randomize.

After the literature review below, the rest of the paper is organized as follows. In Section 2, I begin by studying a baseline model in which all trade takes place through the centralized marketplace. In Section 3, I consider the full model with both forms of trade and study the equilibria of the market choice game. I first show that there is an equilibrium and that it is unique. I then study the profit and the welfare properties of the equilibrium. In Section 4, I consider competition among multiple marketplace designers.

\subsection{Literature Review}

There are many studies that consider the problem of incentivizing participation to centralized marketplaces. However, many of them do not have a decentralized trade option, as in the literature on competing platforms. In the papers where there is a decentralized market as well, the centralized marketplace is often a benevolent one. I review these literatures separately.

In the literature on competing platforms \citep{rochet, armstrong}, the questions mainly focus on the competition among platforms under numerous configurations of fee and price structures that could be employed by the platforms. The important distinction between this literature and my study is that in the context of competing platforms, the competition is between two profit-maximizing entities; they each react (or best respond) to the other's actions. However, here, the centralized marketplace is in competition with a fixed set of rules that cannot react to the marketplace's actions. This paper complements these studies by providing a insights into the nature of the competition in a decentralized market with a platform. More recently \cite{hartline} also bridges the gap between mechanism design and two-sided markets with a model where sellers can choose to join a platform that sets a menu of selling procedures or to develop their own selling venue.

Following the financial crisis of 2007-2008, a series of papers initiated by \cite{philipponskreta, tirole} focused on the ability of the public interventions to increase the efficiency of the investments in the financial markets. In these papers, the designers of the centralized markets are concerned with social welfare. Specifically, they focus on markets with adverse selection (and moral hazard in the case of \cite{tirole}) in terms of the quality of the investments. Without the public intervention, the level of investments is below the socially optimal level. To reduce the adverse selection and increase the level of investment, the government introduces a program by overpaying for some assets and removing the weakest assets from the market. \cite{fuchsskrzypacz} studies a similar problem but considers the effects of the dynamic nature of the markets. By contrast, I focus on the profit-maximization problem of a centralized marketplace. Moreover, the nature of the frictions in the decentralized markets are different from the ones considered in this literature.

In another strand, some papers in matching theory \citep{ashlagiroth,ekmekciyenmez,rothshorrer} study the problem faced by a benevolent marketplace designer when the agents can choose between multiple venues. In kidney-exchange and school-choice settings, they show that it might be infeasible or undesirable to make sure everyone joins the centralized market, even if it aims to maximize the social welfare. This paper provides a natural counterpart where the centralized market is only concerned about is own profit. 

\cite{vohra} study a model where agents are allowed to deviate from a market mechanism to trade among themselves according to any feasible trading protocol. Their main result states that almost every market mechanism is inherently unstable in the sense that there is always a positive measure of agents who would like to deviate from it. My findings provide a partial counterpart to their result: By restricting the possible deviations from the market mechanism, I am able to find a stable market structure where both the centralized marketplace and the outside trade are active.

Literature on the efficient dissolution of partnerships has important parallels with this study. Starting with \cite{gck} and with contributions by many others \citep{mylovanovtroger, kitt,fieselerkittsteinermoldovanu, loertscherwasser, figueroaskreta}, the setup in this literature includes a divisible good that is owned by many agents and a designer who wants to allocate the whole supply to one agent (or in some cases, to at least reduce the number of owners), thus dissolving the partnership. All of these papers have endogenous roles as buyers and sellers; each agent has some endowment which is less than the total endowment in the economy. Thus, like here, they obtain intermediate types who that are excluded from the trade and U-shaped utilities as a functions of agents types. Moreover, in extending the Nash bargaining, I used the efficient double auction which was introduced by \cite{gck} and was shown to have a unique equilibrium by \cite{kitt}.

\cite{bilateral} studied the problem of choosing a trade mechanism to maximize the total welfare in the economy. Their main result shows that it is generically impossible to have an efficient trade mechanism -that allocates the good always to the agent who values it the most- without outside resources to finance it. In our model without a decentralized market, unsurprisingly, the welfare achieved is even less than what \cite{bilateral} provides. However, introducing the option to search improves the efficiency of the market as a whole.

\cite{miao} also studies a similar environment with centralized and decentralized markets. When the search technology in the decentralized market is improved so that it can support the Walrasian Equilibrium, he shows that the centralized market serves a vanishingly small part of the population.

	\section{Single Market}
	
	In this section, I study the centralized marketplace in isolation and restrict all trade to the centralized marketplace; hence the agents either trade on this marketplace or do not trade at all.

	\subsection{Setup}
	
	I consider a market with a single, indivisible good. There is a continuum of agents on $[0,1]$. Each agent has $1$ unit of endowment of the good and has a demand for $2$ units of it. As the good is indivisible, each agent can sell 1 unit, buy 1 unit, or neither buy nor sell any. Each agent has some valuation $\theta\in[0,1]$ for a unit of the good. The valuations are drawn from some continuous distribution $F$ with support $[0,1]$, and they are agents' private information. A mechanism designer knows the distribution of valuations, $F$, and wants to design a Bayesian Individually Rational mechanism to maximize its profit.

	By Revelation Principle, I focus on direct, Bayesian Incentive Compatible, Bayesian Individually Rational mechanisms. Since there is a continuum of agents, there is no aggregate uncertainty. Thus, it is without loss to focus on direct, Ex-Post Incentive Compatible, Dominant-Strategy Individually Rational mechanisms. Moreover, as agents are symmetric other than their valuations, I restrict attention to anonymous mechanisms; that is, the designer does not condition the mechanism on agents' 'names'.
	
	The designer will choose a mechanism described by the quantities and the transfers as functions of agents' reported valuations. Specifically, the allocations and transfers are given by functions $q:[0,1] \rightarrow \mathbb{R}$ and $t:[0,1] \rightarrow \mathbb{R}$, respectively. Thus, an agent who reports his valuation is $\theta$ will get $q(\theta)$ units of the good and will pay $t(\theta)$. Note that both the allocation and the transfer can be either positive or negative, depending on whether the agent is buying or selling. Hence, the net utility of the agent with the valuation $\theta$ from this mechanism is $$ u(\theta) = \theta \min\{ 1, q(\theta) \}  - t(\theta). $$
	
	As agents have demands for two units, having more than 2 units of the good is the same as having 2 unit for the agent. Therefore, the utility from the traded quantity is capped at 1 unit.

	\subsection{Analysis}

	The profit of the marketplace is the expected net payments. Thus, the designer seeks to maximize total payments, given incentive compatibility, individual rationality, and feasibility constraints.

	\begin{equation*}
		\begin{array}{lllll}
			
			\displaystyle\max\limits_{(q, t)} & \mathbb{E}_{\theta}\left[ t(\theta) \right]  \\

			\text{s. t. }\\
			
			\text{ (IC) } &\theta  \min\{ 1, q(\theta)\}  - t(\theta)  &\geq \theta \min\{ 1, q(\theta', \theta) \}   - t(\theta')\\
			
			\text{ (IR) }& \theta  \min\{ 1, q(\theta)  \}  - t(\theta) &\geq 0\\ 
			
			\text{ (Individual Feasibility) }& q(\theta) &\geq -1\\

			\text{ (Aggregate Feasability) }& \mathbb{E}_{\theta}\left[ q(\theta) \right] &\leq 0\\
			
		\end{array}
	\end{equation*}

	In an Online Appendix, I develop a series of lemmata to simplify this problem.\footnote{Online Appendix is available at \href{https://berkidem.com/documents/coexistence_online_appendix.pdf}{this link}.} They allow the problem to be restated in terms of the virtual values and virtual costs, defined as follows.
	
	\begin{definition}
		An agent with reported valuation $\theta$ has virtual value, $\mathcal{V}(\theta)$, and virtual cost, $\mathcal{C}(\theta)$, given by:
		
		\begin{equation*}
			\mathcal{V}(\theta) = \theta  - \dfrac{(1-F(\theta))}{f(\theta)} \text{ and } \mathcal{C}(\theta) = \theta  + \dfrac{(F(\theta)}{f(\theta)}.
		\end{equation*}
	\end{definition}

	Virtual value and virtual cost can be thought of as the marginal revenue and marginal cost. When an agent is a seller, that is, when an agent has a negative allocation, $q(\theta)<0$, his deduction from the profit of the marketplace is the virtual cost. Similarly, when an agent is a buyer, $q(\theta)>0$, his contribution to the profit is the virtual value. Then, the problem can be restated in these terms as follows:

	\begin{equation*}
		\begin{array}{llllll}
			
			\displaystyle\max_{q(\cdot)}  &\mathbb{E}    \left[  q(\theta) \left(    \mathbbm{1} \{ q(\theta)<0 \}\mathcal{C}(\theta)  + \mathbbm{1} \{  q(\theta)>0  \}\mathcal{V}(\theta) \right) \right] \\

			\text{s. t. }&\\
			
			& q(\theta) \text{ is increasing}\\
			
			&  q(\theta) \geq -1\\
			
			&\mathbb{E}_{\theta}\left[ q(\theta) \right] = 0
			
		\end{array}
	\end{equation*}

	\begin{definition}\label{defn:regular}
		The distribution of agents' valuations, $F$ is \textbf{regular} if both $\mathcal{V}$ and $\mathcal{C}$ are increasing.
	\end{definition}
	
	The regularity condition guarantees that the marketplace has a decreasing marginal revenue from having additional buyers and increasing marginal cost from having additional sellers. With this definition, we are ready to state the main result of this section, which characterizes the baseline optimal marketplace.

	\begin{theorem}\label{thm:monagora}
		Suppose the distribution $F$ is regular. Then, the optimal mechanism has the allocation rule 
		
		\begin{align*}
			q(\theta)=\begin{cases}
				-1 &\text{ if } \theta\leq \unt\\
				0 &\text{ if } \unt < \theta < \ovt\\
				1 &\text{ if } \theta\geq \ovt
			\end{cases}
		\end{align*}
		
		and the transfer rule
		
		\begin{align*}
			t(\theta)=\begin{cases}
				-\unt &\text{ if } \theta\leq \unt\\
				0 &\text{ if } \unt < \theta < \ovt\\
				\ovt &\text{ if } \theta\geq \ovt
			\end{cases}
		\end{align*}
		
		where $\unt$ and $\ovt$ satisfies $\mathcal{C}(\unt) = \mathcal{V}(\ovt)$ and solves the problem
		
		\begin{equation*}
			\begin{array}{llllll}
				
				\displaystyle\max_{\unt,\ovt}  &\left[ -\unt F(\unt)  + \ovt (1-F(\ovt )) \right] \\

				\text{s. t. }&\\
				
				&F(\unt) = 1- F(\ovt)\\
				
				& 0\leq \unt\leq \ovt \leq 1.
				
			\end{array}
		\end{equation*}
		
	\end{theorem}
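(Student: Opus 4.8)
The plan is to work entirely with the relaxed virtual-surplus program stated just above the theorem, in which the objective is the integral of the pointwise term
$$g_\theta(q) = \mathbbm{1}\{q<0\}\,q\,\mathcal{C}(\theta) + \mathbbm{1}\{q>0\}\,q\,\mathcal{V}(\theta),$$
subject to $q(\cdot)$ increasing, $-1 \le q(\theta) \le 1$ (the upper bound reflecting that an agent can buy at most one unit), and $\mathbb{E}_\theta[q(\theta)] = 0$. The first observation I would record is that, for each fixed $\theta$, the map $q \mapsto g_\theta(q)$ is concave: it is piecewise linear with left slope $\mathcal{C}(\theta)$ and right slope $\mathcal{V}(\theta)$ at the kink $q=0$, and since $\mathcal{V}(\theta) = \theta - (1-F(\theta))/f(\theta) \le \theta \le \theta + F(\theta)/f(\theta) = \mathcal{C}(\theta)$, the slope drops as $q$ crosses zero. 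This concavity is exactly what lets a single Lagrange multiplier handle the sign-dependent objective.

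Next I would introduce a multiplier $\lambda$ for the mean-zero constraint and solve $\max_{q \in [-1,1]} g_\theta(q) - \lambda q$. Because $g_\theta$ is concave and piecewise linear, the maximizer is a corner: $q=1$ when $\mathcal{V}(\theta) > \lambda$, $q=-1$ when $\mathcal{C}(\theta) < \lambda$, and $q=0$ when $\mathcal{V}(\theta) \le \lambda \le \mathcal{C}(\theta)$. Regularity (Definition \ref{defn:regular}) makes $\mathcal{V}$ and $\mathcal{C}$ strictly increasing, hence invertible, so these regions are the intervals $\theta \ge \ovt := \mathcal{V}^{-1}(\lambda)$, $\theta \le \unt := \mathcal{C}^{-1}(\lambda)$, and $\unt < \theta < \ovt$; the inequality $\mathcal{V} \le \theta \le \mathcal{C}$ gives $\unt \le \ovt$, so the three regions are consistent and the candidate $q^{*}$ is automatically increasing, which means the monotonicity constraint I dropped is in fact slack. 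I would then pin $\lambda$ by the mean-zero condition, $-F(\unt) + (1 - F(\ovt)) = 0$, i.e. $F(\unt) = 1 - F(\ovt)$; existence follows from the intermediate value theorem since $\mathbb{E}[q^{*}]$ runs from $+1$ to $-1$ as $\lambda$ increases.

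To confirm global optimality rather than mere stationarity, I would verify directly that $\lambda$ is a supergradient of $g_\theta$ at $q^{*}(\theta)$ for every $\theta$: by construction $\lambda \in [\mathcal{V}(\theta), \mathcal{C}(\theta)]$ on the middle region, while on the two outer regions the supporting-line inequality at the boundary points $q = \pm 1$ follows from concavity together with $\lambda < \mathcal{V}(\theta)$ for buyers and $\lambda > \mathcal{C}(\theta)$ for sellers. Integrating the supporting inequality $g_\theta(q) \le g_\theta(q^{*}) + \lambda(q - q^{*})$ against $F$ and using $\mathbb{E}[q] = \mathbb{E}[q^{*}] = 0$ for any feasible $q$ yields $\mathbb{E}[g_\theta(q)] \le \mathbb{E}[g_\theta(q^{*})]$, so $q^{*}$ is optimal. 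A short integration by parts, using $\tfrac{d}{d\theta}(\theta F(\theta)) = F(\theta) + \theta f(\theta)$, then turns $-\int_0^{\unt}\mathcal{C}\,f + \int_{\ovt}^1 \mathcal{V}\,f$ into $-\unt F(\unt) + \ovt(1 - F(\ovt))$, and the first-order conditions of this two-variable program (with a multiplier on $F(\unt) = 1 - F(\ovt)$) reproduce $\mathcal{C}(\unt) = \mathcal{V}(\ovt) = \lambda$, tying the two characterizations together.

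Finally I would recover the transfers from incentive compatibility. The envelope condition gives $u'(\theta) = \min\{1, q^{*}(\theta)\}$, equal to $-1$, $0$, $+1$ on the three regions, so $u$ is decreasing, then flat, then increasing, with its minimum on the middle interval; revenue maximization pushes that minimum down to the binding individual-rationality level $0$. Integrating $u'$ outward from the flat middle yields $u(\theta) = \unt - \theta$ for sellers and $u(\theta) = \theta - \ovt$ for buyers, and solving $u(\theta) = \theta\min\{1,q^{*}(\theta)\} - t(\theta)$ for $t$ delivers exactly the stated transfer rule. The step I expect to be the main obstacle, and on which I would spend the most care, is the optimality verification in the third paragraph: the objective is nonsmooth and sign-dependent and the monotonicity constraint was dropped, so the real work is in showing that the single multiplier $\lambda$ simultaneously supports all three corner solutions and that the relaxed optimum is automatically both monotone and mean-feasible.
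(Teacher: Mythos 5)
Your proof is correct, but it takes a genuinely different route from the paper's. The paper exploits the indivisibility of the good together with monotonicity of the allocation at the outset: any incentive-compatible allocation with $q(\theta)\in\{-1,0,1\}$ increasing is automatically a two-threshold rule, so the infinite-dimensional program collapses immediately to the two-variable cutoff problem, which is then solved by integration by parts, a compactness argument for existence, and a perturbation (exchange) argument to establish $\mathcal{C}(\unt)=\mathcal{V}(\ovt)$. You instead drop monotonicity, attach a multiplier $\lambda$ to the market-clearing constraint, maximize the concave piecewise-linear integrand pointwise, and only afterwards observe that regularity makes the relaxed solution monotone; the supergradient inequality then certifies optimality against \emph{every} feasible allocation in one integration, and $\mathcal{C}(\unt)=\mathcal{V}(\ovt)$ falls out as the definition of the common multiplier rather than requiring a separate variational argument. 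Your approach buys a cleaner global-optimality certificate and would survive relaxing indivisibility to $q\in[-1,1]$, since the bang-bang structure comes from concavity of the integrand rather than from the discreteness of allocations; the paper's approach is more elementary and shorter precisely because it leans on that discreteness. The transfer derivation via the envelope condition and the binding IR at the flat middle segment matches the paper's. One small point worth recording if you write this up: the aggregate feasibility constraint in the original program is an inequality $\mathbb{E}_\theta[q(\theta)]\le 0$, so your final integration step should note that $\lambda=\mathcal{C}(\unt)\ge 0$, which makes the bound $\mathbb{E}[g_\theta(q)]\le\mathbb{E}[g_\theta(q^*)]+\lambda\,\mathbb{E}[q]\le\mathbb{E}[g_\theta(q^*)]$ go through for all feasible $q$, not only those with $\mathbb{E}[q]=0$.
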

	
	The proof can be found in Appendix \ref{proof:monagora}.
	
	Notice that each agent below $\unt$ sells 1 unit and gets paid $\unt$, and each agent above $\ovt$ buys 1 unit and pays $\ovt$. If the designer posts $\unt$ as the price for selling and $\ovt$ as the price for buying, and let agents choose what to do, the allocation above represents exactly what the agents would do. Thus, the designer can implement this mechanism in a very straightforward way by posting bid-ask prices.

	\subsection{The Simple Economics of Optimal Marketplaces}
	
	Here, I provide an analysis of the optimal profit-maximizing marketplaces in the same spirit as \cite{bulow}. They have shown that the optimal auction design problem can be understood as a monopoly pricing problem. In my setting, we will see that the optimal marketplace design problem can be understood as solving two simultaneous monopoly and monopsony problems.
	
	The objective function of the designer is the expected payments of the agents. In the restatement of the problem, we have seen that an equivalent way of thinking about the expected payments is the expected difference between the virtual values and costs, or expected \textit{virtual surplus}. Thus, I will now use virtual values and costs to find the optimal level of trade graphically.
	
	For each quantity level, $q$, if the marketplace wants measure $q$ of sellers, the price for selling should be $F^{-1}(q)$. Similarly, to have measure $q$ of buyers, the price for buying should be $F^{-1}(1-q)$. Then, the inverse supply and demand in the economy are given by $\mathcal{S}=F^{-1}(q)$ and $\mathcal{D}=F^{-1}(1-q)$
	
	Also, for each quantity level, $q$, let $\mathcal{MC}(q)= \mathcal{C}(F^{-1}(q))$ and $\mathcal{MR}(q)= \mathcal{V}(F^{-1}(1-F(\theta)))$. Here, the virtual cost is the marginal cost of the marketplace so it represents the `effective supply' in the marketplace. Similarly, the virtual value gives us the marginal revenue curve so it represents the `effective demand' for the marketplace. Since the virtual values and costs are the marginal revenue and cost curves for the profit-maximization problem, instead of agents' willingness to pays (or willingness to get paid), I use their virtual values and costs to find the optimal quantity of trade. This is very similar to using the marginal revenue instead of the demand in the monopoly problem.	If the designer knew a buyer's valuation, she would charge the buyer exactly his valuation. However, since the designer is uninformed about the valuations, she has to charge a buyer something less than his valuation. Similar reasoning holds for the sellers as well. So, each agent who trades, gets some information rent for his private information. This can be seen in the Figure \ref{monagoraprofit}.

	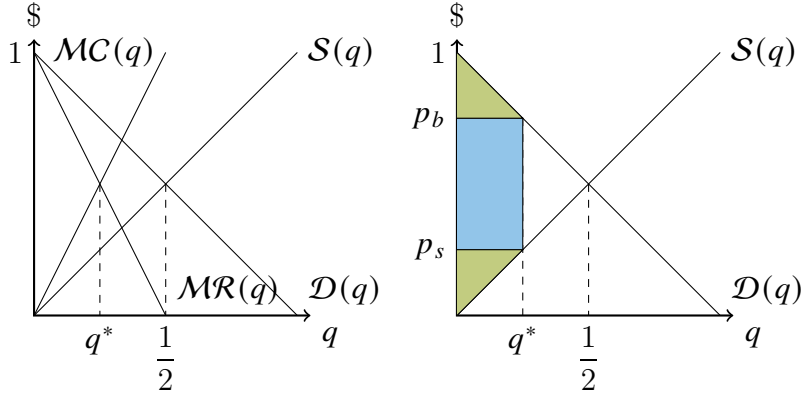
\begin{figure}
		\centering
		\begin{tikzpicture}[scale=0.35]
			
			\draw[thick,<->] (0,10.5) node[above]{$\$$}--(0,0)--(10.5,0) node[below right]{$q$};
			
			\draw(0,0)--(10,10) node[right]{$\mathcal{S}(q)$};

			\draw(0,10)--(10,0) node[above right]{$\mathcal{D}(q)$};
			
			\draw[dashed](5,5)--(5,0) node [below] {$\dfrac{1}{2}$};

			\node[] at (0,10) [left] {$1$};
			
			\draw(0,10)--(5,0) node[above right]{$\mathcal{MR}(q)$};
			
			\draw(0,0)--(5,10) node[left]{$\mathcal{MC}(q)$};
			
			\draw[dashed](2.5,5)--(2.5,0) node [below] {$q^*$};

		\end{tikzpicture}
		\begin{tikzpicture}[scale=0.35]
			
			\draw[thick,<->] (0,10.5) node[above]{$\$$}--(0,0)--(10.5,0) node[below right]{$q$};
			
			\draw(0,0)--(10,10) node[right]{$\mathcal{S}(q)$};

			\draw(0,10)--(10,0) node[above right]{$\mathcal{D}(q)$};
			
			\draw[dashed](5,5)--(5,0) node [below] {$\dfrac{1}{2}$};

			\node[] at (0,10) [left] {$1$};

			\draw[dashed](2.5,7.5)--(2.5,0) node [below] {$q^*$};
			
			\draw[dashed](2.5,7.5)--(0,7.5) node [left] {$p_b$};
			
			\draw[dashed](2.5,2.5)--(0,2.5) node [left] {$p_s$};
			
			\draw[fill=solarizedBlue!50] (0,2.5) -- (2.5,2.5) -- (2.5,7.5) --(0,7.5);	
			
			\draw[fill=solarizedGreen!50] (0,10) -- (2.5,7.5) --(0,7.5);
			
			\draw[fill=solarizedGreen!50] (0,0) -- (2.5,2.5) --(0,2.5);
			
		\end{tikzpicture}
		\caption{\footnotesize \textcolor{solarizedBlue}{\textbf{Profit of the Marketplace}}, \textcolor{solarizedGreen}{\textbf{Information Rents}}.}
		\label{monagoraprofit}
	\end{figure}
	
	Figure \ref{monagoraprofit} shows $\mathcal{D}$, $\mathcal{S}$, $\mathcal{MR}$, and $\mathcal{MC}$ together.\footnote{The figure is drawn for the uniform distribution over $[0,1]$ for simplicity.} As the marketplace is maximizing the profit, the optimal level of trade is given by $q^*$ such that $\mathcal{MR}$ and $\mathcal{MC}$ are equal to each other. Moreover, the area of the rectangle given by $q^*$ and the difference $p_b-p_s$ is equal to the profit of the marketplace. The triangles above $p_b$ and below $p_s$ are the buyers' and sellers' information rents. Finally, the triangle between $p_b$ and $p_s$ to the right of the profit is the deadweight loss created by (i) incomplete information and (ii) profit-maximization.

	\section{Decentralized Market}\label{secdec}

	Now, I introduce a decentralized market and give the agents a choice between joining the mechanism or the decentralized market. The decentralized market is modeled as a search market with frictions as in the Diamond-Mortensen-Pissarides Model \citep{diamond,mortensen,pissarides}.
	
	Here is how the decentralized market works. Agents who join there are randomly matched to each other. This process is governed by a matching function, adapted from the search theory. Once matched, each pair of agents engage in Nash Bargaining; that is, the agents observe each other's valuations and then split the surplus equally. 
	
	Timeline is as follows. A mechanism designer announces a mechanism through which agents can trade the good and invite some of the agents to join it. Upon observing the mechanism, an agent joins either the mechanism or the search market, or does not join either market.
	
	Agents who join the mechanism get what the announced mechanism promises. The mechanism can depend on the set of agents who join it. The simplest such mechanism would be one that operates if agents who join it are exactly those who were invited and shuts down otherwise. Results I obtain here would be valid for an equilibrium under this extreme mechanism. However, in that case it would also be an equilibrium for all agents to join the decentralized market. With proper handling of the off-path payoffs, I obtain an obedience principle: In the unique equilibrium, exactly the agents whom the designer wants to serve join the marketplace. I expand on this in the Section \ref{obedience}.
	
	In this market choice game, I look at the subgame perfect equilibrium with no bilateral deviations, meaning no pair of agents would prefer to change markets together.\footnote{As there is a continuum of agents, measure zero deviations do not affect any payoff. Thus, I only prevent the deviations that have an impact, i.e., those with a positive measure of deviators. However, as I will show while proving Proposition \ref{prop:unique}, whenever there is one pair who can profitable deviate, there is always a positive measure of profitable deviations.} When the decentralized market is not empty, this strengthening does not have a bite in that direction, since deviating agents still go through the random search process and it is a measure 0 event for them to meet with their co-deviator. As before, I focus on deterministic mechanisms. Moreover, for technical reasons, I assume that the designer invites a closed set of agents to join the marketplace.

	I first study these markets under the following assumption that restricts the set of mechanisms the designer can choose. Then, I consider the general case without this assumption. There, I show that this assumption is without loss. Starting the analysis under the assumption makes the environment more accessible; which is why I present the results in this order.

	\begin{assumption}\label{ass:simple}
		Suppose a convex set of types, $(\unt,\ovt)$ joins the search market in the equilibrium.
	\end{assumption}
	
	So, the designer chooses $\unt$ and $\ovt $ optimally, anticipating the agents' best responses to the announced mechanism. I call the equilibria that satisfy this assumption \textit{simple equilibria} and the mechanisms that induce these equilibria \textit{simple mechanisms}.
	
	First, I show that the simple equilibria are equivalent to posting bid-ask prices, i.e. prices for buying and selling. This structure makes sure that the mechanism has the agents with high virtual values and low virtual costs. Then, in Theorem \ref{unrestrictedthm}, I prove that this assumption is without loss of profit: This is the structure we would have in the equilibrium without assuming it.

	Although most of the paper focuses on a decentralized market with Nash bargaining, I later extend the analysis to a double auction. In this double auction, each agent makes a bid and the agent with the higher bid buys the other agent's endowment and pays the mid-point of their bids. I show that under some restrictions, with uniform distribution, all results obtained for the Nash Bargaining also applies to the environment with the double auction.

	\subsection{Matching Functions}
	
	In this section, I am going to introduce the matching technology in the search market. As in the search theory literature, a matching function determines the efficiency of the search process. Before defining this formally, I introduce some notation.
	
	Suppose $\Theta^d$ is the set of agents who join the search market in a strategy profile and $\mu$ is the measure with respect to the distribution $F$. Then, the measure of meetings in the search market will be given by a matching function $M(\mu(\Theta^d))$ as a function of the measure of agents in the search market, $\mu(\Theta^d)$.
	
	In search theory, matching functions are commonly assumed to have constant returns to scale (CRS). This means that doubling the size of the market also doubles the number of meetings. Since I focus on a market where every agent has the same endowment and demand, this is a one-sided market. In this setup, CRS matching functions are simply linear in the size of the market: $M(\mu(\Theta^d)=m \times \mu(\Theta^d)$ where $m\in [0,0.5]$ is the efficiency parameter of the matchings. Then, probability that an agent finds a match in the search market, $p$ is equal to $2m$ since the total measure of meetings is $M(\mu(\Theta^d)$, each agent is equally likely to be in any meeting, and there will be two agents in each meeting.
	
	Notice that the probability of a match, $p$ is independent of the set of agents who join the search market as well as the measure of the set. Since there is a one-to-one relationship between $p$ and $m$, from now, an agents probability of finding a match is simply denoted by $p$. For the same reason, in this setup, $p$ itself can be thought of as the primitive of the search market and the efficiency parameter of the matching process.

	\subsection{Payoffs from Search}\label{payoffsearch}
	
	With the matching process reduced to the probability of a match, $p$, we can think about agents' expected payoffs from search. Of course, this will depend on the set of agents who join the decentralized market, $\Theta^d$, since your payoff changes depending on who you meet.
	
	When an agent joins the decentralized market, he gets a match with probability $p$. With some probability, the match has a lower valuation. In this case, the agent buys the good and pays the average of their valuations, since they engage in Nash Bargaining. On the other hand, with some probability, the match will have a higher valuation. Then, the agents sells his endowment and gets paid the average of their valuations.
	
	Under the Assumption \ref{ass:simple}, we can learn more about the payoffs from search. Suppose the agents who join the decentralized market is $(\unt,\ovt)$. Then, when an agent with valuation $\theta\leq \unt$ considers deviating to the decentralized market, he knows that he will be a seller in all matches; all agents he can meet there have higher valuations. Conversely, an agent with valuation $\theta\geq \ovt$ knows that he would be a buyer in the decentralized market, since everyone there has lower valuations.
	
	In the Appendix \ref{app:payoffs}, I go into further details about the expected payoffs from the decentralized market. In Figure \ref{outsideutilities}, I illustrate the shape of the expected utilities from the decentralized market, when the valuations are uniformly distributed over $[0,1]$ and agents in $(0.1,0.9)$ join the decentralized market.

	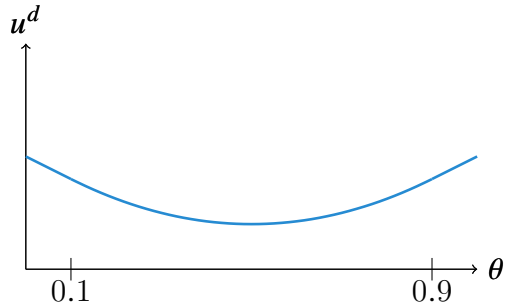
\begin{figure}[h]
		\centering 
			\begin{tikzpicture} [scale=6]
				\draw[->] (0, 0) -- (1, 0) node[right] {$\theta$};
				\draw[->] (0, 0) -- (0, 0.5) node[above] {$u^d$};

				\draw[->] (0, 0) -- (1, 0) node[right] {$\theta$};
				\draw[->] (0, 0) -- (0, 0.5) node[above] {$u^d$};
				
				\draw[domain=0:0.1, smooth, variable=\x, solarizedBlue, line width=1pt] plot ({\x}, {(1-2*\x)/4});
				\draw[domain=0.1:0.9, smooth, variable=\x, solarizedBlue, line width=1pt] plot ({\x}, {(\x^2-\x+0.41)/1.6});
				\draw[domain=0.9:1, smooth, variable=\x, solarizedBlue, line width=1pt] plot ({\x}, {(\x+\x-1)/4});

				\draw[] (0.1,0.25mm) -- (0.1,-0.25mm) node[left]{}; 
				\draw[] (0.9,0.25mm) -- (0.9,-0.25mm) node[left]{};	
				\node[] at (0.1,0) [below] {$0.1$};		
				\node[] at (0.9,0) [below] {$0.9$};
				
			\end{tikzpicture}
		\caption{Agents' utilities from their outside options as a function of their valuations, given $(0.1,0.9)$ join the decentralized market.}
		\label{outsideutilities}
	\end{figure}
	
\subsection{An Obedience Principle}\label{obedience}

	In this section, I show that the marketplace designer can implement the segmentation she prefers.

	Suppose the designer announces a direct mechanism and invites a set of agents $[0,\unt]\cap [\ovt,1]$. She commits to operating the announced mechanism as long as (i) the reported types in the mechanism is a subset of the invited agents, (ii) the reported types in the mechanism is union of two intervals $[0,a]\cap [b,1]$ where $a<\unt$ and $\ovt<b$ and finally (iii) $F(a)=1-F(b)$. Otherwise, the designer shuts down the marketplace. The following proposition shows that with this announcement, the designer can induce any segmentation of this form.
	
	\begin{proposition}\label{prop:obed}
		In a simple equilibrium, set of agents who join the marketplace are exactly those who were invited.
	\end{proposition}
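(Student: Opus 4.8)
The plan is to prove the two inclusions separately, with essentially all the work in showing that \emph{every} invited type actually reports. First I would extract the structural restrictions that the designer's commitment places on a simple equilibrium whose marketplace is active. If any type in $(\unt,\ovt)$ reported, then the reported set would cease to be a union of two extreme intervals, and by clause (ii) the marketplace would shut down; likewise, if a positive measure reported on one side without a balancing measure on the other, clause (iii) would be violated and the marketplace would shut down. Since a shut marketplace pays its participants their autarky payoff, neither situation can arise on the equilibrium path. Hence, invoking Assumption \ref{ass:simple}, the reporting set must be of the form $[0,a]\cup[b,1]$ with $a\le\unt$, $b\ge\ovt$, and $F(a)=1-F(b)$, while the complementary interval $(a,b)$ forms the search pool. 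It then remains only to show $a=\unt$ and $b=\ovt$.

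The key step is to upgrade the weak boundary incentive into an exact indifference by using the no-bilateral-deviation refinement. Under unilateral deviations alone the configuration $[0,a]\cup[b,1]$ with $a<\unt$ could survive spuriously: a lone searcher just above $a$ who tried to enter would destroy the interval structure and trigger a shut-down, so she could never profit from entering by herself. This is exactly the slack that bilateral deviations remove. Suppose type $a$ strictly preferred the marketplace, i.e. $p_s-a> u^d_s(a)$, where $p_s$ is the announced selling price and $u^d_s(\cdot)$ is the seller's search payoff from Section \ref{payoffsearch}. By continuity a sliver of sellers $(a,a']$ and a balancing sliver of buyers $(b',b)$ with $F(a')-F(a)=F(b)-F(b')$ would each strictly prefer the marketplace; moving these two slivers in together enlarges the reported set to the balanced interval pair $[0,a']\cup[b',1]$, which still satisfies (i)--(iii) and keeps the marketplace open, so it is a profitable joint deviation. (The footnote's remark that one profitable pair entails a positive measure of such pairs is what makes this deviation admissible.) Ruling it out, and symmetrically on the buyer side, forces the boundary indifferences $p_s-a=u^d_s(a)$ and $b-p_b=u^d_b(b)$ evaluated against the pool $(a,b)$.

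Finally I would show that this indifference system, together with $F(a)=1-F(b)$, admits $(\unt,\ovt)$ as its only solution. The announced prices are calibrated so that, against the pool $(\unt,\ovt)$, the thresholds $\unt$ and $\ovt$ are indifferent; it therefore suffices to prove that lowering $a$ below $\unt$ (with $b=F^{-1}(1-F(a))$ rising correspondingly) strictly breaks the seller indifference. Substituting the closed-form Nash-bargaining payoffs of Section \ref{payoffsearch} and differentiating, I would verify that the marketplace surplus $p_s-a$ grows in the threshold faster than the marginal searcher's payoff $u^d_s(a)$ does as the pool widens, so single-crossing pins the unique root at $a=\unt$; the buyer side is symmetric. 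The same entry-deviation logic disposes of the degenerate shut-down ``equilibrium,'' since a low type and a high type could jointly re-open the marketplace at the announced bid--ask prices and both strictly gain over searching. The main obstacle is precisely this monotonicity at the core of the uniqueness step --- checking from the explicit search payoffs that the marginal searcher's gain from a widening pool never overtakes the extra marketplace surplus --- because that is where the distribution $F$ and the match probability $p$ enter and where a sign error could admit a spurious under-participation equilibrium.
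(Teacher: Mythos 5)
Your first two steps track the paper's argument: you correctly reduce the problem to a candidate reporting set $[0,a]\cup[b,1]$ with $a\le\unt$, $b\ge\ovt$, $F(a)=1-F(b)$ (the shut-down commitment plus incentive compatibility rules out everything else), and you correctly argue that in any such continuation the realized boundary types $a$ and $b$ must be \emph{exactly} indifferent between the two venues. Your bilateral-deviation argument for the indifference (a balanced pair of slivers entering together keeps the reported set admissible, so strict preference at a boundary cannot survive the refinement) is a legitimate and in fact somewhat more careful rendering of the continuity argument the paper gives at this point.

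The gap is in your final step, and it is exactly the step you flag as the ``main obstacle.'' You propose to pin down $a=\unt$ by a single-crossing computation on the \emph{seller} indifference alone: show that $p_s-a$ grows faster in the threshold than $u^d_s(a)$ as the pool widens. This computation is not carried out, and it is not clear it can be, distribution-free: writing $b(a)=F^{-1}(1-F(a))$, the total derivative of $\mathbb{E}[\theta\mid\theta\in[a,b(a)]]$ with respect to $a$ equals $\frac{f(a)}{F(b)-F(a)}\bigl(2\mathbb{E}[\theta\mid\theta\in[a,b]]-a-b\bigr)$, which is not sign-definite and not bounded in magnitude in a way that delivers single crossing for general regular $F$. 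The paper avoids this entirely by using \emph{both} indifference conditions jointly: writing $u^d(a)=\tfrac{p}{2}\bigl(\mathbb{E}[\theta\mid\theta\in[a,b]]-a\bigr)=u^m(a)$ and $u^d(b)=\tfrac{p}{2}\bigl(b-\mathbb{E}[\theta\mid\theta\in[a,b]]\bigr)=u^m(b)$ (with $u^m$ obtained from payoff equivalence off the calibrated cutoffs $\unt,\ovt$), and simply \emph{adding} the two equations, the conditional-expectation terms cancel and one is left with $\tfrac{p}{2}\bigl[(b-a)-(\ovt-\unt)\bigr]=(b-a)-(\ovt-\unt)$, which forces $p=2$ whenever $(a,b)\neq(\unt,\ovt)$ --- a contradiction with no monotonicity or distributional input whatsoever. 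You should replace your proposed derivative computation with this two-line cancellation; as written, your proof is incomplete at precisely the point where a spurious under-participation equilibrium could hide.
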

	
	The proposition is proved in Appendix \ref{proof:obed}.
	
	The intuition is as follows: By committing to shut down the marketplace if there are unwanted type reports and designing an incentive compatible mechanism, the marketplace makes sure that there would not be any unwanted agents there. In other words, the unwanted agents are offered utilities that are less than their expected payoff from the decentralized market. Since the marketplace is incentive compatible, joining there and misreporting cannot be better than joining the decentralized market by transitivity.
	
	The more interesting part of the proof deals with showing that a strict subset of wanted agents joining the marketplace cannot be an equilibrium either. This is accomplished by showing that if there were such an equilibrium, then the realized cutoff types would have to be indifferent between two markets due to profit-maximization. The designer originally posts a mechanism that makes the cutoff types of the wanted agents indifferent between these two markets. Then, it is not possible that the cutoff types of this strict subset are also indifferent.

	\subsection{Designing the Mechanism}
	
	Under Assumption 1, the agents who join the mechanism will have valuations in $[0,\unt]$ and $[\ovt,1]$. We can write their utilities as follows, using payoff equivalence from the \href{https://berkidem.com/documents/coexistence_online_appendix.pdf}{Online Appendix}.
	
	\begin{equation*}
		u^m(\theta) =\begin{cases}
			u^m(\ovt ) +\displaystyle \int\limits_{\ovt }^{\theta}q(x)dx & \text{ if } \theta\geq \ovt\\
			u^m(0) + \displaystyle \int\limits_{0}^{\theta}q(x)dx & \text{ if } \theta\leq \unt.\\
		\end{cases}
	\end{equation*}

	In this environment, there will be different binding individual rationality constraints for agents below $\unt$ and for agents above $\ovt$. Thus, writing the utilities in this form is more convenient. Similarly, we can write the transfers:
	
	\begin{equation*}	
		t(\theta) =\begin{cases}
			\theta q(\theta) -u^m(\ovt) - \displaystyle \int\limits_{\ovt}^{\theta}q(x)dx & \text{ if } \theta\geq \ovt,\\
			\theta q(\theta) -u^m(0) - \displaystyle \int\limits_{0}^{\theta}q(x)dx. & \text{ if } \theta\leq \unt.\\
		\end{cases}
	\end{equation*}

	Now, we can study the profit from the optimal allocation given the cutoffs. The step-by-step derivation can be followed in the Appendix \ref{symprofit} but here is the end-result:
	
	\begin{align*}\hspace*{-0.5in}
		\Pi_{\unt,\ovt} & = \mathbb{P}[\theta\in[0,\unt]] \mathbb{E}[t(\theta)|\theta\in[0,\unt]] + \mathbb{P}[\theta\in[\ovt,1]] \mathbb{E}[t(\theta)|\theta\in[\ovt,1]] \\	
		&= \underbrace{ \vphantom{\int\limits_{\ovt }^1}  -F(\underline{\theta})u^m(\underline{\theta}) -(1-F(\overline{\theta}))u^m(\overline{\theta})}_{\text{Compensations for joining the mechanism}} + \underbrace{\int\limits_0^{\unt} \left[   \left( x + \dfrac{F(x)}{f(x)} \right) q(x) \right]f(x)dx + \int\limits_{\ovt }^1 \left[  \left(x- \frac{1-F(x)}{f(x)} \right) q(x) \right]f(x)dx.}_{\text{Profit if there were no search market}}
	\end{align*}
	
	Here, the agents will get some compensations for joining the centralized marketplace instead of the decentralized market. This is because the agents are giving up the opportunity to trade in the decentralized market when they join the centralized marketplace. Thus, the individual rationality constraints are endogenously determined by the equilibrium segmentation. The centralized marketplace has to pay the agents this lost `opportunity cost', on top of the standard information rents created by the informational asymmetry. The exact form of the compensation will become clear below when we consider the individual rationality constraints.

	I defined the virtual cost, $\mathcal{C}$ and the virtual value $\mathcal{V}$ as:
	
	\[ \mathcal{C}(x)= x + \dfrac{F(x)}{f(x)} \text{ and } \mathcal{V}(x)=x- \frac{1-F(x)}{f(x)}. \]
	
	I continue to assume both of them are increasing and call such distributions regular. 
	
	\paragraph{The Constraints}

	The support of the distribution of valuations is $[0,1]$. Thus, the first restriction is $0\leq \unt\leq \ovt \leq 1$.
	
	Second, we have a market-clearing constraint in the form of
	
	\[  \int_{0}^{\unt} q(x)f(x)dx +  \int_{\ovt }^1 q(x) f(x)dx\leq 0. \]
	
	This simply says that the designer cannot sell more than she buys.
	
	Third, since each agent has one unit of endowment, they cannot sell more than that. Thus, we need $-1 \leq q(\theta)$. In fact, since the good is indivisible, each agent must have $q(\theta)\in \{-1,0,1\}$ as their allocation.
	
	Fourth, we know that for incentive compatibility in the mechanism, we need the allocation to be increasing. The rest of the requirements of the incentive compatibility are already embedded in the transfers. So, as long as the allocation is increasing, the mechanism will be incentive compatible.
	
	Finally, we need to consider the implications of the individual rationality.
	
	\paragraph{Individual Rationality}
	
	Notice that in any profit maximizing mechanism, individual rationality (IR) constraint for at least one type of agent in each segment who joins the mechanism ($[0,\unt]$ and $[\ovt,1]$) must bind. If not, then uniformly increasing the payment of all agents in the particular segment without a binding IR until there is a binding constraint increases the profit.
	
	Moreover, the binding IR constraints in the optimal mechanism must be $\unt$ and $\ovt$. To see this, notice that under the Assumption 1, we want to construct an equilibrium such that agents in $[0,\unt]$ and $[\ovt,1]$ join the mechanism while the rest of the agents join the search market. In the equilibrium, each agent will choose the market that offer him a higher utility. Then, in the equilibrium, agents with valuations in $[0,\unt]$ and $[\ovt,1]$ must have a higher utility in the centralized marketplace and agents with valuations in $(\unt,\ovt)$ must have a higher expected utility in the decentralized market. Since the utilities from both markets are continuous in valuations, this means the utilities from the decentralized market, $u^d$, and the utilities from the centralized marketplaces, $u^m$, must cross each other at $\unt$ and $\ovt$. Thus, agents with valuations $\unt$ and $\ovt$ must be indifferent between the markets and these are the binding individual rationality constraints: 
	
	\begin{equation*}
	u^d(\unt)=u^m(\unt)= u^m(0) +\int_{0}^{\unt} q(x)dx \text{ and } u^m(\ovt)=u^d(\ovt).
	\end{equation*}
	
	This allows us the make the following observation, proved in the Appendix \ref{app:lemma:ir}.

	\begin{lemma}\label{simplealloc}
		In a simple equilibrium with cutoffs $\unt$, $\ovt$, allocations must be such that:
		
		\begin{align*}
			q(\theta)=\begin{cases}
				-1 \text{ if } \theta\leq \unt,\\
				1 \text{ if } \theta\geq \ovt.\\
			\end{cases}
		\end{align*}
	\end{lemma}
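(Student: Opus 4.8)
The plan is to pin down the allocation entirely from the incentive conditions at the two cutoffs, rather than re-optimizing the full profit functional. The two ingredients are payoff equivalence, which gives $(u^m)'(\theta)=q(\theta)$ on each interval where the mechanism operates, and the two binding individual-rationality conditions $u^m(\unt)=u^d(\unt)$ and $u^m(\ovt)=u^d(\ovt)$ established above, together with the equilibrium requirement that every invited type weakly prefers the mechanism, i.e. $u^m(\theta)\ge u^d(\theta)$ for $\theta\in[0,\unt]\cup[\ovt,1]$. The idea is a single-crossing (slope-comparison) argument: because $u^m-u^d$ is nonnegative on each invited interval and vanishes at the cutoff, the slope of $u^m$ must undercut that of $u^d$ at $\unt$ and overshoot it at $\ovt$; this forces $q$ to be strictly negative at $\unt$ and strictly positive at $\ovt$, and integrality then rounds these to $-1$ and $1$.

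First I would record the slope of the search payoff at the cutoffs. Under Assumption \ref{ass:simple} the search population is exactly $(\unt,\ovt)$, so (as detailed in Appendix \ref{app:payoffs}) a deviating type $\theta\le\unt$ meets, with probability $p$, a partner whose value $\theta'$ lies in $(\unt,\ovt)$ and hence strictly exceeds $\theta$; he is therefore \emph{always} the seller and collects $\tfrac{\theta+\theta'}{2}-\theta=\tfrac{\theta'-\theta}{2}$. Thus $u^d(\theta)=p\,\mathbb{E}\bigl[\tfrac{\theta'-\theta}{2}\mid\theta'\in(\unt,\ovt)\bigr]$, and since the partner's law does not depend on $\theta$, differentiating under the integral gives $(u^d)'(\theta)=-\tfrac{p}{2}$ on $[0,\unt]$. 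The mirror-image computation for a type $\theta\ge\ovt$, who is always the buyer, gives $(u^d)'(\theta)=+\tfrac{p}{2}$ on $[\ovt,1]$.

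Next I would run the crossing argument on the lower segment. Set $g(\theta)=u^m(\theta)-u^d(\theta)\ge 0$ on $[0,\unt]$ with $g(\unt)=0$. Using $(u^m)'=q$, the slope formula above, and $u^m(\unt)=u^d(\unt)$, one gets
\[
g(\theta)=\int_{\theta}^{\unt}\!\Bigl(-\tfrac{p}{2}-q(x)\Bigr)\,dx .
\]
Dividing by $\unt-\theta>0$ and letting $\theta\uparrow\unt$ shows that the left-limit satisfies $q(\unt^-)\le-\tfrac{p}{2}<0$; since $q$ is increasing and valued in $\{-1,0,1\}$, this forces $q\equiv-1$ on $[0,\unt]$. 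Symmetrically, on $[\ovt,1]$ one obtains $g(\theta)=\int_{\ovt}^{\theta}\bigl(q(x)-\tfrac{p}{2}\bigr)\,dx\ge 0$, whence $q(\ovt^+)\ge\tfrac{p}{2}>0$ and therefore $q\equiv 1$ on $[\ovt,1]$.

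The delicate step — and the one I expect to be the main obstacle — is justifying the \emph{sign} of $(u^d)'$ at the cutoffs. This rests on each cutoff type being a pure seller (resp. pure buyer) against the \emph{entire} search population, which is precisely what the convex-interval structure of Assumption \ref{ass:simple} delivers; without it a deviator could meet partners on either side and the slope sign would be ambiguous. Two small caveats complete the plan: the argument needs $p>0$, and when $p=0$ the search market is inactive so the allocation reduces to that of the baseline Theorem \ref{thm:monagora}; and it is the integrality of $q$ that upgrades ``strictly negative/positive at the cutoff'' to the exact values $-1$ and $1$, with monotonicity then propagating the cutoff value across the whole interval.
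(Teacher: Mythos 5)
Your proposal is correct and follows essentially the same route as the paper's proof in Appendix \ref{app:lemma:ir}: both combine the participation constraint $u^m\ge u^d$ on the invited intervals with the binding IR at the cutoffs, the envelope formula $(u^m)'=q$, and the slope $\mp\tfrac{p}{2}$ of $u^d$ outside $(\unt,\ovt)$ to conclude that the average of $q$ near $\unt$ is at most $-\tfrac{p}{2}$ (resp.\ at least $\tfrac{p}{2}$ near $\ovt$), after which integrality and monotonicity of $q$ finish the job. Your version merely makes the limiting step $\theta\uparrow\unt$ and the $p>0$ caveat explicit, which the paper leaves implicit.
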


	This is a very useful observation. On intervals where the allocation is constant, transfer is constant as well. Thus, with this lemma, we know that the marketplace will be equivalent to posting some bid-ask prices. From here, we also know that the agents with valuations below $\unt$ can only be sellers in any market and agents with valuations above $\ovt$ can only be buyers in any market.

\subsection{Existence of Simple Equilibrium}

Here, I show how the designer can find the optimal pair of $\unt$ and $\ovt$. The Main Result of this section shows that under the Assumptions\ref{ass:simple}, a simple equilibrium exists. In this equilibrium, the marketplace designer attracts the agents with very low and very high valuations. The rest of the agents are left to search. Moreover, cutoffs are such that the marginal cost of the highest seller in the marketplace is equal to the marginal revenue of the lowest buyer in the marketplace. Finally, the measures of buyers and sellers in the marketplace are equal to each other.

	\begin{theorem}\label{2cutofftheorem}
		Suppose $F$ is regular. Then, there exists $\unt,\ovt$ such that for any $p\in [0,1]$, in the simple equilibrium,
		\begin{itemize}
			\item agents in $[0,\unt]$ and $[\ovt ,1]$ join the mechanism,
			\item agents in $(\unt,\ovt)$ join the search market,
			\item $\mathcal{C}(\unt) = \mathcal{V}(\ovt)$ and $F(\unt)=1-F(\ovt)$.\vspace{0.2in}
			
		\end{itemize}
		
	\end{theorem}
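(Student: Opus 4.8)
The plan is to run the designer's relaxed program, using Lemma \ref{simplealloc} to reduce any simple equilibrium to a pair of bid-ask prices: $q(\theta)=-1$ on $[0,\unt]$, $q(\theta)=1$ on $[\ovt,1]$, and $q=0$ in between. With this allocation the profit reads
\[
\Pi_{\unt,\ovt} = -F(\unt)u^m(\unt)-(1-F(\ovt))u^m(\ovt) -\int_0^{\unt}\mathcal{C}(x)f(x)\,dx +\int_{\ovt}^1\mathcal{V}(x)f(x)\,dx,
\]
where the binding individual-rationality constraints pin the cutoff utilities to their search values, $u^m(\unt)=u^d(\unt)$ and $u^m(\ovt)=u^d(\ovt)$. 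First I would compute these search values from the bargaining structure: since the type-$\unt$ agent is a seller against every partner in $(\unt,\ovt)$ and the type-$\ovt$ agent a buyer, $u^d(\unt)=\tfrac{p}{2(F(\ovt)-F(\unt))}\int_{\unt}^{\ovt}(y-\unt)f(y)\,dy$ and $u^d(\ovt)=\tfrac{p}{2(F(\ovt)-F(\unt))}\int_{\unt}^{\ovt}(\ovt-y)f(y)\,dy$.

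Next I would argue that market clearing binds, $F(\unt)=1-F(\ovt)$: with strictly more sellers than buyers the marketplace pays for inventory it cannot resell, which is suboptimal. Writing $s=F(\unt)=1-F(\ovt)$, the two compensation coefficients coincide and the total compensation collapses neatly to
\[
F(\unt)u^d(\unt)+(1-F(\ovt))u^d(\ovt)=\frac{p}{2}\,s(\ovt-\unt),
\]
because $(y-\unt)+(\ovt-y)=\ovt-\unt$. I would then parametrize the market-clearing locus by $s\in[0,\tfrac12]$ via $\unt=F^{-1}(s)$, $\ovt=F^{-1}(1-s)$ and differentiate. Using $\tfrac{d\unt}{ds}=1/f(\unt)$ and $\tfrac{d\ovt}{ds}=-1/f(\ovt)$, the two baseline integrals contribute $\mathcal{V}(\ovt)-\mathcal{C}(\unt)$ to $\tfrac{d\Pi}{ds}$, and the key computation is that the compensation factor obeys the \emph{same} identity, $\tfrac{d}{ds}\big(s(\ovt-\unt)\big)=\mathcal{V}(\ovt)-\mathcal{C}(\unt)$, which follows from $\tfrac{F(\unt)}{f(\unt)}=\mathcal{C}(\unt)-\unt$ and $\tfrac{1-F(\ovt)}{f(\ovt)}=\ovt-\mathcal{V}(\ovt)$. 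Hence
\[
\frac{d\Pi}{ds}=\Big(1-\tfrac{p}{2}\Big)\big(\mathcal{V}(\ovt)-\mathcal{C}(\unt)\big),
\]
and since $1-\tfrac p2>0$ the first-order condition is $\mathcal{C}(\unt)=\mathcal{V}(\ovt)$ for every $p$ — the friction scales the profit's slope but leaves its zero, hence the cutoffs, untouched.

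For existence and uniqueness I would set $\Phi(s)=\mathcal{V}(F^{-1}(1-s))-\mathcal{C}(F^{-1}(s))$ and note $\Phi(0)=\mathcal{V}(1)-\mathcal{C}(0)=1>0$ while $\Phi(\tfrac12)=-1/f(F^{-1}(\tfrac12))<0$; regularity makes $\mathcal{C}(\unt)$ increasing and $\mathcal{V}(\ovt)$ decreasing in $s$, so $\Phi$ is strictly decreasing and the intermediate value theorem yields a unique $s^\ast\in(0,\tfrac12)$. The sign pattern of $\Phi$ then confirms $s^\ast$ is the global maximizer of $\Pi$, and the resulting $(\unt,\ovt)$ solve exactly the same system $\mathcal{C}(\unt)=\mathcal{V}(\ovt)$, $F(\unt)=1-F(\ovt)$ as in Theorem \ref{thm:monagora}, which is the asserted friction-independence of the thickness.

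Finally I would verify that the candidate is a genuine simple equilibrium and not merely a stationary point of the relaxed program: the bid-ask prices make $u^m$ piecewise linear with slopes $\mp 1$ outside $(\unt,\ovt)$, so together with single-crossing of $u^m$ and $u^d$ at the cutoffs the desired sorting obtains, and the obedience argument of Proposition \ref{prop:obed} rules out a strict subset of invited agents joining as well as profitable bilateral deviations (the search market being non-empty). I expect the main obstacle to be the cancellation in the second paragraph: establishing that the $p$-dependent compensation has derivative exactly proportional to the baseline profit's derivative — equivalently the identity $\tfrac{d}{ds}\,s(\ovt-\unt)=\mathcal{V}(\ovt)-\mathcal{C}(\unt)$ — and, upstream of it, justifying that market clearing binds so that this clean form of the compensation is available.
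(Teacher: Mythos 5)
Your proposal is correct and follows essentially the same route as the paper: both reduce the problem to the market-clearing locus where the aggregate compensation equals $\tfrac{p}{2}$ times the baseline virtual surplus, so the objective becomes $\left(1-\tfrac{p}{2}\right)$ times the single-market profit and the optimal cutoffs solve $\mathcal{C}(\unt)=\mathcal{V}(\ovt)$, $F(\unt)=1-F(\ovt)$ independently of $p$ (your differential identity $\tfrac{d}{ds}\,s(\ovt-\unt)=\mathcal{V}(\ovt)-\mathcal{C}(\unt)$ is just the derivative form of the paper's integration-by-parts computation, and your IVT argument on $\Phi$ replaces the paper's Weierstrass-plus-perturbation step, even giving uniqueness of the maximizer as a small bonus). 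The one place where you are thinner than the paper is exactly the step you flag as the main obstacle: that market clearing binds is established in the paper by computing $\partial\Pi/\partial\unt$ for the \emph{unconstrained} profit and checking that the compensation's dependence on $\mathbb{E}[\theta\mid\unt\le\theta\le\ovt]$ does not overturn the direct virtual-cost effect, which your ``unsold inventory'' intuition alone does not cover.
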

	The proof is in the Appendix \ref{app:simpleexistence}.

	To illustrate the utilities agents are offered in the mechanism and expect from the search market, suppose everyone gets a meeting in the decentralized market ($p=1$) and agents are uniformly distributed over the unit interval. The optimal cutoffs for the  are $\unt=\frac{1}{4}$ and $\ovt=\frac{3}{4}$. Figure \ref{simp_utilities} shows the utilities agents can expect from either market \textit{in the equilibrium}.

	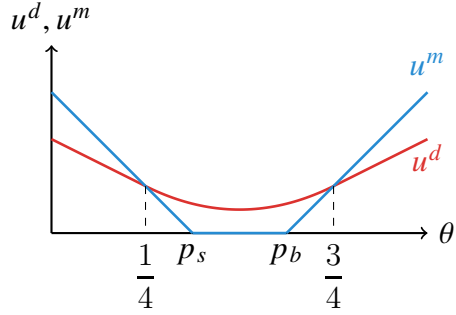
\begin{figure}[h!]
		\centering
		\begin{tikzpicture} [baseline=0, scale=5]
			\draw[thick, ->] (0, 0) -- (1, 0) node[right] {$\theta$};
			\draw[thick, ->] (0, 0) -- (0, 0.5) node[above] {$u^d, u^m$};		
			
			\draw[domain=0:0.25, smooth, variable=\x, solarizedRed, line width=1pt] plot ({\x}, {(1-2 *\x)/4});
			\draw[domain=0.25:0.75, smooth, variable=\x, solarizedRed, line width=1pt] plot ({\x}, {(\x^2-\x + 5/16)});
			\draw[domain=0.75:1, smooth, variable=\x, solarizedRed, line width=1pt] plot ({\x}, {(2*\x-1)/4});
			
			\node[solarizedRed] at (1, 0.2) [] {$u^d$};

			\draw[domain=0:0.25, smooth, variable=\x, solarizedBlue, line width=1pt] plot ({\x}, {0.375-\x});
			\draw[domain=0.25:0.375, smooth, variable=\x, solarizedBlue, line width=1pt] plot ({\x}, {0.375-\x});
			\draw[domain=0.375:0.625, smooth, variable=\x, solarizedBlue, line width=1pt] plot ({\x}, {0});
			\draw[domain=0.625:0.75, smooth, variable=\x, solarizedBlue, line width=1pt] plot ({\x}, {\x-0.625});
			\draw[domain=0.75:1, smooth, variable=\x, solarizedBlue, line width=1pt] plot ({\x}, {\x-0.625});			
			
			\node[solarizedBlue] at (1, 0.45) [] {$u^m$};						
			
			\draw[dashed] (0.25, 0.125) -- (0.25, 0);
			\draw[dashed] (0.75, 0.125) -- (0.75, 0);

			\node[] at (0.25,0) [below] {$\dfrac{1}{4}$};		
			\node[] at (0.75,0) [below] {$\dfrac{3}{4}$};			
			
			\node[] at (0.375,0) [below] {$p_s$};		
			\node[] at (0.625,0) [below] {$p_b$};
		
		\end{tikzpicture}
		\caption{The utilities from the search market and the optimal mechanism under the simple equilibrium.}
		\label{simp_utilities}
	\end{figure}

As it can be observed from the Figure \ref{simp_utilities}, agents with intermediate types receive a lower utility in this equilibrium in the centralized marketplace. So they are happy to join the decentralized market. Moreover, the agents with low and high values are offered higher utilities in the centralized marketplace than they expect from the decentralized market. Thus, no agent has any unilateral profitable deviation.

The preceding theorem describes the structure of the simple equilibrium. However, it does not completely describe the mechanism that induces this equilibrium. Lemma \ref{simplealloc} pins down the allocations and transfers for the agents who join the mechanism. What remains to be determined is what should be offered to agents in $[\unt,\ovt]$.

In fact there are many mechanisms that would induce the same equilibrium that only differ in the off-path payoffs. Proposition \ref{simplemech} in the Appendix \ref{simplemechproof} describes one such mechanism. As described before, it is equivalent to offering some bid-ask prices, i.e., prices for buying and selling. Essentially, I compute the transfers for the agents who join the mechanism and then extends the allocation and the transfer rules to the rest of the agents in a way that the allocations are increasing and we have $u^m(\theta)<u^d(\theta)$ for each $\theta\in(\unt,\ovt)$. The prices for buying and selling, $p_b$ and $p_s$, implied by this proposition are the lowest and the highest valuations such that $u^m$ is equal to 0, as can be seen in the Figure \ref{simp_utilities}.

\subsection{Profit of the Marketplace}
	
Next, I establish the relationship between the profit in this equilibrium and the profit of the marketplace when there is no decentralized market. As the decentralized trade makes it costly to recruit agents to the centralized marketplace, it is not hard to guess that the decentralized trade decreases the profit of the centralized marketplace. Interestingly, the decrease in the profit only depends on the efficiency of the matching process in the decentralized market; it is independent of the distribution of the valuations.

Let $\Pi^M$ denote the profit of the marketplace when there is no decentralized market.

\begin{theorem}\label{thm:profit}
	Suppose $F$ is regular. Then, the profit of the marketplace in coexistence is equal to $\left( 1-\dfrac{p}{2} \right) \Pi^M$.
	
	Moreover, the aggregate compensations agents receive in coexistence is equal to $\dfrac{p}{2} \Pi^M$.

\end{theorem}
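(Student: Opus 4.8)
The plan is to exploit the fact, established in Theorem \ref{2cutofftheorem}, that the coexistence cutoffs $\unt,\ovt$ satisfy exactly the same two equations $\mathcal{C}(\unt)=\mathcal{V}(\ovt)$ and $F(\unt)=1-F(\ovt)$ that characterize the baseline cutoffs in Theorem \ref{thm:monagora}; by regularity these two equations pin down a unique pair, so the two sets of cutoffs coincide. I would begin from the profit decomposition derived in Appendix \ref{symprofit},
\[ \Pi_{\unt,\ovt} = -F(\unt)u^m(\unt)-(1-F(\ovt))u^m(\ovt) + \int_0^{\unt}\mathcal{C}(x)q(x)f(x)\,dx + \int_{\ovt}^1\mathcal{V}(x)q(x)f(x)\,dx, \]
and substitute the allocation $q=-1$ on $[0,\unt]$ and $q=1$ on $[\ovt,1]$ from Lemma \ref{simplealloc}. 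A short integration by parts (using $\mathcal{C}(x)f(x)=xf(x)+F(x)$ and $\mathcal{V}(x)f(x)=xf(x)-(1-F(x))$) collapses the two integrals to $-\unt F(\unt)+\ovt(1-F(\ovt))$, which is precisely the baseline objective; since the cutoffs coincide, this equals $\Pi^M$. Hence the whole problem reduces to showing that the remaining \emph{compensations} term, $F(\unt)u^m(\unt)+(1-F(\ovt))u^m(\ovt)$, equals $\frac{p}{2}\Pi^M$.

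Next I would replace the mechanism utilities of the cutoff types by their search values. Because the cutoff types are indifferent between the two markets in equilibrium, the binding individual rationality conditions give $u^m(\unt)=u^d(\unt)$ and $u^m(\ovt)=u^d(\ovt)$, so the aggregate compensation is $F(\unt)u^d(\unt)+(1-F(\ovt))u^d(\ovt)$. I would then compute these two search payoffs explicitly from Nash bargaining. The type $\unt$ is the lowest participant in the decentralized market, so against every partner he is the seller: meeting a partner of type $\theta'\in(\unt,\ovt)$ he sells for $(\unt+\theta')/2$ and nets surplus $(\theta'-\unt)/2$. With match probability $p$ and partner type distributed as $F$ restricted to $(\unt,\ovt)$, this gives $u^d(\unt)=\frac{p}{2(F(\ovt)-F(\unt))}\int_{\unt}^{\ovt}(\theta'-\unt)f(\theta')\,d\theta'$; symmetrically, $\ovt$ is always a buyer and $u^d(\ovt)=\frac{p}{2(F(\ovt)-F(\unt))}\int_{\unt}^{\ovt}(\ovt-\theta')f(\theta')\,d\theta'$.

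The crux is then a cancellation. Adding the two payoffs, the $\theta'$-dependence disappears because $(\theta'-\unt)+(\ovt-\theta')=\ovt-\unt$ is constant, so $u^d(\unt)+u^d(\ovt)=\frac{p}{2}(\ovt-\unt)$, independent of $F$. Writing $\alpha:=F(\unt)=1-F(\ovt)$ and pulling out this common weight, the aggregate compensation becomes $\alpha\cdot\frac{p}{2}(\ovt-\unt)$. Finally I would note that the baseline profit itself is $\Pi^M=-\unt\alpha+\ovt\alpha=\alpha(\ovt-\unt)$, so the aggregate compensation equals $\frac{p}{2}\Pi^M$ and the coexistence profit equals $\Pi^M-\frac{p}{2}\Pi^M=\left(1-\frac{p}{2}\right)\Pi^M$.

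The two genuinely load-bearing facts are (i) that the coexistence and baseline cutoffs coincide, which is where the distribution-independence ultimately originates and which I import from Theorem \ref{2cutofftheorem}, and (ii) the symmetry that makes the seller's and buyer's search surpluses telescope into the constant gap $\ovt-\unt$. I expect the latter step---verifying that the weighted compensation collapses to $\frac{p}{2}\Pi^M$ for an arbitrary regular $F$ rather than merely the uniform case---to be where care is needed; everything else is the integration-by-parts identity and sign bookkeeping. A secondary point to be careful about is that the match probability $p$ does not depend on the mass $F(\ovt)-F(\unt)$ of searchers (as noted in the matching-function subsection), so that this normalizing mass cancels cleanly against the denominators in $u^d(\unt)$ and $u^d(\ovt)$.
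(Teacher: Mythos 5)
Your proposal is correct and follows essentially the same route as the paper: the paper's proof (Appendix \ref{genprofit}) likewise splits the profit into the virtual surplus $-\unt F(\unt)+\ovt(1-F(\ovt))$ minus the compensations $F(\unt)u^d(\unt)+(1-F(\ovt))u^d(\ovt)$, and observes that once the feasibility constraint $F(\unt)=1-F(\ovt)$ binds the conditional-expectation terms cancel, leaving the compensations equal to $\frac{p}{2}$ times the virtual surplus and hence $\Pi=\left(1-\frac{p}{2}\right)\Pi^M$ with the same cutoffs as the baseline. Your telescoping step $(\theta'-\unt)+(\ovt-\theta')=\ovt-\unt$ is exactly the cancellation the paper exploits, just presented more transparently.
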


It is surprising that the ratio of the profits is completely independent of the distribution of valuations. The ratio only depends on the probability that an agents finds a match in the search market, $p$. The intuition for this result is as follows. When there is no decentralized trade, the marketplace keeps all the surplus that remains after `paying' the information rents. When agents can join the decentralized market, they find a match with probability $p$, and they get half of the surplus created by their meetings. I show that the expected surplus of a buyer and a seller is equal to the profit of the marketplace from each sale. Thus, in total, marketplace pays $\frac{p}{2} \Pi^M$ as compensation and the rest is kept as the profit of the marketplace.\footnote{The result directly follows from the derivation of the coexistence profit in Appendix \ref{genprofit}.}

Since the ratio $1-\frac{p}{2}$ is decreasing in $p$, the profit is also decreasing in $p$: As the matching process becomes more efficient, the decentralized market becomes more attractive. Then, the opportunity cost of each agent increases, meaning the compensations increase. Thus, the profit of the marketplace decreases.

Even if every agent finds a match in the decentralized market, meaning $p=1$, the profit of the marketplace is half of the profit when it operates on its own. This provides a distribution-free lower bound for the profit as a function of $\Pi^M$. Here is why the marketplace can still make positive profit when $p=1$. Due to the random matching process, even when each agent in the decentralized market gets a match with certainty, the matches may have small gains from trade. An agent might meet someone whose valuation is very close to her own. However, the marketplace solves this problem and makes profit by creating efficient matches. At the other extreme, $p=0$, when there is no match in the decentralized market, agents cannot trade bilaterally. Then, the marketplace makes the profit $\Pi^M$. Thus, the baseline with the marketplace alone is obtained as a special case.

\begin{remark}
	The independence of compensation:profit ratio holds for off-equilibrium as well: Given any (potentially sub-optimal) cutoffs $\unt$ and $\ovt$ for segmentation, the total compensations is equal to $\frac{p}{2}$ times the virtual surplus generated by this segmentation.
\end{remark}

\subsection{Thickness of the Marketplace}

In this section, I compare the agents who trade on the centralized marketplace in the coexistence to those who get to trade in the marketplace when it operates on its own.

In the previous section, we have seen that the decrease in the profit of the marketplace is given by a ratio which is independent of the distribution. This means that the objective function of the marketplace, with or without the decentralized trade, is the same, up to a multiplication with a constant. Then, the solution -in terms of the allocation- is the same. The centralized marketplace wants to serve exactly the same types for each value of $p$ including $p=0$. Of course, for each value of $p$, the transfers are adjusted so that the cutoff types, $\unt$ and $\ovt$, are always made indifferent between trading in the marketplace and the decentralized market.

\begin{corollary}\label{cor:thick}
	The agents who trade on the marketplace are the same with or without the decentralized market. Thus, the thickness of the marketplace is unaffected by the decentralized trade.
\end{corollary}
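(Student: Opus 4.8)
The plan is to show that the designer's objective in coexistence is, for every admissible pair of cutoffs, a fixed positive multiple of the baseline objective, so that the two problems share the same maximizer and hence the same set of trading agents. Write $\Pi^C_{\unt,\ovt}$ for the coexistence profit induced by the segmentation $(\unt,\ovt)$ and $\Pi^M_{\unt,\ovt}$ for the virtual surplus generated by that same segmentation (the ``profit if there were no search market'' term in the profit decomposition preceding this section). From that decomposition, $\Pi^C_{\unt,\ovt}$ equals the virtual surplus $\Pi^M_{\unt,\ovt}$ minus the aggregate compensation $F(\unt)u^m(\unt)+(1-F(\ovt))u^m(\ovt)$ paid for the agents' foregone search option.

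First I would invoke the Remark following Theorem \ref{thm:profit}, which asserts that for \emph{any} cutoffs the aggregate compensation equals $\tfrac{p}{2}$ times the virtual surplus of that segmentation. Substituting this into the decomposition yields the pointwise identity
\[
\Pi^C_{\unt,\ovt}=\Bigl(1-\tfrac{p}{2}\Bigr)\,\Pi^M_{\unt,\ovt},
\]
valid at every feasible $(\unt,\ovt)$, not merely at the optimum. It is this off-equilibrium version of the proportionality --- rather than the on-path statement of Theorem \ref{thm:profit} --- that does the real work, since to conclude that the maximizers coincide I need the objectives proportional as functions of the cutoffs.

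Next I would observe that the constant $1-\tfrac{p}{2}$ is strictly positive for every $p\in[0,1]$ (indeed at least $\tfrac{1}{2}$) and does not depend on $\unt$ or $\ovt$. I would also check that the two problems carry the same constraints: the baseline problem of Theorem \ref{thm:monagora} maximizes over $\{0\le\unt\le\ovt\le1,\ F(\unt)=1-F(\ovt)\}$, and the binding market-clearing condition in coexistence --- with the selling/buying allocation pinned down by Lemma \ref{simplealloc} --- reduces to exactly $F(\unt)=1-F(\ovt)$. Since multiplying an objective by a positive constant leaves the argmax over a fixed feasible set unchanged, the optimal cutoffs in coexistence are precisely the baseline cutoffs, which by regularity (Definition \ref{defn:regular}) are the unique solution of $\mathcal{C}(\unt)=\mathcal{V}(\ovt)$ together with $F(\unt)=1-F(\ovt)$, consistent with Theorem \ref{2cutofftheorem}.

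Finally, Lemma \ref{simplealloc} forces $q(\theta)=-1$ on $[0,\unt]$ and $q(\theta)=1$ on $[\ovt,1]$ in both regimes, so the set of agents who transact --- sellers below $\unt$ and buyers above $\ovt$ --- is identical with and without the decentralized market, which is the claimed invariance of thickness. I expect the only genuine obstacle to be the step that upgrades Theorem \ref{thm:profit} to its pointwise form: one must make sure the Remark's compensation-to-surplus ratio really holds for sub-optimal segmentations and that the feasible regions of the two optimization problems coincide, after which the argument is immediate, because scaling an objective by a positive constant preserves maximizers.
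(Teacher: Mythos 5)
Your proposal is correct and follows essentially the same route as the paper: the discussion surrounding Corollary \ref{cor:thick} argues precisely that the aggregate compensations are a constant fraction $\tfrac{p}{2}$ of the virtual surplus of the segmentation, so the coexistence objective is $\bigl(1-\tfrac{p}{2}\bigr)$ times the baseline objective over the same feasible set, the argmaxes coincide, and Lemma \ref{simplealloc} then pins down who trades. The one point you rightly flag as needing verification --- that the pointwise proportionality requires the market-clearing constraint to bind, since otherwise the compensation carries the extra term $\tfrac{p}{2}\bigl(F(\unt)-(1-F(\ovt))\bigr)\mathbb{E}[\theta\mid\unt\leq\theta\leq\ovt]$ --- is exactly what Appendix \ref{genprofit} supplies by showing $\Pi$ is decreasing in $\unt$, so that $F(\unt)=1-F(\ovt)$ holds at the optimum and the identity becomes exact.
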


Although we can understand the corollary in light of the previous section, it is a counter-intuitive result on its own. When the agents have the option to trade in the decentralized market, they need to be compensated to join the centralized market. Thus, decentralized trade makes each trade in the marketplace costlier. Since the marketplace maximizes profit, it would be reasonable to expect the marketplace to become more exclusive to account for the higher costs. However, this is not optimal. Since the total compensations agents receive is a constant fraction of their virtual surplus, the optimal strategy is to maximize the virtual surplus. But $\Pi^M$ is precisely the maximum attainable virtual surplus. Thus, marketplace keeps the same agents, no matter what happens in the decentralized market.

\subsection{The Simple Economics of Optimal Marketplaces (Continued)}

I started the analysis with the centralized marketplace on its own. There, I provided the parallel between the mechanism design problem in that section and the profit-maximization problem of a monopolist. Later, I obtained the case of `single market' as a special case of `coexistence' with $p=0$. Here I will extend the analogy with the monopolist's problem to the coexistence case as well.

With $p=0$ (the case of no decentralized trade), the marketplace can ensure participation as long as the agents expect nonnegative payoffs from trade. Thus, the highest seller and the lowest buyer are offered 0 utility. However, when $p>0$, the marketplace has to make sure that each agent it wants to attract receive a utility greater than what he expects from the decentralized market. This is accomplished by paying the agents the opportunity cost of joining the marketplace as compensations. Thus, with coexistence, the effective marginal revenue and cost curves will be different.
Let $\mathcal{MR}_2$ and $\mathcal{MC}_2$ denote the new curves.

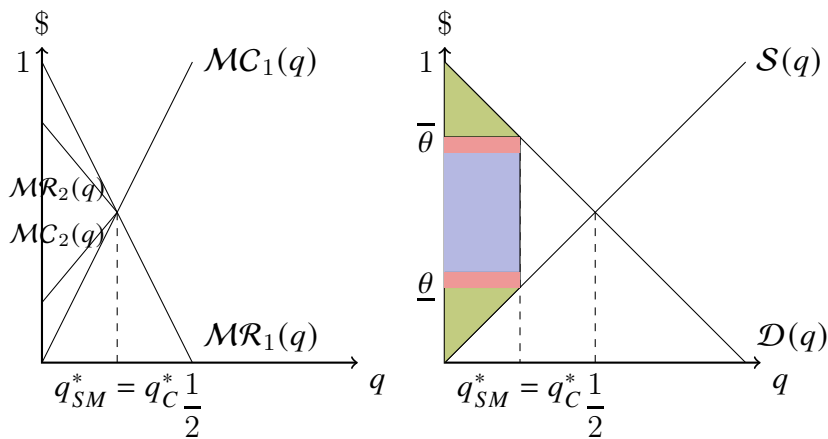
\begin{figure}[H]
	\begin{tikzpicture}[scale=0.4]
		
		\draw[thick,<->] (0,10.5) node[above]{$\$$}--(0,0)--(10.5,0) node[below right]{$q$};

		\node[] at (0,10) [left] {$1$};
		
		\draw(0,10)--(5,0) node[above right]{$\mathcal{MR}_1(q)$};
		
		\draw(0,0)--(5,10) node[right]{$\mathcal{MC}_1(q)$};
		
		\draw(0,8)--(2.5,5) node[above left]{\footnotesize $\mathcal{MR}_2(q)$};
		
		\draw(0,2)--(2.5,5) node[below left]{\footnotesize $\mathcal{MC}_2(q)$};

		\draw[dashed](2.5,5)--(2.5,0) node [below] {$q^*_{SM}=q^*_{C}$};
		
		\node[] at (5,0) [below] {$\dfrac{1}{2}$};			
		
	\end{tikzpicture}
	\centering
	\begin{tikzpicture}[scale=0.4]
		
		\draw[thick,<->] (0,10.5) node[above]{$\$$}--(0,0)--(10.5,0) node[below right]{$q$};
		
		\draw(0,0)--(10,10) node[right]{$\mathcal{S}(q)$};
		
		\draw(0,10)--(10,0) node[above right]{$\mathcal{D}(q)$};
		
		\draw[dashed](5,5)--(5,0) node [below] {$\dfrac{1}{2}$};
		
		\node[] at (0,10) [left] {$1$};

		\draw[dashed](2.5,7.5)--(2.5,0) node [below] {$q^*_{SM}=q^*_{C}$};
		
		\draw[dashed](2.5,7.5)--(0,7.5) node [left] {$\ovt$};
		
		\draw[dashed](2.5,2.5)--(0,2.5) node [left] {$\unt$};
		
		\draw[fill=solarizedViolet!50] (0,2.5) -- (2.5,2.5) -- (2.5,7.5) --(0,7.5);	
		
		\draw[fill=solarizedGreen!50] (0,10) -- (2.5,7.5) --(0,7.5);
		
		\draw[fill=solarizedGreen!50] (0,0) -- (2.5,2.5) --(0,2.5);
		
		\draw [name path=X] (0,7.5) to  (2.5,7.5) node[above left]{};
		
		\draw [name path=A] (0,7) to (2.5,7) node[above left]{};
		
		\tikzfillbetween[of=X and A]{solarizedRed!50};
		
		\draw [name path=Y] (0,2.5) to  (2.5,2.5) node[above left]{};
		
		\draw [name path=B] (0,3) to (2.5,3) node[above left]{};
		
		\tikzfillbetween[of=Y and B]{solarizedRed!50};
		
	\end{tikzpicture}
	\caption{\footnotesize \textcolor{solarizedViolet}{\textbf{Coexistence Profit of the Marketplace}}, \textcolor{solarizedGreen}{\textbf{Information Rents}}, ``\textcolor{solarizedRed}{\textbf{Compensations}}''. $q^*_{SM}$ is the optimal quantity when $p=0$ (single market) and $q^*_{C}$ is the optimal quantity when $p>0$ (coexistence).}
	\label{fig:coexistencesimpleecon}
\end{figure}

In Figure \ref{fig:coexistencesimpleecon} (left panel), we have the plots of the single market marginal revenue and cost of the marketplace, $\mathcal{MR}_1$ and $\mathcal{MC}_1$ together with the plots of the coexistence marginal revenue and cost of the marketplace, $\mathcal{MR}_2$ and $\mathcal{MC}_2$. This picture illustrates many of the results stated above.

$\mathcal{MR}_2$ and $\mathcal{MC}_2$ intersect at the same quantity level as $\mathcal{MR}_1$ and $\mathcal{MC}_1$ does. Thus, the optimal quantity under no decentralized trade, $q^*_{SM}$ is equal to the optimal quantity under coexistence for any $p>0$: $q^*_{C}$. Thus, the thickness of the centralized marketplace is unaffected by $p\in [0,1]$ as stated earlier.

As $q^*_C$ is below $\frac{1}{2}$ (that corresponds to the median of the distribution), some agents are excluded from the marketplace. But since these agents can always join the decentralized marketplace and get positive payoffs, it means there will be coexistence.

From Figure \ref{fig:coexistencesimpleecon} (right panel), compensation for each buyer (seller) is the same. Moreover, with a little more notation on the figures, it is possible to pictorially represent Theorem \ref{thm:profit} as well. \footnote{The more detailed picture is discussed in the Appendix \ref{app:simpecon}.}

\subsection{Welfare Comparisons}

In this section, I first compare the consumers' welfare in a Pareto sense for different levels of $p$. Next, I compare the welfare generated in the coexistence to (i) welfare generated by the centralized market alone and (ii) welfare generated by the decentralized trade alone. For this analysis the welfare is measured as gains from trade. Thus, it includes the profit of the marketplace, when it exists. Omitted proofs are in the Appendix \ref{app:efficiency}.

\subsubsection{Consumer Welfare}

From Corollary \ref{cor:thick}, we know that the agents who trade in the decentralized market are the same. Then, conditional on meeting someone, agents' expected payoffs from the decentralized market are the same for each value of $p$. Thus, as $p$ increases, agents' expected payoffs from the decentralized market increase. This directly leads to higher utilities for the intermediate types who join the decentralized market. It indirectly increases the payoff of each agent in the marketplace as well. When $p$ increases, the cutoff types have a higher expected payoff from the decentralized market. Since the cutoff types remain the same, to make them indifferent under a higher $p$, centralized marketplace has to increase the compensations for each agent. Thus, a more efficient decentralized market increases the payoffs for all agents, regardless of their market choice.

\begin{corollary}
	Equilibrium payoff of each type is increasing in $p$.
\end{corollary}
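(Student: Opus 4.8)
The plan is to establish monotonicity in $p$ for each of the two classes of agents separately---those who join the search market and those who join the mechanism---and then combine them. The key structural facts I would lean on are Corollary \ref{cor:thick} (the cutoffs $\unt$ and $\ovt$, and hence the set of agents in each market, do not vary with $p$) and the indifference/payoff-equivalence conditions derived in the ``Individual Rationality'' discussion. These together decouple the problem: since the identity of the searchers is fixed, the only channel through which $p$ affects anyone's payoff is the probability of matching.

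First I would treat the intermediate types $\theta\in(\unt,\ovt)$ who search. By Corollary \ref{cor:thick} the composition of the search market is independent of $p$, so conditional on obtaining a match the distribution of a searcher's bargaining partner---and hence his conditional expected surplus---is the same for every $p$. The payoff from search is then $u^d(\theta)=p\cdot(\text{conditional expected surplus})$, which is a nonnegative quantity scaled by $p$; since the conditional surplus is nonnegative (a matched agent can always trade at the Nash-bargaining split and never does worse than not trading), $u^d(\theta)$ is weakly increasing in $p$. This step is essentially immediate once the $p$-independence of the matching pool is invoked.

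Next I would handle the mechanism participants $\theta\in[0,\unt]\cup[\ovt,1]$. The relevant observation is that the cutoff types are held indifferent: $u^m(\unt)=u^d(\unt)$ and $u^m(\ovt)=u^d(\ovt)$. Since $u^d(\unt)$ and $u^d(\ovt)$ rise with $p$ by the previous step, the binding individual-rationality levels $u^m(\unt)$ and $u^m(\ovt)$ rise as well. By payoff equivalence, for $\theta\le\unt$ we have $u^m(\theta)=u^m(\unt)-\int_\theta^{\unt}q(x)\,dx$, and for $\theta\ge\ovt$ we have $u^m(\theta)=u^m(\ovt)+\int_{\ovt}^{\theta}q(x)\,dx$. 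Because Corollary \ref{cor:thick} tells us the allocation $q$ (hence these integrals) is the same for all $p$, every participant's payoff is the fixed-point payoff of an unchanging type plus a $p$-independent integral, shifted by the common increase in $u^m(\unt)$ or $u^m(\ovt)$. Hence each such $u^m(\theta)$ inherits the monotonicity of the corresponding cutoff utility and is weakly increasing in $p$.

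The main obstacle---and the step deserving the most care---is rigorously justifying that the searchers' \emph{conditional} expected surplus is genuinely $p$-independent, rather than merely the unconditional payoff scaling linearly. This requires that not only the set of searchers but the entire match-formation distribution (who meets whom, and the resulting bargaining outcome) be invariant to $p$; this is exactly what the CRS/one-sided matching reduction to a single probability $p$ guarantees, since $p$ affects only whether a meeting occurs and not its composition. I would state this invariance explicitly, cite the matching-function reduction and Corollary \ref{cor:thick}, and then the two monotonicity steps above combine to give that the equilibrium payoff of every type is weakly increasing in $p$, completing the proof.
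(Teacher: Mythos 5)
Your proof is correct and follows essentially the same route as the paper: it invokes Corollary \ref{cor:thick} to fix the composition of the search pool, deduces that searchers' payoffs scale with $p$ since the conditional surplus is $p$-independent, and then propagates the increase to mechanism participants through the binding indifference conditions at $\unt$ and $\ovt$ together with payoff equivalence and the $p$-invariant allocation. The only cosmetic difference is that you conclude weak monotonicity, whereas the strictly positive conditional surplus of every searcher in $(\unt,\ovt)$ delivers the strict increase the corollary asserts.
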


This provides a nice policy recommendation. Make the decentralized trade more efficient. It will make every agent strictly better off.

\subsubsection{Total Welfare}

Comparison between the coexistence and the centralized market alone is simple. Once again, by Corollary \ref{cor:thick}, the same agents trade on the marketplace, whether there is a decentralized market or not. Thus, the gains from trade generated on the marketplace is constant. However, in the coexistence, the agents with intermediate types also generate some trade. Thus, coexistence generates more gains under any regular distribution.

\begin{corollary}
	Coexistence leads to a higher welfare than the centralized market alone.
\end{corollary}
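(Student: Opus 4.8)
The plan is to decompose total welfare, measured as gains from trade, into the surplus created on the centralized marketplace and the surplus created in the decentralized market, and then to compare the two regimes component by component. Throughout I use the fact that transfers --- payments among agents and to the marketplace, the latter counted as profit --- are internal redistributions that net out in the aggregate, so gains from trade depend only on the realized final allocation.

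First I would invoke Corollary \ref{cor:thick}: the set of agents who trade on the marketplace --- sellers with valuations in $[0,\unt]$ and buyers with valuations in $[\ovt,1]$ --- together with their allocations is identical whether or not the decentralized market is present. Hence the surplus generated on the marketplace is exactly the same in both regimes, and the entire welfare comparison reduces to the contribution of the remaining agents, those with valuations in $(\unt,\ovt)$. In the single-market regime these agents do not trade at all and contribute zero; in coexistence they enter the decentralized market, so it suffices to show that this channel produces strictly positive expected gains from trade whenever $p>0$.

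The key step --- and the only substantive point --- is this strict positivity. Three facts combine to give it. First, the interval $(\unt,\ovt)$ is nonempty: from Theorem \ref{thm:monagora} the cutoffs satisfy $\mathcal{C}(\unt)=\mathcal{V}(\ovt)$, and since $\mathcal{V}(\theta)\le\theta\le\mathcal{C}(\theta)$ for all $\theta$, we get $\unt\le\mathcal{C}(\unt)=\mathcal{V}(\ovt)\le\ovt$, with equality throughout forcing the contradiction $\unt=0=\ovt=1$; hence $\unt<\ovt$. Second, a positive measure of matches occurs because $p>0$. Third, because $F$ is atomless, two matched agents almost surely have distinct valuations, so under Nash bargaining they trade efficiently, the good moves to the higher-valuation partner, and a strictly positive surplus is realized in each such match.

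Integrating these per-match surpluses over the matched agents in $(\unt,\ovt)$ yields a strictly positive decentralized contribution. Adding it to the common marketplace surplus gives $W_{\text{coexistence}} = W_{\text{single}} + (\text{strictly positive term}) > W_{\text{single}}$, which is the claim. (For $p=0$ the two regimes coincide, consistent with the single market being the $p=0$ special case, so the strict inequality is exactly the statement for an active decentralized market.) I do not anticipate any serious obstacle beyond making the atomlessness argument in the third fact precise; the decomposition and the appeal to Corollary \ref{cor:thick} do the rest of the work.
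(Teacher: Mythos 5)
Your proposal is correct and takes essentially the same route as the paper: the paper likewise invokes Corollary \ref{cor:thick} to conclude that the marketplace surplus is identical in both regimes and then notes that the intermediate types in $(\unt,\ovt)$ generate additional gains from trade in the decentralized market. Your extra detail---verifying that $(\unt,\ovt)$ is nonempty via $\mathcal{C}(\unt)=\mathcal{V}(\ovt)$ and that atomlessness of $F$ gives strictly positive surplus in almost every match---merely makes explicit what the paper leaves implicit.
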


For ease of exposition, I first focus on a concrete distribution for valuations to compare the welfare from the coexistence to the welfare from the decentralized trade alone. Later, I show that the result holds much more generally.

\begin{proposition}\label{efficiencypropuni}
	Suppose agents valuations are drawn from the uniform distribution over $[0,1]$. Then, the total welfare under the coexistence is greater than when either market operates on its own.
\end{proposition}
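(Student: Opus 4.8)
The plan is to specialize the equilibrium characterization to the uniform distribution and then compute the gains from trade generated in coexistence and in the pure search market explicitly, comparing the two. For the uniform distribution on $[0,1]$ we have $\mathcal{V}(\theta)=2\theta-1$ and $\mathcal{C}(\theta)=2\theta$, so the conditions $\mathcal{C}(\unt)=\mathcal{V}(\ovt)$ and $F(\unt)=1-F(\ovt)$ from Theorem \ref{2cutofftheorem} give $\unt=\tfrac14$ and $\ovt=\tfrac34$. The comparison with the centralized market alone is then immediate: by Corollary \ref{cor:thick} the same agents trade on the marketplace in both regimes, so the marketplace contributes identical gains from trade, and coexistence merely adds the nonnegative gains generated by the searchers. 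Hence only the comparison with the decentralized market alone requires real work.

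For that comparison I would compute two quantities. First, the coexistence welfare decomposes into a marketplace component and a search component. In the marketplace, measure $\tfrac14$ of the lowest types (those in $[0,\unt]$) sell to measure $\tfrac14$ of the highest types (those in $[\ovt,1]$), so the gains from trade equal $\int_{3/4}^{1}\theta\,d\theta-\int_{0}^{1/4}\theta\,d\theta=\tfrac{7}{32}-\tfrac{1}{32}=\tfrac{3}{16}$. In the search component, the measure of searchers is $\ovt-\unt=\tfrac12$; with match probability $p$ the measure of realized meetings is $p/4$, and since the searchers' valuations are uniform on an interval of length $\tfrac12$, the expected surplus per meeting is $\mathbb{E}|\theta_1-\theta_2|=\tfrac16$. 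The coexistence welfare is therefore $\tfrac{3}{16}+\tfrac{p}{24}$.

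Second, when the decentralized market operates alone every agent searches: the measure of meetings is $p/2$ and, with valuations uniform on $[0,1]$, the expected surplus per meeting is $\tfrac13$, giving welfare $p/6$. Subtracting, the coexistence advantage is
\[
\Big(\tfrac{3}{16}+\tfrac{p}{24}\Big)-\tfrac{p}{6}=\tfrac{3}{16}-\tfrac{p}{8}\ \ge\ \tfrac{3}{16}-\tfrac18=\tfrac{1}{16}>0
\]
for every $p\in[0,1]$, which establishes the claim. The economic content is that, although pure search generates a larger per-meeting surplus (a wider value spread) spread over a larger population, the marketplace secures the single most valuable class of trades---those between the extreme types---with certainty and full efficiency, and this more than compensates for the smaller, thinner search market that remains.

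The individual steps are routine integrations; the only points requiring care are (i) verifying that, because the uniform distribution restricted to $(\unt,\ovt)$ is again uniform, the conditional distribution of a searcher's valuation is uniform on $(\tfrac14,\tfrac34)$, so the formula $\mathbb{E}|\theta_1-\theta_2|=L/3$ on a length-$L$ interval applies, and (ii) correctly converting the per-agent match probability $p$ into the measure of meetings (the factor $\tfrac12$ arising because each meeting involves two agents, exactly as in the derivation $p=2m$). This bookkeeping---keeping the meeting counts and the per-meeting surpluses mutually consistent across the two different population sizes---is the one place an error could slip in; everything else follows by direct computation.
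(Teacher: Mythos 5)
Your proposal is correct and follows essentially the same route as the paper: reduce the centralized-alone comparison to Corollary \ref{cor:thick}, compute the pure-search welfare $p/6$ and the marketplace gains $\tfrac{3}{16}$ at the cutoffs $\tfrac14,\tfrac34$, and compare. The only difference is that you also evaluate the coexistence search component exactly as $p/24$ (which indeed matches the paper's general formula specialized to the uniform case), whereas the paper simply drops that nonnegative term and uses $\tfrac{3}{16}>\tfrac16\ge p/6$; your version yields a slightly sharper margin but is the same argument.
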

	
Intuition behind this result is that the search market extends the extensive margin of trade by allowing more agents to trade while the centralized marketplace extends the intensive margin -by solving the search and matching frictions. Thus, when they operate together, everyone gets to trade with some probability and some agents trade for certain.
	
Although the proposition is stated for the uniform distribution, the result is true for many other distribution. Next, I provide a sufficient condition for distributions under which the coexistence is more efficient. This condition is satisfied by most commonly used distributions.

\begin{assumption}\label{expineq}
	Under $F$, the following inequality holds:
	\[ 2  \mathbb{E}[\theta F(\theta) | \theta\in [\unt,\ovt]]  + \mathbb{E}[\theta|\theta > \ovt] \geq   \mathbb{E}[\theta|\theta\in [\unt,\ovt]] + \mathbb{E}[\theta F(\theta) | \theta\leq \unt]   + \mathbb{E}[\theta F(\theta) | \theta\geq \ovt], \]
	
	where $\unt$ and $\ovt$ are the optimal cutoffs for a simple equilibrium, given the distribution $F$.
\end{assumption}

Standard Normal Distribution, Logistic Distribution, Exponential Distribution, Standard Beta Distribution are among the distributions that has this property\footnote{Mathematica codes for computations that show these distributions satisfy this assumption are available upon request.}.

\begin{proposition}\label{efficiencyprop}
	If $F$ satisfies Assumption \ref{expineq}, then the total welfare under the simple equilibrium is greater than when either market operates on its own.
\end{proposition}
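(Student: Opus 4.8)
The plan is to reduce the whole statement to a single comparison. The comparison with the centralized market alone is already settled by the preceding corollary, so the only thing left is to show that the coexistence welfare is at least the welfare generated when the decentralized market operates alone. Measuring welfare as gains from trade, I would split the coexistence welfare as $W^{C}=W^{m}+W^{d}$. Here $W^{m}=\int_{\ovt}^{1}\theta f(\theta)\,d\theta-\int_{0}^{\unt}\theta f(\theta)\,d\theta$ is the deterministic surplus created on the marketplace, where each seller in $[0,\unt]$ relinquishes a unit worth $\theta$ to a buyer in $[\ovt,1]$ and the constraint $F(\unt)=1-F(\ovt)$ balances the two sides; and $W^{d}=\frac{p}{2}\cdot\frac{I}{G}$ is the expected search surplus of the intermediate types, where $G=F(\ovt)-F(\unt)$ is their mass and $I=\int_{\unt}^{\ovt}\int_{\unt}^{\ovt}\lvert\theta-\theta'\rvert f(\theta)f(\theta')\,d\theta\,d\theta'$, since each meeting produces gains from trade equal to the valuation gap (split equally by Nash bargaining) and each intermediate agent meets a partner drawn from the conditional distribution on $[\unt,\ovt]$ with probability $p$. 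The decentralized market alone gives $W^{D}=\frac{p}{2}J$ with $J=\int_{0}^{1}\int_{0}^{1}\lvert\theta-\theta'\rvert f(\theta)f(\theta')\,d\theta\,d\theta'$.

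The structural observation that makes the problem tractable is that, by Corollary \ref{cor:thick}, the cutoffs $\unt,\ovt$ -- and hence $W^{m}$, $G$, $I$, and $J$ -- are all independent of $p$. Therefore $\Delta(p):=W^{C}-W^{D}=W^{m}+\frac{p}{2}\bigl(\frac{I}{G}-J\bigr)$ is affine in $p$, so its minimum over $p\in[0,1]$ is attained at an endpoint. At $p=0$ we have $\Delta(0)=W^{m}\ge 0$ (because the high-value buyers outweigh the low-value sellers of equal mass, the marketplace surplus is nonnegative), so it suffices to verify $\Delta(1)\ge 0$, equivalently $2W^{m}+\frac{I}{G}\ge J$.

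To evaluate this I would first collapse the $\lvert\theta-\theta'\rvert$ double integral to a single integral: expanding over $\theta'<\theta$ and integrating by parts gives the identity $I=4\int_{\unt}^{\ovt}\theta F(\theta)f(\theta)\,d\theta-2\int_{\unt}^{\ovt}\theta f(\theta)\,d\theta$, where the symmetry $F(\unt)+F(\ovt)=1$ is exactly what cancels the boundary terms, and the same computation on $[0,1]$ yields $J=4\int_{0}^{1}\theta F(\theta)f(\theta)\,d\theta-2\int_{0}^{1}\theta f(\theta)\,d\theta$. Substituting these, splitting every integral over $[0,1]$ into its pieces on $[0,\unt]$, $(\unt,\ovt)$, $[\ovt,1]$, writing each piece as a conditional expectation times its mass, and using $1-G=2F(\unt)$ (again the measure-balancing constraint), all coefficients reduce to the common factor $4F(\unt)$, leaving
\[ 2W^{m}+\frac{I}{G}-J=4F(\unt)\Bigl(2\,\mathbb{E}[\theta F(\theta)\mid\theta\in[\unt,\ovt]]+\mathbb{E}[\theta\mid\theta>\ovt]-\mathbb{E}[\theta\mid\theta\in[\unt,\ovt]]-\mathbb{E}[\theta F(\theta)\mid\theta\le\unt]-\mathbb{E}[\theta F(\theta)\mid\theta\ge\ovt]\Bigr). \]
Since $F(\unt)>0$, this is nonnegative precisely when Assumption \ref{expineq} holds, which closes the argument.

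The main obstacle is this final bookkeeping step. The three conditional expectations carry three different normalizations ($G$ for the middle block and $F(\unt)=1-F(\ovt)$ for the two tails), and it is only because the constraint $F(\unt)=1-F(\ovt)$ forces $1-G=2F(\unt)$ that these normalizations reconcile into the single factor $4F(\unt)$, with the residual being exactly the expression in Assumption \ref{expineq}. I would therefore carry the symmetry constraint explicitly through the computation rather than saving it for the end. By contrast, the affine-in-$p$ reduction and the closed form for $I$ are routine, and the uniform case of Proposition \ref{efficiencypropuni} is just the instance where the bracketed difference can be checked by direct integration.
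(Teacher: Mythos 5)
Your proposal is correct and takes essentially the same route as the paper: both compute the two welfares as gains from trade, exploit linearity in $p$ together with $W^{m}\ge 0$ to reduce the comparison to $p=1$, and then rearrange the $p=1$ inequality into the form of Assumption \ref{expineq}. If anything, your final bookkeeping step --- collapsing $2W^{m}+I/G-J$ into $4F(\unt)$ times the difference appearing in Assumption \ref{expineq} --- is carried out more explicitly than in the paper, which stops at a rearranged inequality and leaves that last identification implicit.
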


\subsection{Simple Equilibrium is Without Loss}

In this section, I remove the restriction to the `simple' market structure that required an interval of agents to join the search market. Here, I allow the designer to invite any closed set of agents. As I show in Theorem \ref{unrestrictedthm}, the restriction to the simple equilibrium is without loss of generality. Thus, Assumption \ref{ass:simple} was without loss of generality. Therefore, everything we have studied so far holds for all equilibria where both markets are active.

\begin{theorem}\label{unrestrictedthm}
	If both markets are active in an equilibrium, then the sets of agents in the search market must be an interval.
\end{theorem}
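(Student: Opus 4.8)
The plan is to work entirely with the indirect utility functions $u^m(\theta)$ (the payoff a type $\theta$ obtains by joining the marketplace and reporting optimally) and $u^d(\theta)$ (the expected payoff from search), and to show that the search set $\Theta^d=\{\theta: u^d(\theta)\ge u^m(\theta)\}$ can fail to be an interval only through one specific configuration, which I then exclude using profit maximization.

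First I would pin down the two functions. Writing $u^m(\theta)=\max_{\hat\theta}\{\theta q(\hat\theta)-t(\hat\theta)\}$, it is an upper envelope of affine functions whose slopes $q(\hat\theta)$ lie in $\{-1,0,1\}$ by indivisibility; hence $u^m$ is convex and $1$-Lipschitz on $[0,1]$, and since a convex function has nondecreasing slope there are cutoffs $0\le\alpha\le\beta\le1$ with slope $-1$ on $[0,\alpha)$, slope $0$ on $(\alpha,\beta)$ (so that $u^m\equiv c$ is constant there), and slope $+1$ on $(\beta,1]$. For the search side, a meeting between types $\theta$ and $\theta'$ gives each the Nash-bargaining surplus $|\theta-\theta'|/2$, and a match arrives with probability $p$ independent of $\Theta^d$, so $u^d(\theta)=\frac p2\,\mathbb E_{\theta'}\!\left[\,|\theta-\theta'|\,\right]$ with $\theta'$ drawn from $F$ conditioned on $\Theta^d$. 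Thus $u^d$ is convex, nonnegative, and Lipschitz with constant $\frac p2\le\frac12<1$.

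Next I would combine the slopes. Let $g=u^m-u^d$, so the search set is $\{g\le 0\}$. On $[0,\alpha]$ one has $g'=-1-u^{d\prime}\le -1+\frac p2<0$, so $g$ is strictly decreasing and $\{g\le 0\}\cap[0,\alpha]$ is an upper interval; symmetrically $g'=1-u^{d\prime}\ge 1-\frac p2>0$ on $[\beta,1]$, so $\{g\le 0\}\cap[\beta,1]$ is a lower interval. On $(\alpha,\beta)$ we have $u^m\equiv c$, so $\{g\le 0\}=\{u^d\ge c\}$, which is the complement of an interval because $u^d$ is convex. Assembling these three pieces, $\Theta^d$ is automatically an interval \emph{unless} there is a positive-measure sub-interval $(s_1,s_2)\subset(\alpha,\beta)$ on which $u^d<c$, i.e.\ a \emph{no-trade island} of types that strictly prefer the marketplace yet receive allocation $q=0$. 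On such an island $t=-c$ and $c>u^d\ge 0$, so these agents are paid a strictly positive subsidy merely to join and not trade. The entire theorem therefore reduces to ruling out this island.

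The last and hardest step is to show that a no-trade island is incompatible with the designer's profit maximization, and this is where the real work lies. The island contributes nothing to the realized gains from trade, hence nothing to virtual surplus, yet it forces strictly positive aggregate payments. I would make this precise by extending the accounting behind Theorem~\ref{thm:profit} and its Remark: for a segmentation \emph{without} an island, total compensation equals $\frac p2$ times the generated virtual surplus, so profit equals $\left(1-\frac p2\right)$ times virtual surplus; an island appends a strictly positive payment while leaving virtual surplus unchanged, so profit drops strictly below $\left(1-\frac p2\right)$ times virtual surplus and hence strictly below the profit of the island-free simple mechanism of Theorem~\ref{2cutofftheorem}. An islanded mechanism is then never a best response for the designer and cannot appear in equilibrium, forcing $\Theta^d$ to be an interval. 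The delicate point is that the comparison is self-referential: deleting the island enlarges the search pool and thereby alters the outside options $u^d(\unt)$ and $u^d(\ovt)$ of the genuine cutoff types, so the argument cannot be made by tracking individual compensations pointwise and must instead be run at the level of the aggregate virtual-surplus identity, where the island's contribution is transparently zero surplus against a positive payment. Strictness of the loss uses $u^d>0$ on the interior island, which holds because the search market is active.
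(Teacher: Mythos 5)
Your proposal follows essentially the same route as the paper's proof in Appendix~\ref{unrestricted}: use the piecewise-linear, convex shape of $u^m$ (slopes $-1,0,1$) together with the bound $|u^{d\prime}|\le p/2$ (Lemma~\ref{slopelemma}) to reduce any non-interval search set to a single ``no-trade island'' inside the flat part of $u^m$, and then kill the island because it generates zero virtual surplus while requiring a strictly positive compensation. Your organization via convexity of $u^d$ is slightly cleaner than the paper's case analysis of tangencies and crossings, and your identification of the self-referentiality problem (deleting the island changes the outside options of the true cutoffs) is exactly the right worry.

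Two points in your last step are asserted rather than proved, and they are where the paper does real work. First, your island cannot actually be a region where $u^d<c$ strictly: on any interval disjoint from $\Theta^d$ the conditional distribution of the match is unchanged, so $u^d$ is \emph{affine} there, and an affine function equal to $c$ at both endpoints cannot dip below $c$ in between. The only surviving configuration is an indifference island with $u^d\equiv c$ on $[a,b]$, which forces the mass-balance condition $F(a)-F(\unt)=F(\ovt)-F(b)$; this condition is what makes the island's compensation term computable and strictly positive, and your argument still goes through since $c=u^d(a)>0$ when the search pool has positive measure, but the characterization needs to be corrected. Second, your identity ``cutoff compensation $=\frac{p}{2}\times$ virtual surplus'' only holds when the aggregate feasibility constraint $F(\unt)=1-F(\ovt)$ binds; otherwise there is an extra term involving $\mathbb{E}[\theta\mid\theta\in[\unt,\ovt]]$ (see Appendix~\ref{genprofit}). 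The paper establishes that feasibility binds in the islanded problem via a monotonicity/KKT argument before the clean profit comparison is available, and your sketch needs the analogous step. With those two repairs your plan matches the paper's proof; without them the decisive comparison is a claim, not a derivation.
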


This result is extremely helpful in reducing `the dimensionality' of the problem. Without the structure provided by this result, searching for an equilibrium would require considering the consequences of every possible segmentation of a continuum of agents across markets. However, when we know that the agents in the search market will be an interval, the search is essentially reduced to finding a pair of cutoffs.

The proof is covered in the Appendix \ref{unrestricted}. The intuition is as follows. The utilities from the decentralized market would be U-shaped, under any segmentation. Moreover, the utilities from the mechanism would be `bucket-shaped'; consisting of a decreasing line segment, a constant line segment, and an increasing line segment, in this order. Under any segmentation, each type would choose the market that offers higher utilities. Using this condition, I show that there are two possible segmentations that can happen with these utility shapes: Either one interval joins the decentralized market or the union of two invervals that satisfy some conditions join there. However, excluding the union of two intervals is not profit-maximizing: This would require having some agents with intermediate types in the marketplace who have small gains from trade. Thus, the unique equilibrium structure is excluding an interval of intermediate types.

\subsection{Uniqueness of the Equilibrium}

So far, we have seen that the simple equilibrium is the unique coexistence equilibrium. This leaves the question of whether there can be equilibria where only one of the markets is active.

It is relatively easy to argue that there cannot be any equilibrium where all agents join the centralized marketplace. This is because the marketplace will exclude some agents from trade even if they are in the marketplace. But then, these agents would have a profitable bilateral deviation to the decentralized market where they would get a positive payoff, rather than getting 0 utility in the marketplace. Thus, the only other possibility is an equilibrium where all agents join the decentralized market. In the following proposition, I also show that, all agents joining the decentralized market is not an equilibrium either. The proof is omitted; it can be found in Appendix \ref{app:unique}.

\begin{proposition}\label{prop:unique}\hfill
	\begin{enumerate}
		\item There is no equilibrium where all agents join the centralized marketplace.
		
		\item There is no equilibrium where all agents join the decentralized market.
		
		\item Thus, the simple equilibrium is the unique equilibrium.
	\end{enumerate}
\end{proposition}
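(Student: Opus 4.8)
The plan is to dispatch the three claims in turn, leaning on the coexistence results already in hand. Theorem \ref{unrestrictedthm} shows that any equilibrium in which both markets are active must have the search set equal to an interval, and Theorem \ref{2cutofftheorem} together with regularity (which makes $\mathcal{C}$ and $\mathcal{V}$ strictly monotone, so that $\mathcal{C}(\unt)=\mathcal{V}(\ovt)$ and $F(\unt)=1-F(\ovt)$ pin down a single pair of cutoffs) identifies this interval uniquely. Hence the only candidate coexistence equilibrium is the simple one, and the whole task reduces to ruling out the two one-sided configurations in parts (1) and (2); part (3) is then immediate.

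For part (1) I would argue by contradiction from a candidate equilibrium in which every agent is in the centralized marketplace. Given full participation the decentralized market is empty, so each agent's outside option is $0$ and the designer would optimally use the single-market mechanism of Theorem \ref{thm:monagora}. Under regularity one has $\unt < \mathcal{C}(\unt) = \mathcal{V}(\ovt) < \ovt$, so a nonempty interval of intermediate types is excluded and receives utility exactly $0$. I would then let a positive-measure subinterval of these excluded types jointly deviate to the decentralized market: once a positive measure is present, each deviator meets a partner with probability $p>0$, and since the deviating set contains types of differing valuations, with positive probability a match has strictly positive gains from trade, which Nash bargaining splits. The resulting expected payoff is strictly positive, beating the $0$ payoff in the mechanism, so this is a profitable bilateral deviation and the configuration is not an equilibrium. (A mechanism in which essentially everyone trades is not profit-maximizing, by Theorem \ref{thm:monagora}, so it cannot arise on the equilibrium path either.)

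For part (2) the cleanest route is a designer deviation. If every agent is in the decentralized market, no transfers are collected and the designer earns zero profit. But by announcing the simple mechanism of Theorem \ref{2cutofftheorem}, the obedience principle (Proposition \ref{prop:obed}) makes the induced continuation the coexistence segmentation, in which Theorem \ref{thm:profit} gives the designer profit $\left(1-\frac{p}{2}\right)\Pi^M > 0$. Since $0 < \left(1-\frac{p}{2}\right)\Pi^M$, full decentralization is not a best response for the designer, so it cannot be an equilibrium. Equivalently, one may argue on the agents' side: faced with the coexistence mechanism the invited extreme types strictly prefer it to search, and because a deviation \emph{to} the deterministic mechanism is effective even for a single agent, this unilateral deviation already breaks the configuration.

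Finally, part (3) follows by assembling: parts (1) and (2) eliminate every one-sided configuration, and the coexistence analysis above shows the simple equilibrium is the unique equilibrium with both markets active, so it is the unique equilibrium outright. The step I expect to be most delicate is the continuum bookkeeping in part (1): a single deviating pair is a measure-zero event that changes no payoff, and the two deviators meet with probability zero, so the deviation has bite only when carried out by a positive measure of agents. Establishing that a profitable deviation can always be taken at positive measure—so that the match probability genuinely rises to $p$ and the deviators' expected surplus is strictly positive—is the crux, and is exactly the point flagged in the footnote accompanying the equilibrium definition.
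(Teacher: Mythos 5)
Your parts (1) and (3) track the paper's proof closely: for (1) the paper likewise notes that with an empty search market every outside option is $0$, the designer runs the baseline mechanism of Theorem \ref{thm:monagora}, the excluded intermediate types earn $0$, and pairs of them (indeed a continuum of such pairs, which is exactly the positive-measure point you flag as the crux and which the paper's footnote on the equilibrium notion anticipates) profitably deviate to search since almost every match has strictly positive surplus to split. Part (3) is assembled the same way in both arguments.

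Part (2) is where you diverge, and your primary route has a structural problem. The designer moves \emph{first}, so ``all agents join the decentralized market'' is a continuation profile in the subgame after the mechanism is announced; the designer cannot ``deviate'' in response to it, and if agents' strategies prescribed searching after \emph{every} announcement, no choice of mechanism would earn positive profit. What must be shown is that this continuation play is not an equilibrium of the subgame, i.e.\ the argument has to live on the agents' side. The paper does exactly this: it computes, for all agents in the search market, $u^d(0)=\frac{p}{2}\mathbb{E}[\theta]$ and $u^d(1)=\frac{p}{2}(1-\mathbb{E}[\theta])$, compares them with what the coexistence mechanism promises the extreme types, and shows $u^m(0)+u^m(1)>u^d(0)+u^d(1)$ reduces to $1>\frac{p}{2}$, whence by continuity a positive measure of low types \emph{paired with} a positive measure of high types profitably deviates together. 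Your fallback agent-side remark gestures at this, but your claim that a \emph{unilateral} deviation to the mechanism ``is effective even for a single agent'' is doubly off: single-agent deviations are measure zero (which is why the equilibrium notion is stated in terms of bilateral, positive-measure deviations), and the obedience mechanism of Proposition \ref{prop:obed} shuts down unless the reporting set is a balanced union $[0,a]\cup[b,1]$ with $F(a)=1-F(b)$, so a one-sided influx of only low (or only high) types would yield payoff $0$, not the promised utility. The deviation must be taken jointly by buyers and sellers in balanced measure, which is precisely why the paper works with the sum $u^m(0)+u^m(1)$.
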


The intuition for the result is as follows. In the equilibrium, there cannot be a pair of agents who would rather deviate to another market bilaterally. I show that even if all agents join the decentralized market, the payoffs the centralized market offers to the lowest and the highest types are greater than what they expect in the decentralized market. Thus, they would deviate to the centralized marketplace. Therefore, all agents joining the decentralized market cannot be an equilibrium. On the other hand, as I described above, it is not an equilibrium for all agents to join the centralized marketplace either. Hence, it is not an equilibrium for either one of the markets to be inactive. Finally, since the simple equilibrium exists and it is the unique equilibrium with coexistence, it is indeed the unique equilibrium.

\subsection{Simple Equilibrium with Double Auction}

So far, I assumed that when agents are matched to each other in the decentralized market, they engage in Nash bargaining. This provides a nice benchmark since Nash bargaining is efficient: Even with an efficient trading protocol, the marketplace can attract the agents it targets and make positive profit. However, in Nash bargaining, the assumption is that the agents observe each others' valuations upon meeting. This creates an information asymmetry between markets since the designer cannot observe agents' valuations. Thus, we might be concerned that the results I have presented may not be robust to other models of decentralized trade. In this section, I consider an alternative bargaining protocol and I show that under some conditions, all of the main results hold true more generally.

Suppose the agents who join the decentralized market are randomly matched to each other and then they participate in a double auction. The random matching process is unchanged from the main model. The double auction works as follows. Each agent submits a bid, $b_i$. Agent with the higher bid buys the other's endowment and pays $\frac{b_i+b_j}{2}$, when bids are given by $b_i$ and $b_j$; so they are trading a unit of the good at the midpoint of the bids.

This is a special case of the ``simple trading rule'' studied by \cite{gck}. Assuming the valuations are drawn from the same, smooth distribution for each agent, they show that this game has a symmetric equilibrium where the bids are increasing in agents' valuations. Thus, the equilibrium is ex-post efficient in the sense that, the agent who values the good more ends up with the whole quantity. Moreover, \cite{kitt} has shown that the symmetric equilibrium characterized by \cite{gck} is indeed the unique equilibrium. 

Suppose the distribution of agents who participates in this double auction is given by some CDF, $G$. Suppose $G$ is strictly increasing on its support, $[\unt, \ovt]$, and is differentiable. Then, agents' bids in the unique equilibrium are given by:

\begin{equation*}
	b(\theta)=\theta - \dfrac{\int\limits_{G^{-1}(\frac{1}{2})}^{\theta} [G(x)-\dfrac{1}{2}]^2 dx }{[G(\theta)-\dfrac{1}{2}]^2}
\end{equation*}

It follows from the Proposition 5 of \cite{gck} that following this bidding strategy constitutes an equilibrium. Theorem 1 in \cite{kitt} further shows that there is no other equilibrium.

Here, I focus on simple mechanisms that exclude an interval of agents: Suppose agents in $(\unt,\ovt)$ join the decentralized market and the rest join the centralized marketplace. Then, the endogenous distribution of agents in the decentralized market is simply $F$ truncated from both below and above. Thus, $G(\theta)=\dfrac{F(x)-F(\unt)}{F(\ovt)-F(\unt)}$ on its support $(\unt,\ovt)$.\footnote{The equilibrium of this double auction game when $G$ has gaps in its support is not known.}

In the Appendix \ref{app:doubleauctionmain}, I develop an analysis parallel to that on Nash bargaining for the double auction. I summarize my results in this section. Suppose the agents' valuations are drawn from the uniform distribution over $[0,1]$. Then, essentially every result I obtained under the Nash bargaining holds true for the decentralized market with double auction as well:

\begin{theorem}\label{2cutoffdoubleauction}
	There exists $\unt,\ovt $ such that in the simple equilibrium,
	\begin{itemize}
		\item agents in $[0,\unt]$ and $[\ovt ,1]$ join the mechanism,
		\item agents in $(\unt,\ovt)$ join the decentralized market with double auction,
		\item $\mathcal{C}(\unt) = \mathcal{V}(\ovt )$ and $F(\unt)=1-F(\ovt )$.\vspace{0.2in}
		
	\end{itemize}
	
\end{theorem}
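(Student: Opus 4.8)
The plan is to mirror the structure of Theorem \ref{2cutofftheorem}, replacing the Nash-bargaining payoff from search with the double-auction payoff, and then to verify that the same fixed-point/indifference argument goes through. First I would fix the conjectured segmentation: agents in $(\unt,\ovt)$ join the double auction and the rest join the mechanism, so that the endogenous distribution of participants is the truncation $G(\theta)=\frac{F(\theta)-F(\unt)}{F(\ovt)-F(\unt)}$ stated just above. Since the good is uniform on $[0,1]$, this $G$ is strictly increasing and differentiable on $(\unt,\ovt)$, so the cited results of \cite{gck} and \cite{kitt} give the unique increasing bidding equilibrium $b(\theta)$ displayed in the text. Using that $b(\cdot)$ is increasing, the agent who values the good more wins, so ex-post efficiency holds and I can compute, for each type $\theta\in[\unt,\ovt]$, the expected payoff $u^d(\theta)$ from participating in the auction after a match (and $0$ otherwise, weighted by the match probability $p$).

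Next I would establish the shape of $u^d$ under the uniform distribution, exactly as Figure \ref{outsideutilities} and Appendix \ref{app:payoffs} do in the Nash case: $u^d$ should again be U-shaped, decreasing for low types who are net sellers and increasing for high types who are net buyers, with a unique interior minimum. The key payoff-equivalence facts I need are that for $\theta\le\unt$ the agent is always a seller in the auction and for $\theta\ge\ovt$ always a buyer (this uses that the winning rule is monotone and the participant support is $(\unt,\ovt)$), and that $u^d(\unt)$ and $u^d(\ovt)$ are the relevant values the mechanism must match. Then I would invoke the mechanism-side analysis already in place: Lemma \ref{simplealloc} pins down $q(\theta)=-1$ below $\unt$ and $q(\theta)=1$ above $\ovt$ regardless of the decentralized protocol, since its proof only used continuity of the two utility profiles and the indifference of the cutoff types. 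Hence the mechanism is again bid-ask pricing, and the designer's profit retains the form derived in Appendix \ref{symprofit}, with the only change being the compensations $-F(\unt)u^d(\unt)-(1-F(\ovt))u^d(\ovt)$ now computed from the double-auction $u^d$.

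Finally I would solve the designer's problem for the optimal cutoffs. The first-order conditions in $\unt$ and $\ovt$ should, after simplification, reduce to the two stated conditions $\mathcal{C}(\unt)=\mathcal{V}(\ovt)$ and $F(\unt)=1-F(\ovt)$. The cleanest route is to show that, as in the Nash case, the compensation term is a common scalar multiple of the virtual surplus (the remark after Theorem \ref{thm:profit} flags this distribution-free proportionality), so that maximizing profit is equivalent to maximizing virtual surplus; the virtual-surplus maximizer is characterized precisely by those two equations, and regularity of $F$ guarantees the maximizer is interior and unique. I would close by checking the equilibrium conditions: that the constructed mechanism offers $u^m(\theta)>u^d(\theta)$ for $\theta\in(\unt,\ovt)$ (so intermediate types strictly prefer to search) and $u^m\ge u^d$ on the complement, exactly as in Figure \ref{simp_utilities}, which certifies that the conjectured segmentation is indeed a simple equilibrium.

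The main obstacle I expect is the second paragraph: proving that $u^d$ for the double auction is genuinely U-shaped and that the cutoff-matching/indifference logic is preserved. The double-auction payoff involves the nested integral in $b(\theta)$ and the midpoint transfer, so unlike the transparent Nash split it is not obvious a priori that $u^d$ is convex-like with a single interior minimum, nor that the proportionality-to-virtual-surplus identity survives. This is almost certainly why the theorem is stated only for the uniform distribution: under uniformity the truncated $G$ is affine, $b(\theta)$ and the resulting $u^d$ become explicitly computable, and the delicate monotonicity and proportionality claims can be verified by direct (if tedious) calculation rather than by a general argument.
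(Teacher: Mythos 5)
Your overall strategy is the paper's: fix the conjectured segmentation, import the unique increasing bidding equilibrium of the truncated double auction from the cited results, bind the IR constraints at $\unt$ and $\ovt$, and show the resulting profit is a positive scalar multiple of the baseline virtual surplus so that the optimal cutoffs are unchanged. But two of your intermediate claims are wrong or under-argued. First, $u^{d}$ under the double auction is \emph{not} U-shaped with a unique interior minimum on the participant interval: the paper's Lemma \ref{lem:da-slope} shows its slope equals $p(G(\theta)-1)\in[-p,0]$ on $[\unt,\ovt]$, so it is monotonically decreasing there and attains its minimum at $\ovt$, unlike the Nash case. If you set out to prove an interior minimum you would fail. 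Fortunately the argument never needs U-shape --- only that the slope is bounded in $[-1,1]$ so that the mechanism's utility (with slopes $-1,0,1$) can cross $u^{d}$ exactly at $\unt$ and $\ovt$ --- but the shape claim must be replaced by the slope computation. (Relatedly, your final equilibrium check has the inequality backwards: intermediate types search because $u^m<u^d$ on $(\unt,\ovt)$, not $u^m>u^d$.)

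Second, the assertion that types $\theta\le\unt$ are ``always sellers'' and types $\theta\ge\ovt$ ``always buyers'' in the auction is not automatic from monotonicity of the winning rule: a deviating outside type chooses her bid freely, and you must show her optimal bid is the corner bid $b(\unt)$ (resp.\ $b(\ovt)$). The paper devotes Lemma \ref{lem:cornerbid} to exactly this, first eliminating bids outside $[b(\unt),b(\ovt)]$ as dominated and then using a revealed-preference comparison against the cutoff type's optimality; without it, the off-path payoffs entering the binding IR constraints are not pinned down. Finally, the distribution-free compensation-to-virtual-surplus proportionality you want to borrow from the remark after Theorem \ref{thm:profit} is specific to Nash bargaining and does not transfer: for the double auction the compensation contains bid terms that are not proportional to virtual surplus in general. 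The paper instead first shows $\partial\Pi/\partial\unt<0$ so that $F(\unt)=1-F(\ovt)$ must bind, and only then (under uniformity) does the profit collapse to $\frac{6-5p}{6}\,\Pi^M$, from which $\mathcal{C}(\unt)=\mathcal{V}(\ovt)$ follows as in the baseline. Your instinct that uniformity is what makes all of this explicitly computable is correct.
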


\begin{proposition}
	The profit of the marketplace in coexistence is $1-\frac{5p}{6}=1-\frac{5m}{3}$ times the profit the marketplace would make if there were no decentralized market with double auction.
\end{proposition}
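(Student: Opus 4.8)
The plan is to mirror the argument behind Theorem \ref{thm:profit}, replacing the equal surplus split of Nash bargaining with the split induced by the double-auction bid function. The profit decomposition derived in Appendix \ref{genprofit} applies verbatim here, because it relies only on payoff equivalence and on the fact that the marketplace serves $[0,\unt]\cup[\ovt,1]$ with an allocation in $\{-1,1\}$ (Lemma \ref{simplealloc}); none of this depends on how the decentralized market splits surplus. Thus the coexistence profit again takes the form (virtual surplus) $-$ (total compensation), where the virtual surplus equals $\Pi^M$ at the optimal cutoffs and the compensation equals $F(\unt)u^m(\unt)+(1-F(\ovt))u^m(\ovt)$. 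The only object that changes relative to the Nash case is the level of the binding individual-rationality constraints $u^m(\unt)=u^d(\unt)$ and $u^m(\ovt)=u^d(\ovt)$, now computed from the double auction.

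First I would pin down the cutoffs and the bidding strategy. By Theorem \ref{2cutoffdoubleauction} the simple equilibrium cutoffs satisfy $\mathcal{C}(\unt)=\mathcal{V}(\ovt)$ and $F(\unt)=1-F(\ovt)$, which for the uniform distribution give $\unt=\tfrac14$ and $\ovt=\tfrac34$. Hence the endogenous population in the decentralized market is $F$ truncated to $(\tfrac14,\tfrac34)$, i.e.\ $G$ uniform on that interval. Substituting this $G$ into the equilibrium bidding formula (and checking that $G^{-1}(\tfrac12)=\tfrac12$) yields the linear bid $b(\theta)=\tfrac23\theta+\tfrac16$; it is worth recording that this bid is invariant to symmetric truncation, which will reconfirm that the optimal cutoffs do not move with $p$, the double-auction analogue of Corollary \ref{cor:thick}.

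Next I would compute the outside payoffs of the two cutoff types. Since $\unt$ is the lowest participant, its bid is below every interior bid, so it loses every auction and always sells; symmetrically $\ovt$ always buys. Under midpoint pricing the type-$\unt$ seller who meets partner $\theta$ collects $\tfrac{b(\unt)+b(\theta)}{2}$, for a per-meeting surplus $\tfrac{b(\unt)+b(\theta)}{2}-\unt$; integrating this against $G$ and weighting by the match probability $p$ gives $u^d(\unt)$, and an analogous computation gives $u^d(\ovt)$. These are exactly the binding IR levels. Feeding them into the profit decomposition, the total compensation $F(\unt)u^d(\unt)+(1-F(\ovt))u^d(\ovt)$ can be expressed as a scalar multiple of $\Pi^M$; collecting terms yields the coexistence profit as $\bigl(1-\tfrac{5p}{6}\bigr)\Pi^M$, and the identity $1-\tfrac{5p}{6}=1-\tfrac{5m}{3}$ follows from $p=2m$.

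The step that does the real work is the computation of the boundary types' double-auction payoffs: one must justify that the deviating cutoff type always occupies a single side of the trade and best-responds with the boundary value of the equilibrium bid function (equivalently, obtain $u^d(\unt)$ as the limit of the interior payoff $u^d(\theta_0)$ as $\theta_0\downarrow\unt$, where the buying component vanishes), and then carry out the integration against $G$. This is also precisely where the argument becomes distribution-specific: in the Nash case the equal split made the compensation-to-profit ratio a distribution-free $\tfrac{p}{2}$, whereas here the split is governed by the shape of $b$, so the clean constant $\tfrac{5p}{6}$ is obtained only for the uniform distribution, and extending it would require re-deriving $b$ and re-integrating for each $F$.
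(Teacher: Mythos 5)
Your route is the same as the paper's: decompose the coexistence profit into virtual surplus minus compensation, pin the binding IR levels at the cutoff types' double-auction payoffs, justify that the cutoff types bid $b(\unt)$ and $b(\ovt)$ and occupy one side of every trade (the paper does this via a revealed-preference lemma rather than a limit argument, but either works), and integrate against $G$. The intermediate objects you set up are all correct: $\unt=\tfrac14$, $\ovt=\tfrac34$, $G$ uniform on $(\tfrac14,\tfrac34)$, and $b(\theta)=\tfrac{4\theta+1}{6}$.

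The gap is precisely the step you wave through as ``collecting terms,'' because that is where the entire content of the constant lives, and carrying it out with your own ingredients does not produce $1-\tfrac{5p}{6}$. Type $\unt=\tfrac14$ always sells at price $\tfrac{b(\unt)+b(x)}{2}=\tfrac{4x+3}{12}$, so her per-meeting surplus is $\tfrac{x}{3}$, and hence
\[
u^d(\unt)=p\,\mathbb{E}\!\left[\tfrac{x}{3}\right]=\tfrac{p}{6},\qquad u^d(\ovt)=p\,\mathbb{E}\!\left[\tfrac{1-x}{3}\right]=\tfrac{p}{6},
\]
so the total compensation is $F(\unt)u^d(\unt)+(1-F(\ovt))u^d(\ovt)=\tfrac14\cdot\tfrac{p}{6}+\tfrac14\cdot\tfrac{p}{6}=\tfrac{p}{12}$, while the virtual surplus at these cutoffs is $\Pi^M=\tfrac18$. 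That yields a coexistence profit of $\tfrac18-\tfrac{p}{12}=\bigl(1-\tfrac{2p}{3}\bigr)\Pi^M$, not $\bigl(1-\tfrac{5p}{6}\bigr)\Pi^M$. So as written your argument does not establish the stated constant: either you must locate an additional compensation term worth $\tfrac{p}{48}$ that this accounting misses, or the constant itself needs rechecking. (For what it is worth, the paper's own appendix derivation produces the same intermediate quantities --- its compensation formula evaluates to $\tfrac{p}{8}-\tfrac{p}{2}\cdot\tfrac{1}{12}=\tfrac{p}{12}$ --- so the discrepancy sits entirely in the final simplification; this is exactly why you should write that simplification out explicitly rather than assert its outcome.)
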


\begin{proposition}
	The total welfare under the simple equilibrium with the double auction is greater than when either market operates on its own.
\end{proposition}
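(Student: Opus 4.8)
The plan is to reduce the claim to its Nash-bargaining counterpart, Proposition \ref{efficiencypropuni}, by exploiting the fact that welfare is measured as gains from trade and that both bargaining protocols are ex-post efficient. The central observation is that aggregate gains from trade depend only on the final allocation of the good---on \emph{who} ends up consuming---and not on the transfers, which net out. In any realized meeting between two agents with valuations $\theta_1<\theta_2$, the efficient outcome hands both units to the higher type, creating gains $\theta_2-\theta_1$ relative to autarky; this is achieved by Nash bargaining (the higher type buys at the average of the two valuations) and, by \cite{gck} and \cite{kitt}, by the unique equilibrium of the double auction as well. Consequently, for any fixed segmentation and any fixed matching probability $p$, the total gains from trade generated in the decentralized market are identical under the two protocols: they differ only in how the per-meeting surplus is split, which affects individual payoffs and the marketplace's compensations but not aggregate welfare.

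First I would pin down that the two equilibria induce the same segmentation. By Theorem \ref{2cutoffdoubleauction} the coexistence cutoffs under the double auction satisfy $\mathcal{C}(\unt)=\mathcal{V}(\ovt)$ and $F(\unt)=1-F(\ovt)$. For the uniform distribution these read $2\unt=2\ovt-1$ and $\unt=1-\ovt$, so $\unt=\tfrac14$ and $\ovt=\tfrac34$, exactly the Nash-bargaining cutoffs (cf.\ Figure \ref{simp_utilities}). By the double-auction analogue of Lemma \ref{simplealloc} the marketplace serves $[0,\tfrac14]\cup[\tfrac34,1]$ with allocation $q\in\{-1,1\}$, so the final allocation on the marketplace---and hence the gains from trade it generates---coincides with the single-market benchmark and with the Nash-bargaining coexistence. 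I emphasize that this step is not circular: although the optimal cutoffs could in principle depend on the protocol through the compensations, Theorem \ref{2cutoffdoubleauction} already establishes that they do not.

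With the allocations matched meeting-by-meeting, the three welfare quantities relevant to the comparison all agree across the two protocols: the coexistence welfare (marketplace gains plus decentralized gains from the interval $(\tfrac14,\tfrac34)$), the decentralized-market-alone welfare (efficient trade among all of $[0,1]$, each agent matched with probability $p$), and the centralized-market-alone welfare (which involves no decentralized market and is protocol-independent). Therefore the strict inequality established in Proposition \ref{efficiencypropuni} for Nash bargaining transfers verbatim to the double auction, which is the claim.

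The main obstacle is justifying ex-post efficiency of the double auction on the \emph{truncated} support $(\unt,\ovt)$, i.e.\ confirming that the endogenous distribution $G$ of participants is strictly increasing and differentiable there, so that the \cite{gck}--\cite{kitt} characterization applies and the higher type always obtains the good. Under the uniform distribution $G$ is linear on $(\tfrac14,\tfrac34)$, making this immediate, but it is the step that genuinely uses equilibrium structure rather than mere transfer-invariance of welfare. A fully self-contained alternative would replace the reduction with a direct computation of $\mathbb{E}\!\left[\,\lvert\theta_i-\theta_j\rvert\,\right]$ over the truncated and the full uniform supports and a comparison against the fixed marketplace gains; this is routine but more laborious, and the efficiency-based reduction avoids it entirely.
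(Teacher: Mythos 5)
Your proposal is correct and is essentially the argument the paper relies on: since the paper's appendix develops the double-auction case only up to the profit computation and leaves this welfare claim without an explicit proof, the intended justification is precisely your reduction --- the cutoffs coincide with the Nash-bargaining ones ($\unt=\tfrac14$, $\ovt=\tfrac34$ under the uniform distribution, by Theorem \ref{2cutoffdoubleauction}), both protocols are ex-post efficient on the relevant supports (here $G$ is uniform on $(\tfrac14,\tfrac34)$, so the \cite{gck}--\cite{kitt} characterization applies), and gains from trade depend only on the final allocation, so the inequalities of Proposition \ref{efficiencypropuni} carry over unchanged. Your explicit handling of the truncated-support issue and your remark that transfers (including those to the marketplace) net out of the welfare measure are exactly the points that make the reduction airtight.
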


\begin{proposition}\label{prop:da-thick}
	Same agents trade in the marketplace, with or without the decentralized market.
\end{proposition}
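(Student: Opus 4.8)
The plan is to mirror the proof of Corollary \ref{cor:thick}, which settled the Nash bargaining case. The engine behind that corollary was not the equilibrium profit ratio per se but its off-equilibrium strengthening (the Remark following Theorem \ref{thm:profit}): for \emph{every} admissible pair of cutoffs $\unt,\ovt$, not merely the optimal one, the aggregate compensation the marketplace must pay equals a fixed fraction of the virtual surplus generated by that segmentation. Once such a uniform proportionality is in hand for the double auction, the coexistence objective becomes a positive constant multiple of the single-market objective over the same feasible set of cutoffs, so the two maximization problems share their maximizers, and the set of agents served on the marketplace is unchanged.

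First I would record the compensation identity. By the allocation structure of simple equilibria (the argument of Lemma \ref{simplealloc} applies verbatim, since it only uses the marketplace's optimization against a U-shaped outside option), the allocation on $[0,\unt]\cup[\ovt,1]$ is still $q\equiv-1$ for sellers and $q\equiv+1$ for buyers. Hence, by payoff equivalence, raising the binding participation utilities from $0$ (the $p=0$ benchmark) to $u^d(\unt)$ and $u^d(\ovt)$ shifts every seller's and every buyer's rent uniformly, so the total compensation is exactly
\[ F(\unt)\,u^d(\unt) + \bigl(1-F(\ovt)\bigr)\,u^d(\ovt), \]
where now $u^d(\unt)$ and $u^d(\ovt)$ are the \emph{double-auction} equilibrium payoffs of the cutoff types, obtained by evaluating the bidding strategy $b(\cdot)$ against the truncated distribution $G$. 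The substantive step is to show that this sum equals $\tfrac{5p}{6}$ times the virtual surplus $\Pi^M(\unt,\ovt)$ for arbitrary admissible cutoffs, i.e. the off-equilibrium version of the double-auction profit proposition above (the $1-\tfrac{5p}{6}$ ratio). This computation is carried out at the optimum in Appendix \ref{app:doubleauctionmain}; I would note that the same algebra goes through for any $\unt,\ovt$ because truncating the uniform only rescales $G$ to a uniform law on $(\unt,\ovt)$, so the payoff integrals scale homogeneously in the interval width and collapse, under the market-clearing relation $F(\unt)=1-F(\ovt)$, to the claimed coefficient.

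With the off-equilibrium identity established, the conclusion is immediate. The coexistence profit at cutoffs $(\unt,\ovt)$ is
\[ \Pi_{\unt,\ovt} = \Pi^M(\unt,\ovt) - \tfrac{5p}{6}\,\Pi^M(\unt,\ovt) = \left(1-\tfrac{5p}{6}\right)\Pi^M(\unt,\ovt), \]
and since $1-\tfrac{5p}{6}\in[\tfrac16,1]$ is strictly positive for every $p\in[0,1]$, maximizing $\Pi_{\unt,\ovt}$ over the (protocol- and $p$-independent) feasible cutoffs is equivalent to maximizing $\Pi^M$. The optimal cutoffs, and therefore the agents in $[0,\unt]\cup[\ovt,1]$ who trade on the marketplace, coincide with those of the single-market problem of Theorem \ref{thm:monagora}. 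Exactly as in the Nash case, only the transfers move with $p$, adjusting so that the unchanged cutoff types $\unt,\ovt$ remain indifferent between the two markets.

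I expect the main obstacle to be the payoff computation in the second step rather than the optimization logic, which is routine once the identity is available. Establishing the \emph{uniform} (all-cutoffs) proportionality, as opposed to the single equilibrium ratio already proved, requires verifying that the double-auction payoff of a cutoff type depends on the segmentation only through the truncation, and that $u^d(\unt)+u^d(\ovt)$ tracks the per-trade virtual surplus with the same $\tfrac{5p}{6}$ coefficient for every admissible $(\unt,\ovt)$. The homogeneity of the uniform distribution under truncation is precisely what delivers this, and is also why the statement is confined to the uniform case.
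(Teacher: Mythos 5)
Your proposal is correct and follows essentially the same route as the paper: Appendix \ref{app:da-profit} already derives $\Pi_{\unt,\ovt}$ for \emph{arbitrary} cutoffs (with feasibility binding) as a positive constant multiple of the single-market objective $\unt(1-2\unt)$, which is exactly the uniform proportionality you identify as the key step, and the conclusion that the maximizers coincide then follows as you describe. The only difference is presentational — you phrase it as an off-equilibrium compensation identity, while the paper just computes the profit directly — so there is no substantive gap.
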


Thus, exactly same agents trade in the centralized marketplace (i) for each level of friction in the decentralized market (including the extreme case of no decentralized trade with $p=0$) and (ii) whether the decentralized trade happens according to Nash bargaining or a double auction.

\section{Multiple Designers}

A natural question to ask is how the equilibrium would change if instead of one centralized marketplace competing with a decentralized market, there were multiple centralized marketplaces competing with each other. This is closer to the models extensively studied by the competing platforms literature. However, it is still worthwhile to understand what the structure of the competition would look like within the framework of this paper.

Mechanism design problems with multiple designers are notoriously difficult to solve. The difficulty lies in the fact that when there are multiple designers, each one wants to condition her mechanism to the others' and none of them wants to post a mechanism that only depends on the agents' types. As a result, revelation principle does not hold in environments with multiple designers. Despite recent developments \citep{hartlineyiding}, mechanism design without revelation principle is itself a notorious problem, even with a single designer.

One way the literature has dealt with this problem of infinite regress between competing mechanism is to explicitly assume that the designers each choose a direct mechanism. Although this is not without loss anymore, it at least provides a tractable way to study the problem. I will first follow this literature and then illustrate what else can be achieved with a non-direct mechanism.

Suppose there are $n$ marketplace designers competing with each other for agents' participation. Each of them is a profit-maximizer. Suppose they are restricted to choose direct mechanisms. The timeline is as follows: First, designers announce their mechanisms simultaneously. Then, agents choose which marketplace to join simultaneously. Finally, trades in each marketplace realize simultaneously.

From the analysis of the Online Appendix\footnote{Here is a link to the Online Appendix: \url{https://berkidem.com/documents/coexistence_online_appendix.pdf}.}, we know that this is equivalent to saying that each designer $i$ chooses a pair of prices (bid-ask prices) $p_b^i$ and $p_s^i$ for buying and selling in the marketplace $i$, respectively.

This environment is essentially a model of Bertrand competition between the marketplaces where agents' roles as buyers and sellers are endogenously determined. The reasoning from the Bertrand Equilibrium still holds: If some designer $i$ makes a positive profit, then another designer $j$ can offer slightly smaller bid-ask spread, and increase her own profit discontinuously by recruiting the agents who were trading on the marketplace $i$. Thus, there cannot be an equilibrium where a designer makes a positive profit. Conversely, there cannot be an equilibrium where a designer makes a negative profit: It is always feasible to make 0 profit by setting a prohibitively high price to everyone.

Next, I argue that there are equilibria where all designers make 0 profit: Consider the strategy profiles where (i) at least one of the marketplaces set $p_b=p_s=m(F)$ where $m(F)$ is the median of the distribution of the valuations, (ii) the rest of the marketplaces either shut down or set prohibitively unfavorable prices, and (iii) all agents uniformly randomize over marketplaces where $p_b=p_s=m(F)$. Clearly, none of the designers or agents can improve their payoffs strictly and all such strategy profiles constitute equilibria.\footnote{Notice that there are many other payoff-equivalent equilibria. For instance, for $n=2$, suppose we have $p_b=p_s=m(F)$ on both marketplaces. Then, the agents are indifferent between the marketplaces. Any segmentation such that the market is cleared in each marketplace is also an equilibrium.} Finally, there is no equilibrium where all marketplaces are inactive: If this happened, then each marketplace would simply deviate to announcing the baseline mechanism and make a positive profit.

Notice that $m(F)$ is the price that would clear the market if we had a competitive market: Setting the price at $m(F)$ ensures that the supply is equal to demand. Thus, the equilibria of this game replicate the Walrasian equilibrium due to the nature of the competition here.

We have seen that introducing another profit-maximizing marketplace reduced the profits to zero. In the previous section, we have observed that when we instead considered a competing decentralized market, the profit only decreased by a constant ratio, always less than the half. This stark difference is a result of the restriction to direct mechanisms when they are not without loss of generality. Next, I will briefly discuss the simplest way in which mechanisms can depend on each other and how it allows the designers to recoup the profits.

Now suppose the designers can announce any kind of mechanism. Thus, they can make their prices depend on the other marketplaces. One of the simplest ways to utilize this interdependence is by posting prices together with price-matching guarantees.

In an equilibrium with price-matching guarantees, each designer $i$ announces prices $p_b^i$ and $p_s^i$, and also promises to honor the prices $\min_{j}\{p_b^j:j=1,\ldots,n\}$ and $\max_j\{p_s^j:j=1,\ldots,n\}$ as well; so each agent can get the lowest price for buying and the highest price for selling from each marketplace. This is in line with introducing price-matching guarantees in a Bertrand competition setting and obtaining an equilibrium with monopoly pricing \citep{pricematching}.

In the framework of this paper, it is see an equilibrium for all designers to announce the prices that are equivalent to the baseline mechanism and for all agents to uniformly randomize over all the marketplaces. In this case, the designers are splitting the baseline profit equally. Thus, they essentially operate as a cartel.

It is worth noting that this is not the unique equilibrium in this environment. Specifically, it is still an equilibrium for each designer to announce the median of the valuations as the posted price for both the buyers and sellers.\footnote{Requiring coalition-proofness for the designers as well would make sure that they are collectively achieving the baseline profit in the equilibrium.}

Finally, in addition to choosing among these marketplaces, suppose agents also have the option to join the decentralized market, after observing the posted mechanisms. Then, the same argument works with a minor modification: The designers would each post prices equivalent to the coexistence mechanism with price-matching guarantees and the agents with extreme values would randomize over the marketplaces uniformly. Since no marketplace has any incentives to deviate, they collectively act as the centralized market from the main model. Given that the centralized market is the same (on the equilibrium-path), no agent has any incentives to deviate either. Thus, the same segmentation would constitute an equilibrium with the addition that the agents with extreme values now randomize over the marketplaces.

	\clearpage
	\bibliographystyle{apacite}
	\bibliography{marketplace}

\begin{thebibliography}{}

\bibitem[Armstrong, 2006]{armstrong}
Armstrong, M. (2006).
\newblock Competition in two-sided markets.
\newblock {\em The RAND Journal of Economics}, 37(3):668--691.

\bibitem[Ashlagi and Roth, 2014]{ashlagiroth}
Ashlagi, I. and Roth, A.~E. (2014).
\newblock Free riding and participation in large scale, multi-hospital kidney
  exchange.
\newblock {\em Theoretical Economics}, 9(3):817--863.

\bibitem[Bulow and Roberts, 1989]{bulow}
Bulow, J. and Roberts, J. (1989).
\newblock The simple economics of optimal auctions.
\newblock {\em Journal of political economy}, 97(5):1060--1090.

\bibitem[{Congress Majority Staff}, 2020]{amazon}
{Congress Majority Staff} (2020).
\newblock Investigation of competition in digital markets.

\bibitem[Cramton et~al., 1987]{gck}
Cramton, P., Gibbons, R., and Klemperer, P. (1987).
\newblock Dissolving a partnership efficiently.
\newblock {\em Econometrica: Journal of the Econometric Society}, pages
  615--632.

\bibitem[Diamond, 1971]{diamond}
Diamond, P.~A. (1971).
\newblock A model of price adjustment.
\newblock {\em Journal of economic theory}, 3(2):156--168.

\bibitem[Ekmekci and Yenmez, 2019]{ekmekciyenmez}
Ekmekci, M. and Yenmez, M.~B. (2019).
\newblock Common enrollment in school choice.
\newblock {\em Theoretical Economics}, 14(4):1237--1270.

\bibitem[Feng and Hartline, 2018]{hartlineyiding}
Feng, Y. and Hartline, J.~D. (2018).
\newblock An end-to-end argument in mechanism design (prior-independent
  auctions for budgeted agents).
\newblock In {\em 2018 IEEE 59th Annual Symposium on Foundations of Computer
  Science (FOCS)}, pages 404--415. IEEE.

\bibitem[Fieseler et~al., 2003]{fieselerkittsteinermoldovanu}
Fieseler, K., Kittsteiner, T., and Moldovanu, B. (2003).
\newblock Partnerships, lemons, and efficient trade.
\newblock {\em Journal of Economic Theory}, 113(2):223--234.

\bibitem[Figueroa and Skreta, 2012]{figueroaskreta}
Figueroa, N. and Skreta, V. (2012).
\newblock Asymmetric partnerships.
\newblock {\em Economics Letters}, 115(2):268--271.

\bibitem[Fuchs and Skrzypacz, 2015]{fuchsskrzypacz}
Fuchs, W. and Skrzypacz, A. (2015).
\newblock Government interventions in a dynamic market with adverse selection.
\newblock {\em Journal of Economic Theory}, 158:371--406.

\bibitem[Hartline and Roughgarden, 2014]{hartline}
Hartline, J.~D. and Roughgarden, T. (2014).
\newblock Optimal platform design.
\newblock {\em arXiv preprint arXiv:1412.8518}.

\bibitem[Hess and Gerstner, 1991]{pricematching}
Hess, J.~D. and Gerstner, E. (1991).
\newblock Price-matching policies: An empirical case.
\newblock {\em Managerial and Decision Economics}, 12(4):305--315.

\bibitem[Hinote, 2021]{airbnb}
Hinote, A. (2021).
\newblock Taxi and ridehailing usage in new york city.

\bibitem[Khan, 2016]{khan}
Khan, L.~M. (2016).
\newblock Amazon's antitrust paradox.
\newblock {\em Yale lJ}, 126:710.

\bibitem[Kittsteiner, 2003]{kitt}
Kittsteiner, T. (2003).
\newblock Partnerships and double auctions with interdependent valuations.
\newblock {\em Games and Economic Behavior}, 44(1):54--76.

\bibitem[Loertscher and Wasser, 2019]{loertscherwasser}
Loertscher, S. and Wasser, C. (2019).
\newblock Optimal structure and dissolution of partnerships.
\newblock {\em Theoretical Economics}, 14(3):1063--1114.

\bibitem[Miao, 2006]{miao}
Miao, J. (2006).
\newblock A search model of centralized and decentralized trade.
\newblock {\em Review of Economic dynamics}, 9(1):68--92.

\bibitem[Mortensen, 1970]{mortensen}
Mortensen, D.~T. (1970).
\newblock Job search, the duration of unemployment, and the phillips curve.
\newblock {\em The American Economic Review}, 60(5):847--862.

\bibitem[Myerson and Satterthwaite, 1983]{bilateral}
Myerson, R.~B. and Satterthwaite, M.~A. (1983).
\newblock Efficient mechanisms for bilateral trading.
\newblock {\em Journal of economic theory}, 29(2):265--281.

\bibitem[Mylovanov and Tr{\"o}ger, 2014]{mylovanovtroger}
Mylovanov, T. and Tr{\"o}ger, T. (2014).
\newblock Mechanism design by an informed principal: Private values with
  transferable utility.
\newblock {\em The Review of Economic Studies}, 81(4):1668--1707.

\bibitem[Peivandi and Vohra, 2021]{vohra}
Peivandi, A. and Vohra, R.~V. (2021).
\newblock Instability of centralized markets.
\newblock {\em Econometrica}, 89(1):163--179.

\bibitem[Philippon and Skreta, 2012]{philipponskreta}
Philippon, T. and Skreta, V. (2012).
\newblock Optimal interventions in markets with adverse selection.
\newblock {\em American Economic Review}, 102(1):1--28.

\bibitem[Pissarides, 1979]{pissarides}
Pissarides, C.~A. (1979).
\newblock Job matchings with state employment agencies and random search.
\newblock {\em The Economic Journal}, 89(356):818--833.

\bibitem[Rochet and Tirole, 2003]{rochet}
Rochet, J.-C. and Tirole, J. (2003).
\newblock Platform competition in two-sided markets.
\newblock {\em Journal of the european economic association}, 1(4):990--1029.

\bibitem[Roth and Shorrer, 2021]{rothshorrer}
Roth, B.~N. and Shorrer, R.~I. (2021).
\newblock Making marketplaces safe: Dominant individual rationality and
  applications to market design.
\newblock {\em Management Science}, 67(6):3694--3713.

\bibitem[Schneider, 2021]{uber}
Schneider, T.~W. (2021).
\newblock Taxi and ridehailing usage in new york city.

\bibitem[Tirole, 2012]{tirole}
Tirole, J. (2012).
\newblock Overcoming adverse selection: How public intervention can restore
  market functioning.
\newblock {\em American economic review}, 102(1):29--59.

\end{thebibliography}


\begin{thebibliography}{}

\bibitem[Idem, 2021]{idem}
Idem, B. (2021).
\newblock Optimal marketplace design.
\newblock {\em Working Paper}.

\end{thebibliography}

\appendix

\section{Appendix}

\subsection{Integrating the Virtual Surplus}\label{virtualsurplus}

\begin{align*}
	&-\int\limits_0^{\unt}C(x)f(x)dx +\int\limits^1_{\ovt }V(y)f(y)dy \\
	&=-\int\limits_0^{\unt}C(x)f(x)dx +\int\limits^1_{\ovt }V(y)f(y)dy \\
	&= -\int\limits_0^{\unt}\left[ x +\dfrac{F(x)}{f(x)} \right]f(x)dx +\int\limits^1_{\ovt }\left[ y -\dfrac{1-F(y)}{f(y)} \right]f(y)dy \\
	&=	-\int\limits_0^{\unt}\left[x \right]f(x)dx -\int\limits_{0}^{\unt} \left[F(x)\right]dx +\int\limits^1_{\ovt }\left[ y  \right]f(y)dy -\int\limits^1_{\ovt }\left[1-F(y) \right]dy  \\
	&= 
	-\left( \left[x F(x) \right]_0^{\unt} -\int\limits_{0}^{\unt} \left[F(x)\right]dx  \right) -\int\limits_{0}^{\unt} \left[F(x)\right]dx \\
	&+ \left( \left[ y F(y) \right]_{\ovt }^1 -\int\limits^1_{\ovt }\left[F(y) \right]dy \right) -\int\limits^1_{\ovt }\left[1-F(y) \right]dy \\
	&= 	- \unt F(\unt)
	+  1-\ovt F(\ovt )  -\int\limits^1_{\ovt }\left[F(y) \right]dy -(1-\ovt ) +\int\limits^1_{\ovt }\left[F(y) \right]dy \\
	&= - \unt F(\unt)
	+  1-\ovt F(\ovt )   -1+\ovt    \\
	&= - \unt F(\unt) -\ovt F(\ovt ) +\ovt 
\end{align*}

\subsection{Optimal Baseline Marketplace}\label{proof:monagora}

	\begin{proof}[Proof of Theorem \ref{thm:monagora}]
		Since the allocation needs to be increasing, if $q(\theta)<0$ for some $\theta$, we would have $q(\theta')<0$ for each $\theta'\leq \theta$. Similarly, if $q(\theta)>0$ for some $\theta$, we would have $q(\theta')>0$ for each $\theta'\geq \theta$. So, let $0\leq \unt \leq \ovt \leq 1$ such that $\unt$ is the supremum of values with negative allocation and $\ovt$ is the infimum of the values with positive allocation. Then, we can write the objective function as follows:

		\begin{align*}
			\Pi^M&= \mathbb{P}[\theta\in[0,\unt]] \mathbb{E}[\mathcal{C}(\theta)q(\theta)|\theta\in[0,\unt]] + \mathbb{P}[\theta\in[\ovt,1]] \mathbb{E}[\mathcal{V}(\theta)q(\theta)|\theta\in[\ovt,1]]\\
			&=\int\limits_0^{\unt}    \mathcal{C}(x) q(x) f(x)dx + \int\limits_{\ovt}^1   \mathcal{V}(x) q(x) f(x)dx
		\end{align*}
		
		Note that for $\theta \leq \unt$, we must have $q(\theta) = -1$ as $-1$ is the only possible negative allocation with the indivisible good and similarly, $q(\theta) = 1$ for $\theta\geq \ovt$. Thus, the optimal allocation will have the following form and the next step is to choose the cutoffs, $\unt$ and $\ovt$ optimally.
		\begin{align*}
			q(\theta)=\begin{cases}
				-1 &\text{ if } \theta\leq \unt\\
				0 &\text{ if } \unt \leq \theta\leq \ovt\\
				1 &\text{ if } \theta\geq \ovt
			\end{cases}
		\end{align*}
		
		Then, the problem can be restated as follows:
		\begin{equation*}
			\begin{array}{llllll}
			
				\displaystyle\max_{\unt,\ovt}  &\left[ -\int\limits_0^{\unt}    \mathcal{C}(x) f(x)dx + \int\limits_{\ovt}^1   \mathcal{V}(x)  f(x)dx \right] \\

				\text{s. t. }&\\
				
				&F(\unt) = 1- F(\ovt)\\
				
				& 0\leq \unt\leq \ovt \leq 1
				
			\end{array}
		\end{equation*}
		
		Integrating out the total virtual value and the virtual cost using integration by parts\footnote{See Appendix \ref{virtualsurplus} for details.} shows that the objective function is equal to:
		\begin{equation*}
			-\unt F(\unt)  + \ovt (1-F(\ovt ))  =  F(\unt) (\ovt -\unt) 
		\end{equation*}
		
		where the equality is obtained by using the feasibility condition $F(\unt)=1-F(\ovt)$. Then, there exists a solution to this problem. Moreover, the solution is interior: If $\unt=0$ or $\unt=\ovt$, the profit is 0. However, positive profit is feasible by any feasible interior solution as is clear from the objective function.
		
		Moreover, by using the formula for the transfer rule and the optimal allocation from above, we can compute the transfers in the mechanism to be
		\begin{align*}
			t(\theta)=\begin{cases}
				-\unt &\text{ if } \theta\leq \unt\\
				0 &\text{ if } \unt \leq \theta\leq \ovt\\
				\ovt &\text{ if } \theta\geq \ovt
			\end{cases}
		\end{align*}
		
		Notice that each `seller' gets the same payment while each `buyer' pays the same amount. Thus, this mechanism is equivalent to offering bid-ask prices that the transfer rule above suggest, that is a price for buying and a price for selling, and letting agents choose whether they want to buy or sell or not trade.
		
		Finally, notice that the solution must have $\mathcal{C}(\unt) = \mathcal{V}(\ovt)$: If we had $\mathcal{C}(\unt) > \mathcal{V}(\ovt)$, decreasing $\unt$ and adjusting $\ovt$ accordingly for feasibility would increase the profit since buying from $\unt$ is costlier than what selling to $\ovt$ pays off. Similarly, if we had $\mathcal{C}(\unt) < \mathcal{V}(\ovt)$, then increasing $\unt$ and adjusting $\ovt$ so that the feasibility binds would again increase the profit since there are more agents whose trade is profitable.
	\end{proof}

	\subsubsection{Illustrative Example with Uniform Distribution}
	
	Suppose $\theta$ is distributed uniformly over $\left[ 0,1 \right]$ with c.d.f. $F(\theta) =\theta$. Then, the virtual values and costs are given by $$ \mathcal{V}(\theta)=2\theta-1 \text{ and } \mathcal{C}(\theta)=2\theta\,. $$
	
	It is easy to show that the objective function becomes $\unt(1-2\unt)$ after substituting for $\ovt=1-\unt$ (feasibility). So, the optimal cutoffs are $\unt=\frac{1}{4}$ and $\ovt=\frac{3}{4}$.
	
	Let us consider two agents with valuations $\theta_1$ and $\theta_2$. Then, 1 will buy from 2 in the optimal marketplace when it operates on its own if and only if $\theta_1\geq 0.75$ and $\theta_2\leq 0.75$, and 2 will buy from 1 if and only if $\theta_2\geq 0.75$ and $\theta_1\leq 0.75$. The Figure \ref{fig:graph2} depicts the space of $(\theta_1, \theta_2)$ where the shaded areas represent these trading regions.

	\begin{figure}[h!]
		\centering
		\begin{tikzpicture}[font=\sffamily\small]
			\draw[->] (0,0) -- (4.5,0) node[below] {$\theta_1$};
			\draw[->] (0,0) -- (0,4.5) node[left] {$\theta_2$};
			
			\foreach \i in {0.25,0.5,0.75,1} \draw (4*\i,1mm) -- (4*\i,-1mm) node[below] {\i}; 
			\foreach \i in {0, 0.25, 0.5, 0.75, 1} \draw (0.25mm,4*\i) -- (-0.25mm,4*\i) node[left] {\i};

			\path[draw, fill=solarizedRed!20] (0,3)--(1,3)--(1,4)--(0,4)--cycle;
			
			\path[draw, fill=solarizedGreen!20] (3,0)--(4,0)--(4,1)--(3,1)--cycle;
			
		\end{tikzpicture}
		\caption{$x$-axis represents $\theta_1\in [0,1]$ and $y$-axis represents $\theta_2\in [0,1]$. Green and red areas show the type profiles at which agent 1 and 2 is the buyer, respectively.}
		\label{fig:graph2}
	\end{figure}
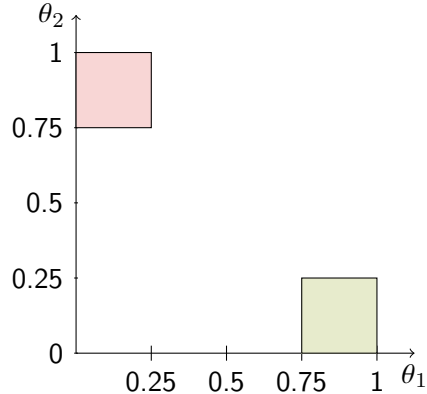

	Using the payment formula from the theorem above, we can compute the payments as below:
	
	$$t(\theta) =
	\begin{cases}
		-\dfrac{1}{4}, &\text{ if } \theta \leq \dfrac{1}{4}\,,\\
		0, &\text{ if }  \dfrac{1}{4} \leq  \theta \leq \dfrac{3}{4}\,,\\
		\dfrac{1}{4}, &\text{ if } \theta \geq \dfrac{3}{4}\,.
	\end{cases}$$

	The large area of the type space where there is no trade motivates our consideration for the decentralized market. The marketplace excludes these type profiles from trade because their is not profitable; this is akin to a monopolist or a monopsonist excluding some agents from trade. Indeed, the marketplace acts as both a monopolist and a monopsonist. However, unlike the buyers a monopolist excludes, the agents the marketplace excludes actually have a surplus that they can create, if they were allowed to trade. Thus, assumption that they will remain in the marketplace even though they are not trading is not very realistic. Next, I am going to allow them to choose between the marketplace and a decentralized market.
	
\subsection{Payoffs from Search}\label{app:payoffs}

From the Assumption \ref{ass:simple}, suppose $(\unt,\ovt)$ joins the decentralized market.

If an agent with valuation $\theta$ joins the search market, then, as explained in the Section \ref{payoffsearch}, her net expected payoff is

\begin{align*}
	u^d(\theta) = p\mathbb{P}[x>\theta | x \in [\unt, \ovt ]] \mathbb{E}\left[ \dfrac{x-\theta}{2} | x \in [\unt, \ovt ], x>\theta \right] + p\mathbb{P}[x<\theta | x \in [\unt, \ovt ]] \mathbb{E}\left[ \dfrac{\theta-x}{2} | x \in [\unt, \ovt ], x<\theta \right]
\end{align*}

where $p$ is the probability of matching someone, $\mathbb{P}[x>\theta | x \in [\unt, \ovt ]]$ and $\mathbb{P}[x<\theta | x \in [\unt, \ovt ]]$ are the probabilities that the match has a higher or lower values, respectively, and $\mathbb{E}\left[ \dfrac{x-\theta}{2} | x \in [\unt, \ovt ], x>\theta \right] $ and $\mathbb{E}\left[ \dfrac{\theta-x}{2} | x \in [\unt, \ovt ], x<\theta \right]$ are the expected shares of the surplus given the match has a higher or a lower value, respectively. The denominator 2 in the expression for the surplus comes from the fact that each agent gets half of the surplus from the trade in the Nash Bargaining.

The expected payoff will have a different shape for agents with valuations in $(\unt,\ovt)$, $[0,\unt]$ and $[\ovt,1]$. In the first case, the agent can either be a buyer or a seller since there are agents with valuations below and above his valuation. In the second case the agent can only be a seller and in the third case the agent can only be a buyer in the decentralized market. I consider the expected payoffs for these situations case-by-case.

Suppose $\theta\in (\unt, \ovt ) $. Then,
\begin{align*}
	u^d(\theta) &= p\dfrac{F(\ovt) - F(\theta) }{2(F(\ovt) - F(\unt)) } \left[ \int_{\theta}^{\ovt} \dfrac{x f(x)dx}{F(\ovt) - F(\theta)} - \theta \right]  + p\dfrac{F(\theta) - F(\unt) }{2(F(\ovt) - F(\unt)) } \left[ \theta - \int_{\unt}^{\theta} \dfrac{x f(x)dx}{F(\theta) - F(\unt)} \right] \\
	&= \dfrac{ p }{2(F(\ovt) - F(\unt)) } \left[ \int_{\theta}^{\ovt} x f(x)dx - \theta [F(\ovt) - F(\theta)]   +  \theta [F(\theta) - F(\unt)] - \int_{\unt}^{\theta} x f(x)dx \right] \\
	&= \dfrac{ p }{2(F(\ovt) - F(\unt)) } \left[ \int_{\theta}^{\ovt} x f(x)dx - \int_{\unt}^{\theta} x f(x)dx + \theta [ 2  F(\theta) - F(\ovt)  - F(\unt)] \right] 
\end{align*}

Suppose $\theta \leq \unt$. Then,
\begin{align*}
	u^d(\theta) &= \dfrac{p }{2} \left[ \int_{\unt}^{\ovt} \dfrac{x f(x)dx}{F(\ovt) - F(\unt)} - \theta \right] \\
	&= \dfrac{ p }{2 (F(\ovt) - F(\unt)) } \left[ \int_{\unt}^{\ovt} x f(x)dx - \theta [F(\ovt) - F(\unt)]  \right]  
\end{align*}

Similarly, if $\theta \geq \ovt $ then,
\begin{align*}
	u^d(\theta)&= \dfrac{p }{2} \left[ \theta - \int_{\unt}^{\ovt} \dfrac{x f(x)dx}{F(\ovt) - F(\unt)} \right] \\
	&= \dfrac{ p }{2 (F(\ovt) - F(\unt)) } \left[ \theta [F(\ovt) - F(\unt)] - \int_{\unt}^{\ovt} x f(x)dx   \right]  
\end{align*}

Notice that I write the net utilities from trade, not the total utilities. The total utility would also include the utility they get from consuming their endowment. To be consistent, I also compute the net utilities mechanism promises to the agents. Hence, I ignore the utility they get from consuming their own endowment and simply focus on net utilities from each market.

We differentiate the utilities to understand their shape:

\begin{align*}\label{slopes}
	\dfrac{ \partial u^d(\theta)}{\partial \theta} =  \begin{cases}
		- \dfrac{p }{2} &\text{ if } \theta\leq \unt,\\
		\dfrac{ p(2 F(\theta) - F(\unt ) - F(\ovt) )}{2(F(\ovt) - F(\unt)) } & \text{ if } \in [\unt, \ovt],\\
		\dfrac{p }{2} &\text{ if } \theta\geq \ovt  \end{cases} 
\end{align*}

For $\theta \in [\unt , \ovt]$, we have:

\begin{align*}
	& 2F(\ovt) \geq 2 F(\theta) \geq 2F(\unt) \\
	\iff & F(\ovt) - F(\unt ) \geq 2 F(\theta) - F(\unt ) - F(\ovt) \geq F(\unt) - F(\ovt) \\
	\iff & \dfrac{1}{2} \geq \dfrac{ 2 F(\theta) - F(\unt ) - F(\ovt) }{2(F(\ovt) - F(\unt)) } \geq - \dfrac{1}{2}
\end{align*}

Thus, the slope of the expected utility from the search is always in the interval $[-\frac{p}{2}, \frac{p}{2}]$. This will be useful when we want to show where the utilities the mechanism offers and the agents expect from search can cross each other.

Figure \ref{outsideutilitiesapp} shows the utilities from the search market for all agents given that the valuations are drawn from the uniform distribution over $[0,1]$ and agents with valuations in $(0.1, 0.9 )$ join the search market.

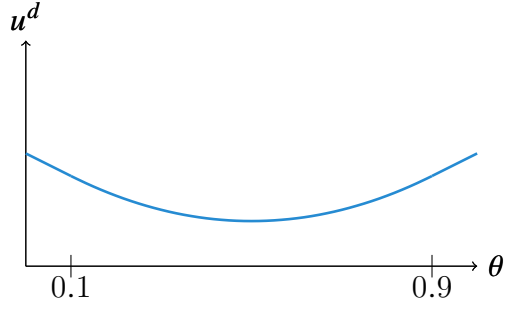
\begin{figure}[h]
	\centering 
	\begin{tikzpicture} [scale=6]
		\draw[->] (0, 0) -- (1, 0) node[right] {$\theta$};
		\draw[->] (0, 0) -- (0, 0.5) node[above] {$u^d$};
		
		\draw[->] (0, 0) -- (1, 0) node[right] {$\theta$};
		\draw[->] (0, 0) -- (0, 0.5) node[above] {$u^d$};
		
		\draw[domain=0:0.1, smooth, variable=\x, solarizedBlue, line width=1pt] plot ({\x}, {(1-2*\x)/4});
		\draw[domain=0.1:0.9, smooth, variable=\x, solarizedBlue, line width=1pt] plot ({\x}, {(\x^2-\x+0.41)/1.6});
		\draw[domain=0.9:1, smooth, variable=\x, solarizedBlue, line width=1pt] plot ({\x}, {(\x+\x-1)/4});

		\draw[] (0.1,0.25mm) -- (0.1,-0.25mm) node[left]{}; 
		\draw[] (0.9,0.25mm) -- (0.9,-0.25mm) node[left]{};	
		\node[] at (0.1,0) [below] {$0.1$};		
		\node[] at (0.9,0) [below] {$0.9$};
	\end{tikzpicture}
	\caption{Agents' utilities from their outside options as a function of their valuations, given $(0.1,0.9)$ join the decentralized market.}
	\label{outsideutilitiesapp}
\end{figure}

\subsection{Proof of the Obedience Principle}\label{proof:obed}

		\begin{proof}[Proof of Proposition \ref{prop:obed}]
		First, note that if some agents join the marketplace but the reported types in the marketplace does not satisfy the requirements for the operation of the announced mechanism, then, the marketplace shuts down and agents there gets 0 utility. Then, they would strictly prefer to be in the decentralized market since the expected payoff there is always positive. Thus, a different set of agents than the invited ones can only be in the marketplace if the marketplace is active. Moreover, notice that when the marketplace is active, the incentive compatibility of the mechanism ensures that if an agent joins there, the truthful revelation is better than misreporting one's type.
		
		Suppose the designer announced $\unt,\ovt$ (meaning she invited $[0,\unt]\cap [\ovt,1]$) and a strict subset of invited types, $[0,a]\cap [b,1]$ that satisfy the above requirements joined the mechanism (and reported truthfully, by incentive compatibility). For this to be an equilibrium, agents with valuations $a$ and $b$ must be indifferent between the two market. To see this, suppose the utilities $a$ or $b$ get from the mechanism is strictly higher than their expected payoffs from search. Then, the designer can increase the transfers to strictly increase the profit. Now suppose the expected payoff from search is strictly lower than the utility from the mechanism for $a$ ($b$). Note that both the expected payoff from search and the utilities the mechanism offer are continuous in agents' own valuations. Then, a positive measure of agents with valuations just above $a$ (below $b$) must also have been offered a higher utility in the mechanism than what they expect from search, which contradicts their joining the search market in an equilibrium. Thus, we conclude that in the equilibrium, the realized cutoffs in this potential equilibrium, $a$ and $b$ must be indifferent between two markets. We now write the utilities these agents expect from both markets.
		
		\begin{align*}
			u^d(a)&=\dfrac{p}{2}\left[\mathbb{E}[\theta| \theta\in [a,b]] - a\right] & = \dfrac{p}{2}\left[\mathbb{E}[\theta| \theta\in [\unt,\ovt]] - \unt\right] + \unt - a=u^m(a)\\
			u^d(b)&=\dfrac{p}{2}\left[b -\mathbb{E}[\theta| \theta\in [a,b]]\right] & = \dfrac{p}{2}\left[\ovt -\mathbb{E}[\theta| \theta\in [\unt,\ovt]]\right] + b - \ovt=u^m(b)\\
		\end{align*}
		
		where we calculate the expected payoff from the mechanism using the payoff equivalence: The mechanism would make $\unt$ and $\ovt$ indifferent between two markets and then have a slope of $-0.5$ for sellers and of $0.5$ for buyers.
		
		By adding these equations up, we end up with
		
		\begin{align*}
			\dfrac{p}{2}\left[b - a\right]  = & \dfrac{p}{2}\left[\ovt - \unt\right] + b+\unt - a-\ovt\\
			\dfrac{p}{2}\left[b - a-\ovt +\unt\right] = & b+\unt - a-\ovt\\
			p=&2,
		\end{align*}
		
		which is of course impossible since $p$ is a probability. Thus, we conclude that there is no equilibrium where the set of agents in the mechanism is nonempty and different from the set of invited agents.
	\end{proof}

\subsection{Simplifying the Profit Function}\label{symprofit}

Here we simplify the profit function.

\begin{equation*}
	\Pi_{\unt,\ovt } = \int\limits_0^{\unt} t(\theta) f(\theta) d\theta + \int\limits_{\ovt}^1 t(\theta)f(\theta)d\theta.
\end{equation*}

We will study each integral separately. We start with the first one.

\begin{align}
	&\int\limits_0^{\unt} t(\theta) f(\theta) d\theta\\
	&= \int\limits_0^{\unt} \left[ \theta q(\theta) -u^m(0) - \int\limits_{0}^{\theta}q(x)dx \right]f(\theta)d\theta \\
	&= \int\limits_0^{\unt} \left[ \theta q(\theta) -u^m(0)\right]f(\theta)d\theta - \int\limits_0^{\unt}\int\limits_{0}^{\theta}q(x)f(\theta)dx d\theta\\
	&= \int\limits_0^{\unt} \left[ \theta q(\theta) -u^m(0)\right]f(\theta)d\theta - \int\limits_0^{\unt}\int\limits_{x}^{\unt}q(x)f(\theta)d\theta dx\\
	&= \int\limits_0^{\unt} \left[ \theta q(\theta) -u^m(0)\right]f(\theta)d\theta - \int\limits_0^{\unt} q(x) \int\limits_{x}^{\unt}f(\theta) d\theta dx \\
	&= \int\limits_0^{\unt} \left[ \theta q(\theta) -u^m(0)\right]f(\theta)d\theta - \int\limits_0^{\unt} q(x) (F(\unt)-F(x)) dx\\
	&= \int\limits_0^{\unt} \left[  -u^m(0) + \left( x - \dfrac{F(\unt)-F(x)}{f(x)} \right) q(x) \right]f(x)dx\\	
	&= \int\limits_0^{\unt} \left[  -u^m( \unt) + \int\limits_{0}^{\unt} q(y)dy + \left( x - \dfrac{F(\unt)-F(x)}{f(x)} \right) q(x) \right]f(x)dx \\
	&= \int\limits_0^{\unt} F(\unt) q(y)\dfrac{f(y)}{f(y)}dy + \int\limits_0^{\unt}\left[  -u^m( \unt)  + \left( x - \dfrac{F(\unt)-F(x)}{f(x)} \right) q(x) \right]f(x)dx\\
	&=\int\limits_0^{\unt}\left[  -u^m( \unt)  + \left( x + \dfrac{F(x)}{f(x)} \right) q(x) \right]f(x)dx \\
	&= -F(\unt)u^m(\unt) +  \int\limits_0^{\unt} \left[   \left( x + \dfrac{F(x)}{f(x)} \right) q(x) \right]f(x)dx
\end{align}

In line 4, we change the order of integration; in line 5, we isolate the inner integral by extracting the allocations out; in line 6, we replace the value of the inner integral; in line 7, we merge the sum back; in line 8, we replace the value of the utility of the lowest type; in line 9, we integrate out the information rent for these types; in line 10, we cancel the new double integral with the $-\unt q(x)$ as that integral turns out to be just the integral of $\unt q(x)$ by changing the order of integration as above. We follow the similar steps for the transfers from $[\ovt,1]$.

\begin{align*}
	&\int\limits_{\ovt }^1 t(\theta)f(\theta)d\theta\\
	&=  \int\limits_{\ovt }^1 \left[ \theta q(\theta) -u^m(\ovt ) - \int\limits_{\ovt }^{\theta}q(x)dx \right]f(\theta)d\theta\\
	&=  \int\limits_{\ovt }^1 \left[ \theta q(\theta) -u^m(\ovt )\right]f(\theta)d\theta - \int\limits_{\ovt }^1 \int\limits_{\ovt }^{\theta}q(x) f(\theta)dx d\theta\\
	&=  \int\limits_{\ovt }^1 \left[ \theta q(\theta) -u^m(\ovt )\right]f(\theta)d\theta - \int\limits_{\ovt }^1 \int\limits_{x}^{1}q(x) f(\theta) d\theta dx\\
	&=  \int\limits_{\ovt }^1 \left[ \theta q(\theta) -u^m(\ovt )\right]f(\theta)d\theta - \int\limits_{\ovt }^1q(x) \int\limits_{x}^{1}f(\theta) d\theta dx\\
	&=  \int\limits_{\ovt }^1 \left[ \theta q(\theta) -u^m(\ovt )\right]f(\theta)d\theta - \int\limits_{\ovt }^1 q(x) (1-F(x)) dx\\
	&=  \int\limits_{\ovt }^1 \left[ -u^m(\ovt ) + \left( x - \dfrac{1-F(x)}{f(x)} \right) q(x) \right]f(x)dx\\	
	&= -(1-F(\ovt ))u^m(\ovt ) + \int\limits_{\ovt }^1 \left[  \left(x- \frac{1-F(x)}{f(x)} \right) q(x) \right]f(x)dx
\end{align*}

\subsection{Individual Rationality and the Allocations}\label{app:lemma:ir}

\begin{proof}[Proof of Lemma \ref{simplealloc}]
Notice that just below $\unt$, the utility from the mechanism should have a left derivative below $-\frac{p}{2}$:

\begin{align*}
	u^m(\theta) = u^m(0) + \int_{0}^{\theta} q(x)dx &\geq u^d(\theta) \iff \\
	u^d(\unt) -\int_{0}^{\unt} q(x)dx + \int_{0}^{\theta} q(x)dx &\geq u^d(\theta) \iff \\
	u^d(\unt) -\int_{\theta}^{\unt} q(x)dx &\geq u^d(\theta) \iff \\
	p \left[ \dfrac{\int\limits_{\unt}^{\ovt} x f(x) dx }{2\left( F(\ovt) - F(\unt) \right)} -\dfrac{\unt}{2} \right] -\int_{\theta}^{\unt} q(x)dx &\geq p \left[ \dfrac{\int\limits_{\unt}^{\ovt} x f(x) dx }{2\left( F(\ovt) - F(\unt) \right)} -\dfrac{\theta}{2} \right] \iff \\
	p\dfrac{\theta-\unt}{2}  &\geq \int_{\theta}^{\unt} q(x)dx \iff \\	
	p\dfrac{\theta-\unt}{2}  &\geq u^m(\unt) - u^m(\theta).
\end{align*}

Since only possible allocations are $-1$, $0$, and $1$, this means we must have $q(\unt)=-1$. Moreover, due to monotonicity of the allocation, for each $\theta\in[0,\unt]$, this implies $q(\theta)=-1$. Following the same steps around $\ovt$ also shows that $q(\theta) = 1$ for each $\theta\in [\ovt,1]$. We note the observations we have made here in the following lemma.	
\end{proof}

\subsection{Profit from the simple equilibrium}\label{genprofit}

Using the Lemma \ref{simplealloc},

\begin{equation*}
	\int\limits_0^{\unt}C(x)q(x)f(x)dx +\int\limits^1_{\ovt}V(y)q(y)f(y)dy = \int\limits^1_{\ovt }V(y)f(y)dy -\int\limits_0^{\unt}C(x)f(x)dx
\end{equation*}

From the Appendix \ref{virtualsurplus}, we have

\begin{equation*}
	-\int\limits_0^{\unt}C(x)f(x)dx +\int\limits^1_{\ovt }V(y)f(y)dy = - \unt F(\unt) -\ovt F(\ovt ) +\ovt
\end{equation*}

Moreover, under a CRS matching function $M$ for the decentralized market, $p$ is independent of the segmentation of the market. Then, using the binding IR constraints, the compensations paid to the agents will be given by:

\begin{align*}
	& F(\unt) u^d(\unt) + (1-F(\ovt )) u^d(\ovt ) \\
	&= p\dfrac{F(\unt)}{2} \left[\dfrac{\int\limits_{\unt}^{\ovt} \theta f(\theta) d\theta }{ F(\ovt) - F(\unt)} -\unt \right] + p\dfrac{1-F(\ovt )}{2} \left[ \ovt  - \dfrac{  \int\limits_{\unt}^{\ovt } \theta f(\theta) d\theta }{ F(\ovt ) - F(\unt)} \right] \\
	&= p \dfrac{1}{2} \left( F(\unt) \left[ \dfrac{\int\limits_{\unt}^{\ovt } \theta f(\theta) d\theta }{ F(\ovt) - F(\unt)} -\unt \right] + (1-F(\ovt )) \left[ \ovt  -  \dfrac{  \int\limits_{\unt}^{\ovt } \theta f(\theta) d\theta}{ F(\ovt) - F(\unt)} \right] \right) \\
	&=  \dfrac{p}{2} \left( F(\unt)  E[\theta|\unt\leq \theta \leq \ovt ] -\unt F(\unt)  + \ovt (1-F(\ovt))  -  (1-F(\ovt)) E[\theta|\unt\leq \theta \leq \ovt ]  \right) \\
	&=  \dfrac{p}{2} \left( \left( F(\unt) -  (1-F(\ovt)) \right)  E[\theta|\unt\leq \theta \leq \ovt ] -\unt F(\unt)  + \ovt (1-F(\ovt))  \right)
\end{align*}

Thus, 

\begin{align*}
	& \Pi \\
	&=- \dfrac{p}{2} \left( \left( F(\unt) -  (1-F(\ovt )) \right)  E[\theta|\unt\leq \theta \leq \ovt ] -\unt F(\unt)  + \ovt (1-F(\ovt ))  \right)  +  \left[- \unt F(\unt) -\ovt F(\ovt ) +\ovt \right] \\
	&= \dfrac{1}{2} \left( (2-p) \left[ -\unt F(\unt)  + \ovt (1-F(\ovt )) \right]  - p\left[ F(\unt) -  (1-F(\ovt )) \right] E[\theta|\unt\leq \theta \leq \ovt ] \right)
\end{align*}

Notice that

\begin{align*}
	\mathbb{E}[\theta|\theta\in [\unt, \ovt]]&= \dfrac{\int_{\unt}^{\ovt} x f(x)dx}{F(\ovt)-F(\unt)}\\
	\dfrac{\partial\mathbb{E}[\theta|\theta\in [\unt, \ovt]]}{\partial \unt} &= \dfrac{-\unt f(\unt) [F(\ovt)-F(\unt)] + f(\unt)\left[ \int_{\unt}^{\ovt}xf(x)dx \right] }{[F(\ovt)-F(\unt)]^2}\\
	&=f(\unt)\dfrac{\int_{\unt}^{\ovt}xf(x)dx - \unt [F(\ovt)-F(\unt)]}{[F(\ovt)-F(\unt)]^2}\\
	&=f(\unt)\dfrac{\mathbb{E}[\theta|\theta\in [\unt, \ovt]] - \unt }{[F(\ovt)-F(\unt)]}>0.
\end{align*}

Then, using this,

\begin{align*}
	&2 \dfrac{\partial\Pi}{\partial \unt} \\
	&= - (2-p)F(\unt) - (2-p)\unt f(\unt) -p [f(\unt)] E[\theta|\unt\leq \theta \leq \ovt ] -p \left[ F(\unt) -  (1-F(\ovt )) \right] \dfrac{\partial\mathbb{E}[\theta|\theta\in [\unt, \ovt]]}{\partial \unt}
\end{align*}

For a feasible mechanism, we need $F(\unt) -  (1-F(\ovt ))\geq 0$. Thus, each term above is negative, and $\Pi$ is decreasing in $\unt$. Given that $\Pi$ is decreasing in $\unt$, $F(\unt) -  (1-F(\ovt ))\geq 0$ will bind in any equilibrium. Therefore,

\begin{align*}
	\Pi &= \dfrac{1}{2}  (2-p) \left[ -\unt F(\unt)  + \ovt (1-F(\ovt )) \right] \\
	&= \dfrac{2-p}{2}  \left[ -\unt F(\unt)  + \ovt (F(\unt)) \right]\\
	&= \dfrac{2-p}{2}  \left[ F(\unt) [\ovt -\unt]  \right]
\end{align*}

Notice that for interior values with $\ovt>\unt$, $\Pi>0$. Thus, positive profit is feasible and will be achieved in the equilibrium.

\subsection{Existence of Simple Equilibrium}\label{app:simpleexistence}

\begin{proof}[Proof of Theorem \ref{2cutofftheorem}]
	
	In the Appendix \ref{genprofit}, I show that the profit from any pair of thresholds, $\unt$ and $\ovt$, can be written as follows.
	\begin{equation*}
		\Pi= \dfrac{1}{2} \left( [2-p] \left[ -\unt F(\unt)  + \ovt (1-F(\ovt )) \right]  - p\left[ F(\unt) -  (1-F(\ovt )) \right] E[\theta|\unt\leq \theta \leq \ovt ] \right)
	\end{equation*}
	Our constraints are $0\leq \unt \leq \ovt  \leq 1 $ and $F(\unt) \geq 1-F(\ovt )$. First, we cannot have $\ovt  < m(F)$  where $m(F)$ is the median of $F$ since that would require $ 0.5 >  F(\ovt ) \geq  F(\unt) \geq 1-F(\ovt )\geq 0.5$. 
	Second, Appendix \ref{genprofit} also shows that $\Pi$ is strictly decreasing in $\unt$. Then, $F(\unt) \geq 1-F(\ovt )$ should bind. Third, $\unt\leq m(F) \leq \ovt $ as a result of previous two observations. Thus, for a strictly increasing $F$, this is essentially a single parameter problem with a continuous objective and a compact domain. Hence, it has a solution by Weierstrass Theorem. Moreover, the solution is interior in the sense that the constraints $\unt\leq m(F) \leq \ovt $ do not bind. If they did, then the profit would be 0 while it is possible to achieve a positive profit when they do not bind. (Appendix \ref{genprofit} shows this in more detail as well.) When the feasibility condition binds, the expectation term disappears from the profit. Then, the profit in this equilibrium is equal to the profit when there was no search market, times a constant, $\frac{2-p}{2}$. Thus, the solution must still have $\mathcal{C}(\unt) = \mathcal{V}(\ovt)$, as shown in the Theorem \ref{thm:monagora}.
	
	The mechanism is constructed so that no agent has any unilateral profitable deviation. However, there are no profitable bilateral deviations either. First, if a pair agents from the centralized marketplace deviate to the decentralized market, they still have to go through the random search process. Thus, they still expect the same payoff they would get if they deviated unilaterally. Then, there is no profitable deviation in that direction. On the other hand, deviations to the mechanism also cannot be profitable. Either (i) agents report their valuations truthfully and the marketplace shuts down and agents get 0 utility or (ii) they lie, and get a lower payoff than the mechanism promises them due to incentive compatibility. In the latter case, their payoffs cannot be greater than the payoffs they get from the decentralized market; they were offered lower utilities to begin with and by lying about their valuations, they obtain even lower utilities.
	
\end{proof}

\subsubsection{The Mechanism That Induces The Simple Equilibrium}\label{simplemechproof}

\begin{proposition}\label{simplemech}
	In the simple equilibrium, the mechanism the designer offers has the following allocation and transfer rules:
	
	\begin{align*}
		q(\theta)&=\begin{cases}
			-1 &\text{ if }  \theta\leq \dfrac{p \mathbb{E}[x|x\in[\unt,\ovt]] +(2-p)\unt }{2}\\
			0 & \text{ if }  \dfrac{p \mathbb{E}[x|x\in[\unt,\ovt]] +(2-p)\unt }{2}\leq \theta\leq \dfrac{p \mathbb{E}[x|x\in[\unt,\ovt]] +(2-p)\ovt }{2}\\
			1 & \text{ if } \theta\geq \dfrac{p \mathbb{E}[x|x\in[\unt,\ovt]] +(2-p)\ovt }{2}\\
		\end{cases}\\
		t(\theta)&=\begin{cases}
			-\dfrac{p \mathbb{E}[x|x\in[\unt,\ovt]] +(2-p)\unt }{2} & \text{ if }  \theta\leq \dfrac{p \mathbb{E}[x|x\in[\unt,\ovt]] +(2-p)\unt }{2}\\
			0 & \text{ if } \dfrac{p \mathbb{E}[x|x\in[\unt,\ovt]] +(2-p)\unt }{2}\leq \theta\leq \dfrac{p \mathbb{E}[x|x\in[\unt,\ovt]] +(2-p)\ovt }{2}\\
			\dfrac{p \mathbb{E}[x|x\in[\unt,\ovt]] +(2-p)\ovt }{2} & \text{ if } \theta\geq \dfrac{p \mathbb{E}[x|x\in[\unt,\ovt]] +(2-p)\ovt }{2}\\
		\end{cases}
	\end{align*}

\end{proposition}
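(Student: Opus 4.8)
The plan is to pin down the two bid--ask prices from the on-path binding individual rationality constraints, and then to verify that the extension of the rule to the excluded interval $(\unt,\ovt)$ is a legitimate mechanism under which every excluded type strictly prefers the search market. This makes the displayed mechanism an implementation of the segmentation from Theorem \ref{2cutofftheorem}.

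First I would invoke Lemma \ref{simplealloc}: in any simple equilibrium with cutoffs $\unt,\ovt$ one has $q(\theta)=-1$ on $[0,\unt]$ and $q(\theta)=1$ on $[\ovt,1]$. Since $q$ is constant on each of these sets, incentive compatibility forces the transfer to be constant there, so the mechanism is equivalent to posting a selling price $p_s:=-t(\theta)$ for $\theta\le\unt$ and a buying price $p_b:=t(\theta)$ for $\theta\ge\ovt$, with net utilities $u^m(\theta)=p_s-\theta$ for sellers and $u^m(\theta)=\theta-p_b$ for buyers. I would then use that IR binds exactly at $\unt$ and $\ovt$ (established in the Individual Rationality discussion), i.e. $u^m(\unt)=u^d(\unt)$ and $u^m(\ovt)=u^d(\ovt)$, together with the boundary formulas of Appendix \ref{app:payoffs}, namely $u^d(\unt)=\tfrac{p}{2}\big(\mathbb{E}[x|x\in[\unt,\ovt]]-\unt\big)$ and $u^d(\ovt)=\tfrac{p}{2}\big(\ovt-\mathbb{E}[x|x\in[\unt,\ovt]]\big)$. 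Solving $p_s-\unt=u^d(\unt)$ and $\ovt-p_b=u^d(\ovt)$ yields exactly
\[
p_s=\frac{p\,\mathbb{E}[x|x\in[\unt,\ovt]]+(2-p)\unt}{2},\qquad p_b=\frac{p\,\mathbb{E}[x|x\in[\unt,\ovt]]+(2-p)\ovt}{2},
\]
which are the thresholds appearing in the allocation and transfer rules. I would then define the rule on $(\unt,\ovt)$ by letting each type sell at $p_s$, abstain, or buy at $p_b$ according to which is best, producing the displayed step allocation; monotonicity holds since $p_b-p_s=\tfrac{(2-p)}{2}(\ovt-\unt)>0$ and since $\mathbb{E}[x|x\in[\unt,\ovt]]\in(\unt,\ovt)$ forces $\unt\le p_s\le p_b\le\ovt$, after which incentive compatibility is automatic for a monotone allocation with constant transfers on each region.

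The substantive step, which I expect to be the main obstacle, is showing $u^m(\theta)<u^d(\theta)$ for every $\theta\in(\unt,\ovt)$, so the mechanism genuinely excludes these types. Here I would use the slope bound $|(u^d)'|\le\tfrac{p}{2}\le\tfrac12$ from Appendix \ref{app:payoffs}. On $(\unt,p_s)$ the function $g(\theta):=u^d(\theta)-(p_s-\theta)$ satisfies $g(\unt)=0$ and $g'(\theta)=(u^d)'(\theta)+1>0$, hence $g>0$; symmetrically on $(p_b,\ovt)$ the function $u^d(\theta)-(\theta-p_b)$ vanishes at $\ovt$ and has negative derivative, hence is positive for $\theta<\ovt$; and on the no-trade region $(p_s,p_b)$ one needs only $u^d(\theta)>0$, which holds because an interior type of a nondegenerate interval earns strictly positive expected search surplus whenever $p>0$. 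Running the same slope comparison on $[0,\unt]$ and $[\ovt,1]$ confirms that the targeted sellers and buyers (weakly) prefer the mechanism, so the stated allocation and transfers do implement the simple equilibrium, while aggregate feasibility is inherited from the cutoff condition $F(\unt)=1-F(\ovt)$.
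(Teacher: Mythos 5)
Your proposal is correct and follows essentially the same route as the paper's proof: derive the bid--ask prices $p_s$ and $p_b$ from the binding IR constraints at $\unt$ and $\ovt$ (the paper does this through the envelope representation $t(\theta)=\theta q(\theta)-u^m(\cdot)-\int q$, which is the same computation), and then use the bound $|\partial u^d/\partial\theta|\le p/2$ together with $u^m=u^d$ at the cutoffs to conclude that the two utility curves cannot cross again, so types in $(\unt,\ovt)$ strictly prefer search. Your piecewise slope comparison on $(\unt,p_s)$, $(p_s,p_b)$, and $(p_b,\ovt)$ is just a more explicit spelling-out of the paper's single-crossing argument, and all the intermediate claims (monotonicity of $q$, $\unt\le p_s\le p_b\le\ovt$, positivity of $u^d$ on the no-trade region) check out.
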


\begin{proof}[Proof of Proposition \ref{simplemech}]
	
	First, we compute the transfers for the agents who join the mechanism using the binding IR constraints for $\unt$ and $\ovt$.
	
	In the simple equilibrium we construct, for an agent with $\theta\in [0,\unt]$,
	
	\begin{align*}
		t(\theta)&=\theta q(\theta) - u^m(0) -\int_{0}^{\theta}q(x)dx\\
		&= \theta (-1) - u^m(0) - \theta (-1) \\
		&= -u^m(0) = -u^d(\unt) + \int_{0}^{\unt} q(x)dx = -u^d(\unt) + \int_{0}^{\unt} (-1)dx\\
		&=-\dfrac{p}{2} \left[ E[x|x\in [\unt,\ovt]] - \unt \right] - \unt\\
		&=-\dfrac{p}{2}  E[x|x\in [\unt,\ovt]] - \unt\dfrac{2-p}{2}\\
	\end{align*}
	
	Similarly, we can compute the transfer of agents with $\theta\in [\ovt,1]$,

	\begin{align*}
		t(\theta)&=\theta q(\theta) -u^m(\ovt) - \int_{\ovt}^{\theta}q(x)dx\\
		&= \theta (1) - u^m(\ovt) - 1(\theta - \ovt ) \\
		&= \ovt -u^m(\ovt) = \ovt -\dfrac{p}{2} \left[ \ovt - E[x|x\in [\unt,\ovt]] \right] \\
		&=  \dfrac{p}{2}  E[x|x\in [\unt,\ovt]] + \ovt\dfrac{2-p}{2}\\
	\end{align*}
	
	Knowing these, the designer can offer $-1$ allocation to all agents whose valuations are below $\frac{p \mathbb{E}[x|x\in[\unt,\ovt]] +(2-p)\unt }{2}$. Similarly, for agents with valuations above $\frac{p \mathbb{E}[x|x\in[\unt,\ovt]] +(2-p)\ovt }{2}$, $1$ unit of allocation can be offered. In between, they are not offered any trade. This allocation is clearly increasing. Moreover, accompanied by the transfers $-\frac{p \mathbb{E}[x|x\in[\unt,\ovt]] +(2-p)\unt }{2}$ for agents with negative allocations and $\frac{p \mathbb{E}[x|x\in[\unt,\ovt]] +(2-p)\ovt }{2}$ for agents with positive allocations, agents with values in $(\unt,\ovt)$ would strictly prefer the search market. To see this, note that these agents have utilities with slopes $-1$ until $u^m$ hits 0, then it is constant at 0 and then it has the slope $1$, after $\frac{p \mathbb{E}[x|x\in[\unt,\ovt]] +(2-p)\ovt }{2}$. Moreover, $u^m$ and $u^d$ are equal at $\unt$ and $\ovt$. Since the slope of $u^d$ is bounded between $-\dfrac{p}{2}$ and $\dfrac{p}{2}$, and $u^d$ is positive, $u^m$ and $u^d$ cannot cross each other at any point other than $\unt$ and $\ovt$. Thus, $u^d$ remains below $u^m$ for values in $(\unt,\ovt)$.

\end{proof}

\subsection{Simple Economics of Optimal Marketplaces}\label{app:simpecon}

Here is how Figure \ref{fig:simplecomplex} works. The aggregate compensations for buyers is equal to $ \frac{p}{2} q^*\times (\unt -\mathbb{E}_d) $ where $\mathbb{E}_d$ the expected valuation in the decentralized market, i.e., $\mathbb{E}_d=\mathbb{E}[\theta|\theta\in [\unt,\ovt]]$. Similarly, the aggregate compensations for sellers is equal to $ \frac{p}{2} q^*\times (\mathbb{E}_d - \unt) $. Since $\mathbb{E}_d$ is a point between $\unt$ and $\ovt$, this means total compensations will be equal to $\dfrac{p}{2}$ times the area of the rectangle between $\unt$ and $\ovt$ on the vertical axis and between $0$ and $q^*_C$; $(\ovt-\unt)\times q^*_C$.

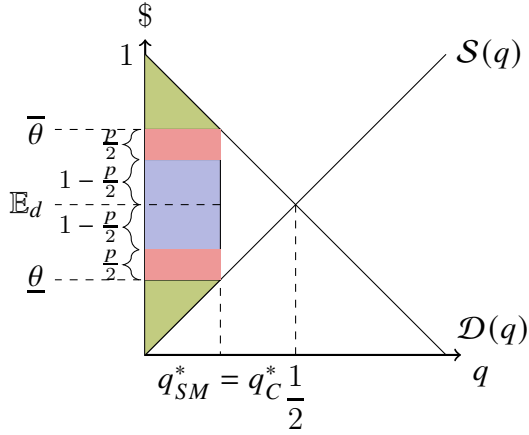
\begin{figure}[H]
	\centering
	\begin{tikzpicture}[scale=0.4]
		
		\draw[thick,<->] (0,10.5) node[above]{$\$$}--(0,0)--(10.5,0) node[below right]{$q$};
		
		\draw(0,0)--(10,10) node[right]{$\mathcal{S}(q)$};

		\draw(0,10)--(10,0) node[above right]{$\mathcal{D}(q)$};
		
		\draw[dashed](5,5)--(5,0) node [below] {$\dfrac{1}{2}$};

		\node[] at (0,10) [left] {$1$};

		\draw[dashed](2.5,7.5)--(2.5,0) node [below] {$q^*_{SM}=q^*_{C}$};
		
		\draw[dashed](2.5,7.5)--(-3,7.5) node [left] {$\ovt$};
		
		\draw[dashed](2.5,2.5)--(-3,2.5) node [left] {$\unt$};

		\draw[fill=solarizedViolet!50] (0,2.5) -- (2.5,2.5) -- (2.5,7.5) --(0,7.5);	
		
		\draw[fill=solarizedGreen!50] (0,10) -- (2.5,7.5) --(0,7.5);
		
		\draw[fill=solarizedGreen!50] (0,0) -- (2.5,2.5) --(0,2.5);
		
		\draw [name path=X] (0,7.5) to  (2.5,7.5) node[above left]{};
		
		\draw [name path=A] (0,6.5) to (2.5,6.5) node[above left]{};
		
		\tikzfillbetween[of=X and A]{solarizedRed!50};
		
		\draw [name path=Y] (0,2.5) to  (2.5,2.5) node[above left]{};
		
		\draw [name path=B] (0,3.5) to (2.5,3.5) node[above left]{};
		
		\tikzfillbetween[of=Y and B]{solarizedRed!50};
		
		\draw [decorate,decoration={brace,amplitude=6pt, mirror},xshift=-0.2cm,yshift=0pt]
		(0,7.5) -- (0,6.5) node [midway,left,xshift=-.1cm] {\footnotesize$\frac{p}{2}$}; 
		\draw [decorate,decoration={brace,amplitude=6pt},xshift=-0.2cm,yshift=0pt]
		(0,2.5) -- (0,3.5) node [midway,left,xshift=.-0.1cm] {\footnotesize$\frac{p}{2}$};

		\draw [decorate,decoration={brace,amplitude=6pt, mirror},xshift=-0.2cm,yshift=0pt]
		(0,6.5) -- (0,5) node [midway,left,xshift=-.1cm] {\footnotesize$1-\frac{p}{2}$}; 
		\draw [decorate,decoration={brace,amplitude=6pt, mirror},xshift=-0.2cm,yshift=0pt]
		(0,5) -- (0,3.5) node [midway,left,xshift=.-0.1cm] {\footnotesize$1-\frac{p}{2}$};

		\draw[dashed](2.5,5)--(-3,5) node [left] {$\mathbb{E}_d$};

	\end{tikzpicture}
\caption{The ratio of the compensations only depend on $p$ and not the distribution.}		
\label{fig:simplecomplex}
\end{figure}

\subsection{Efficiency of Coexistence}\label{app:efficiency}
\subsubsection{Under Uniform Distribution}\label{efficiencyproofuni}
\begin{proof}[Proof of Proposition \ref{efficiencypropuni}]
	
	Suppose everyone is in the search market. Then, the total welfare can be computed as follows:
	
	\begin{align*}
		\mathbb{E}[u^d(\theta)]&=p \int_0^1 [\theta \theta - (1-\theta)\theta]d\theta\\
		&=p \int_0^1 [2\theta^2 -\theta] d\theta\\
		&=p \left[ \dfrac{2\theta^3}{3} -\dfrac{\theta^2}{2} \right]_0^1 = \dfrac{p}{6}.
	\end{align*}
	
	Next, we compute the welfare created by the marketplace alone in the coexistence equilibrium. The welfare marketplace generates will be more than enough to exceed the total welfare of the pure search market, so we do not need to compute the welfare created by the search market in the coexistence.
	
	The profit function from the simple equilibrium under the uniform distribution is a constant times $\unt (\ovt-\unt)=\unt(1-2\unt)$ using the fact that the feasibility binds so that $\unt=1-\ovt$. This is maximized at $\unt=\frac{1}{4}$ and $\ovt=\frac{3}{4}$. Then, the welfare generated by the marketplace is
	
	\begin{equation*}
		\int_{0.75}^{1} \theta_b d\theta_b - \int_{0}^{0.25} \theta_s d\theta_s  =  \left[ \dfrac{\theta^2}{2} \right]_{0.75}^{1} - \left[ \dfrac{\theta^2}{2} \right]_{0}^{0.25} = \dfrac{3}{16}.
	\end{equation*}

The total welfare from the search market is $\dfrac{p}{6}\leq \dfrac{1}{6}$ for any matching function since the probability a meeting will be less than or equal to 1. Moreover, $\dfrac{3}{16}>\dfrac{1}{6}$. Thus, for any matching function, the coexistence equilibrium creates a welfare higher than the pure search market.
\end{proof}

\subsubsection{Under General Distribution}\label{efficiencyproof}
\begin{proof}[Proof of Proposition \ref{efficiencyprop}]
	
	For the pure search market, the total welfare created is given by
	
	\begin{align*}
		&\int_{0}^{1} \left[ pF(\theta)   \theta- p(1-F(\theta)) \theta  \right] f(\theta) d\theta \\
		=& p \int_{0}^{1} \theta \left[   2F(\theta) -1  \right] f(\theta) d\theta
	\end{align*}
	
	In the first line above, $pF(\theta)$ is the probability that the agent with the value $\theta$ meets with an agent with a value less than $\theta$, so she gets $\theta$ in the trade and $p(1-F(\theta))$ is the probability that she meets with an agent whose value is higher so she loses $\theta$. As with the uniform case, we ignore the transfers in the utilities as the transfers will cancel in the search market.
	
	In the coexistence equilibrium, the total welfare created in the search market is
	
	\begin{align*}
		&\int_{0}^{1} \left[ p\left[  \dfrac{F(\theta) - F(\unt)}{F(\ovt) - F(\unt)}  \right]  \theta - p\left[  \dfrac{F(\ovt)-F(\theta)}{F(\ovt) - F(\unt)}  \right] \theta \right] f(\theta) d\theta \\
		=& p \int_{\unt}^{\ovt} \theta \left[  \dfrac{2F(\theta) - F(\unt)-F(\ovt)}{F(\ovt) - F(\unt)}  \right] f(\theta) d\theta\\
		=& p \int_{\unt}^{\ovt} \theta \left[  \dfrac{2F(\theta) - 1}{F(\ovt) - F(\unt)}  \right] f(\theta) d\theta\\
	\end{align*}

	The welfare generated by the marketplace in the coexistence is
	
	\begin{align*}
	\int_{\ovt}^{1} x f(x)dx - \int_{0}^{\unt} x f(x)dx 
	\end{align*}

	\begin{align*}
		& p \int_{\unt}^{\ovt} \theta \left[  \dfrac{2F(\theta) - 1}{F(\ovt) - F(\unt)}  \right] f(\theta) d\theta +  \int_{\ovt}^{1} x f(x)dx - \int_{0}^{\unt} x f(x)dx  \geq p \int_{0}^{1} \theta \left[   2F(\theta) -1  \right] f(\theta) d\theta\\	
		\iff& p \int_{\unt}^{\ovt} \theta \left[  \dfrac{2F(\theta)  }{F(\ovt) - F(\unt)}  \right] f(\theta) d\theta + \int_{\ovt}^{1} x f(x)dx +  p \int_{0}^{1} \theta f(\theta) d\theta \\
		& \geq p \int_{\unt}^{\ovt} \left[  \dfrac{ \theta f(\theta) d\theta }{F(\ovt) - F(\unt)}  \right]  + p \int_{0}^{1} \theta \left[   2F(\theta)  \right] f(\theta) d\theta + \int_{0}^{\unt} x f(x)dx \\
		\iff& p \int_{\unt}^{\ovt} \theta \left[  \dfrac{2F(\theta)  }{F(\ovt) - F(\unt)}  \right] f(\theta) d\theta  +  (1+p) \int_{\ovt}^{1} \theta f(\theta) d\theta \\
		& \geq p \left[ \dfrac{2F(\unt)}{1-2F(\unt)} \right] \int_{\unt}^{\ovt}  \theta f(\theta) d\theta   + 2p \int_{0}^{1} \theta \left[   F(\theta)  \right] f(\theta) d\theta + (1-p) \int_{0}^{\unt} x f(x)dx \\		
	\end{align*}

In the first line above, either we have $$\int_{\unt}^{\ovt} \theta \left[  \dfrac{2F(\theta) - 1}{F(\ovt) - F(\unt)}  \right] f(\theta) d\theta \geq \int_{0}^{1} \theta \left[   2F(\theta) -1  \right] f(\theta) d\theta $$

in which case the inequality is satisfied for any $p$, since $\int_{\ovt}^{1} x f(x)dx - \int_{0}^{\unt} x f(x)dx \geq 0$ when $F(\unt)=1-F(\ovt)$ or

$$\int_{\unt}^{\ovt} \theta \left[  \dfrac{2F(\theta) - 1}{F(\ovt) - F(\unt)}  \right] f(\theta) d\theta <  \int_{0}^{1} \theta \left[   2F(\theta) -1  \right] f(\theta) d\theta. $$

Thus, to show that for any $p$, the coexistence is more efficient, it is enough to show it with $p=1$. 
	
\end{proof}

\subsection{Slope of Utilities from Search Market}\label{slope}

\begin{lemma}\label{slopelemma}
	Under any equilibrium, $-\dfrac{p}{2} \leq \dfrac{\partial u^d(\theta)}{\partial \theta} \leq \dfrac{p}{2}$ for each agent.
\end{lemma}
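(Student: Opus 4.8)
The plan is to write the expected search payoff of an arbitrary type $\theta$, for \emph{any} (closed) set of searchers $\Theta^d$, as an expected absolute difference and then differentiate. Conditional on meeting a partner --- an event of probability $p$ --- the partner is distributed according to $F$ restricted to $\Theta^d$ and normalized by the total measure $\mu = \mu(\Theta^d)$. Under Nash bargaining the higher-valued agent receives the good and the gains from trade $|x-\theta|$ are split evenly, so the agent's share against a partner of type $x$ is $|x-\theta|/2$. Hence
\begin{equation*}
	u^d(\theta) = \frac{p}{2\mu} \int_{\Theta^d} |x - \theta| \, dF(x).
\end{equation*}
This single expression covers every equilibrium, since it never uses that $\Theta^d$ is an interval; specializing to $\Theta^d = (\unt,\ovt)$ recovers the case-by-case formulas of Section~\ref{payoffsearch}.

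Next I would differentiate under the integral sign. The integrand $|x-\theta|$ is Lipschitz in $\theta$ with $\partial_\theta |x-\theta| = \operatorname{sgn}(\theta - x)$ off the diagonal, and because $F$ is continuous the set $\{x=\theta\}$ carries no mass; Leibniz's rule therefore gives
\begin{equation*}
	\frac{\partial u^d(\theta)}{\partial \theta} = \frac{p}{2\mu}\left[\mu^-(\theta) - \mu^+(\theta)\right],
\end{equation*}
where $\mu^-(\theta)=\mu(\{x\in\Theta^d : x<\theta\})$ and $\mu^+(\theta)=\mu(\{x\in\Theta^d : x>\theta\})$ are the measures of searchers below and above $\theta$. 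The derivative thus records how a marginal increase in $\theta$ helps in matches where the agent buys (partners below him) and hurts in matches where he sells (partners above him).

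Finally, both $\mu^-(\theta)$ and $\mu^+(\theta)$ are nonnegative and, by continuity of $F$, satisfy $\mu^-(\theta)+\mu^+(\theta)=\mu$, so $|\mu^-(\theta)-\mu^+(\theta)|\leq\mu$. Substituting yields $\bigl|\partial u^d(\theta)/\partial\theta\bigr| \leq \frac{p}{2\mu}\cdot\mu = \frac{p}{2}$, which is exactly the claimed two-sided bound. I expect the representation of the payoff as an expected absolute difference for a \emph{general} $\Theta^d$ to be where the work lies --- together with the routine justification of differentiating through the integral and handling the null mass at $x=\theta$ --- because once it is in place the bound reduces to the triangle inequality.
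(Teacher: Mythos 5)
Your proof is correct and follows essentially the same route as the paper: the appendix proof of this lemma likewise writes $u^d(\theta)$ for a general $\Theta^d$ as the $p/2$-weighted sum of buyer and seller surplus terms and differentiates to obtain $\frac{p}{2\mu(\Theta^d)}\left[\mathbb{P}[x<\theta,\,x\in\Theta^d]-\mathbb{P}[x>\theta,\,x\in\Theta^d]\right]$, from which the bound is immediate. Your single absolute-difference integral $\frac{p}{2\mu}\int_{\Theta^d}|x-\theta|\,dF(x)$ merely packages the paper's three cases (interior, below, above $\Theta^d$) into one expression; the substance is identical.
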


\begin{proof}
	
Suppose $\Theta^d$ is the set of agents who join the decentralized market, $\mu(\Theta^d)=\mathbb{P}[x\in \Theta^d ]$ their measure, and let $\theta\in Cov(\Theta^d)$. Then,

\begin{align*}
	u^d(\theta) &= \dfrac{p}{2}  \mathbb{P}[x>\theta| x\in \Theta^d ] \left[ \mathbb{E}[x| x>\theta, x\in \Theta^d ] - \theta \right]\\
	& +\dfrac{p}{2}  \mathbb{P}[x<\theta| x\in \Theta^d ] \left[ \theta- \mathbb{E}[x| x<\theta, x\in \Theta^d ] \right]\\
	&= \dfrac{p}{2}  \dfrac{\mathbb{P}[x>\theta, x\in \Theta^d ]}{\mu(\Theta^d)} \left[ \mathbb{E}[x| x>\theta, x\in \Theta^d ] - \theta \right]\\
	& +\dfrac{p}{2}  \dfrac{\mathbb{P}[x<\theta, x\in \Theta^d ]}{\mu(\Theta^d)} \left[ \theta- \mathbb{E}[x| x<\theta, x\in \Theta^d ] \right]\\
	&=\dfrac{p}{2 \mu(\Theta^d)} \left[ \int\limits_{\{x\in\Theta^d: x>\theta\}} x f(x)dx - \theta \mathbb{P}[x>\theta, x\in \Theta^d ] \right]\\
	&+\dfrac{p}{2\mu(\Theta^d)} \left[ \theta \mathbb{P}[x<\theta, x\in \Theta^d ] - \int\limits_{\{x\in\Theta^d: x<\theta\}} x f(x)dx \right]\\
	\dfrac{\partial u^d(\theta)}{\partial \theta} &= \dfrac{p}{2\mu(\Theta^d)} \left[ \mathbb{P}[x<\theta, x\in \Theta^d ] - \mathbb{P}[x>\theta, x\in \Theta^d ] \right]. 
\end{align*}

If $\theta\leq\theta'$  for each $\theta' \in \Theta^d$, then

\begin{align*}
	u^d(\theta) &= \dfrac{p}{2}  \mathbb{P}[x>\theta| x\in \Theta^d ] \left[ \mathbb{E}[x| x\in \Theta^d ] - \theta \right]\\
	&=\dfrac{p}{2} \left[ \int\limits_{\{x\in\Theta^d\}} \dfrac{x f(x)dx}{\mu(\Theta^d)} - \theta ] \right]\\
	\dfrac{\partial u^d(\theta)}{\partial \theta} &= -\dfrac{p}{2}.
\end{align*}

Finally, if $\theta\geq \theta'$  for each $\theta' \in \Theta^d$, then

\begin{align*}
	u^d(\theta) &= \dfrac{p}{2}  \mathbb{P}[x<\theta| x\in \Theta^d ] \left[ \theta - \mathbb{E}[x| x\in \Theta^d ]  \right]\\
	&=\dfrac{p}{2} \left[   \theta - \int\limits_{\{x\in\Theta^d\}} \dfrac{x f(x)dx}{\mu(\Theta^d)} ] \right]\\
	\dfrac{\partial u^d(\theta)}{\partial \theta} &= \dfrac{p}{2}.
\end{align*}
\end{proof}

\subsection{Unrestricted Mechanisms}\label{unrestricted}

\begin{proof}[Proof of Theorem \ref{unrestrictedthm}]
	We start with a simple observation: For the mechanism to make a positive profit, there has to be both agents who buy and sell at the marketplace. This means for a positive measure of agents, $u^m(\theta)\geq u^d(\theta)$ on both regions with $\dfrac{\partial u^m(\theta)}{\partial \theta}=1$ and $\dfrac{\partial u^m(\theta)}{\partial \theta}=-1$.
	
	We have shown in Appendix \ref{slope} that for an arbitrary segmentation of agents, the expected utility from search has a slope between $-0.5$ and $0.5$.
	
	Notice that if for an agent $\dfrac{\partial u^m(\theta)}{\partial \theta}=1$, then for each $\theta'>\theta$, $q(\theta')=1$ by the envelope condition and monotonicity of the allocation for an IC mechanism. Similarly, if $\dfrac{\partial u^m(\theta)}{\partial \theta}=-1$, then for each $\theta'<\theta$, $q(\theta')=-1$.
	
	Given this, if for $\theta$, $u^m(\theta)\geq u^d(\theta)$ and $\dfrac{\partial u^m(\theta)}{\partial \theta}=1$, then for each $\theta'>\theta$, $u^m(\theta')\geq u^d(\theta')$ and similarly for the sellers. Thus, let $\unt$ be the highest value such that $u^m(\theta)\geq u^d(\theta)$ and $q(\theta)=-1$ in the equilibrium. Similarly, let $\ovt$ be the lowest value such that $u^m(\theta)\geq u^d(\theta)$ and $q(\theta)=1$.
	
	There must be at least one type such that $u^m(\theta)=u^d(\theta)$. If not, either the utilities from search are above the utilities from the mechanism everywhere so that no one comes to the marketplace and the profit of the marketplace is zero or the utilities from the mechanism is strictly higher everywhere so everyone is in the mechanism and the mechanism can reduce the utilities until some IR constraint binds to strictly increase the profit.
	
	Next, we argue that it cannot be the case that $u^m$ and $u^d$ are only tangent at $\unt$ and $\ovt$, the utilities have to cross each other at these cutoffs: If they are only tangent but does not cross each other, then $\unt$ or $\ovt$ would have to be the point of a kink on $u^m$. Then, there are two cases: Either (i) every agent joins the mechanism or (ii) only one of the cutoffs is at a kink, and an interval of agents near the other cutoff join the search market. (i) In the former case, the outside option is zero for every agent so the mechanism does not compensate them. But then, they bilaterally deviate for a positive payoff. (ii) In the latter case, suppose there is a kink at $\ovt$. Then, if the search market is active, $u^d$ has to be increasing at $\ovt$ since everyone above it is in the mechanism so that an agent with the value $\ovt$ can only be a buyer in the search market. For there to be agents in the search market, $u^d$ should cross $u^m$ at a point $\theta$ such that $u^d$ is decreasing since it cannot cross $u^m$ on the part it is constant or has a slope of $1$ below $\ovt$. But if there is such a point, then $u^d$ would be increasing at $\theta$, since all agents in the search market will be below it as well, which shows this case is impossible as well. Thus, $u^m$ and $u^d$ cannot be tangent at $\unt $ and $\ovt$, they have to cross each other at these points.
	
	Then, due to the shape of the feasible utility functions ($u^m$ can have slopes $-1$, $0$, and $1$ in this order and $u^d$ is first decreasing and then increasing -with a potentially constant $0$ slope in the middle- with a slope that remains between $-\frac{1}{2}$ and $\frac{1}{2}$), either all agents with values in $[\unt, \ovt]$ join the search market or the flat part of the $u^m$ crosses $u^d$ twice again, in which case agents with values in $[\unt, a]$ and $[b,\ovt]$ join the search market for some $\unt<a<b<\ovt$ and agents with values in $a,b$ join the mechanism as well. Moreover, in the latter case, we need $F(a)-F(\unt)=F(\ovt)-F(b)$; otherwise the $u^d$ would be either strictly decreasing or strictly increasing for agents with values in $[a,b]$, in which case $u^d$ and $u^m$ would not cross at both $a$ and $b$, as this case requires. When $F(a)-F(\unt)=F(\ovt)-F(b)$, $u^d$ would be flat, as it can be seen from the slope we computed above. So, we can write the profit as follows where the case with $a=b$ corresponds to the situation where the flat part of $u^m$ does not cross $u^d$.
	
	\begin{align*}
		\Pi & = -F(\unt) u^d(\unt) - (1-F(\ovt)) u^d(\ovt) - (F(b)-F(a)) u^d(a) - \int_0^{\unt} C(x)f(x)dx +  \int_{\ovt}^1 V(x)f(x)dx
	\end{align*}
	
	Moreover, the constraints are $0\leq \unt \leq a\leq b\leq \ovt \leq 1$, $F(\unt)\geq 1-F(\ovt)$ and $F(a)-F(\unt)=F(\ovt)-F(b)$. 
	
	Let  $ \Theta^d = [\unt, a] \cup [b,\ovt]$. Then,
	
	\begin{align*}
		\Pi  = & -F(\unt) u^d(\unt) - (1-F(\ovt)) u^d(\ovt) - (F(b)-F(a)) u^d(a) - \int_0^{\unt} C(x)f(x)dx +  \int_{\ovt}^1 V(x)f(x)dx\\
		= & - \dfrac{pF(\unt)}{2} \left[ \int\limits_{\{x\in\Theta^d\}}   \dfrac{x f(x)dx}{\left[ F(\ovt)-F(b)+F(a)-F(\unt) \right]} - \unt  \right]\\
		& -  \dfrac{p(1-F(\ovt))}{2} \left[  \ovt - \int\limits_{\{x\in\Theta^d\}} \dfrac{x f(x)dx}{\left[ F(\ovt)-F(b)+F(a)-F(\unt) \right]} \right]\\
		& - \dfrac{p(F(b)-F(a))}{2\left[ F(\ovt)-F(b)+F(a)-F(\unt) \right]} \left[  \int\limits_{b}^{\ovt} x f(x)dx - \int\limits_{\unt}^{a} x f(x)dx \right] - \left[ - \underline{\theta}F(\underline{\theta}) -\overline{\theta}F(\overline{\theta}) +\overline{\theta} \right]\\
		= & \dfrac{p\left[  (1-F(\ovt) - F(\unt))\right] }{2} \left[  \int\limits_{\{x\in\Theta^d\}}  \dfrac{x f(x)dx}{\left[ F(\ovt)-F(b)+F(a)-F(\unt) \right]} \right]  + \dfrac{p(F(\unt )\unt  - (1-F(\ovt))\ovt)}{4}  \\
		& - \dfrac{p(F(b)-F(a))}{2\left[ F(\ovt)-F(b)+F(a)-F(\unt) \right]} \left[  \int\limits_{b}^{\ovt} x f(x)dx - \int\limits_{\unt}^{a} x f(x)dx \right] + \dfrac{1}{2}\left[ - \underline{\theta}F(\underline{\theta}) -\overline{\theta}F(\overline{\theta}) +\overline{\theta} \right]\\
		\dfrac{\partial \Pi}{\partial \unt} = & \dfrac{p\left[  (1-F(\ovt) - F(\unt))\right] }{2} \left[ \dfrac{-\unt f(\unt) \left[ F(\ovt)-F(b)+F(a)-F(\unt) \right] +f(\unt)  \int\limits_{\{x\in\Theta^d\}} xf(x)dx }{\left[ F(\ovt)-F(b)+F(a)-F(\unt) \right]^2} \right]\\
		& - \dfrac{pf(\unt)}{2} \left[  \int\limits_{\{x\in\Theta^d\}}  \dfrac{x f(x)dx}{\left[ F(\ovt)-F(b)+F(a)-F(\unt) \right]} \right] + \dfrac{p(\unt f(\unt) + F(\unt))}{2}\\
		& - \dfrac{p(F(b)-F(a))}{2\left[ F(\ovt)-F(b)+F(a)-F(\unt) \right]} \left[ \unt f(\unt) \right]\\
		& - \dfrac{pf(\unt)(F(b)-F(a))}{2\left[ F(\ovt)-F(b)+F(a)-F(\unt) \right]^2} \left[  \int\limits_{b}^{\ovt} x f(x)dx - \int\limits_{\unt}^{a} x f(x)dx \right] - (\unt f(\unt) + F(\unt))\\
		= &\dfrac{pf(\unt)\left[  (1-F(\ovt) - F(\unt))\right] }{2} \left[ \dfrac{   \mathbb{E}[x|x\in \Theta^d] - \unt }{\left[ F(\ovt)-F(b)+F(a)-F(\unt) \right]} \right]\\
		& - \dfrac{pf(\unt)}{2} \left[  \int\limits_{\{x\in\Theta^d\}}  \dfrac{x f(x)dx}{\left[ F(\ovt)-F(b)+F(a)-F(\unt) \right]} \right] - \dfrac{p(F(b)-F(a))}{2\left[ F(\ovt)-F(b)+F(a)-F(\unt) \right]} \left[ \unt f(\unt) \right]\\
		& - \dfrac{pf(\unt)(F(b)-F(a))}{2\left[ F(\ovt)-F(b)+F(a)-F(\unt) \right]^2} \left[  \int\limits_{b}^{\ovt} x f(x)dx - \int\limits_{\unt}^{a} x f(x)dx \right] - [\unt f(\unt) + F(\unt)]\left[ 1 - \dfrac{p}{2} \right]\\
	\end{align*}
	
	By noting that $(1-F(\ovt) - F(\unt))\leq 0$ by feasibility, each term in the above sum is negative and hence $\Pi$ is decreasing in $\unt$. We will use this to show that the feasibility binds.

	Next, we consider the Lagrangian problem to study the KKT conditions. Here, we will initially relax the problem by relaxing the equality constraint $F(a)-F(\unt)=F(\ovt)-F(b)$ to $F(a)+F(b)\geq F(\unt)+F(\ovt)$ but focus on solutions where it binds. From here, we are going to learn that the feasibility constraint must bind. We will use this to observe that $a=b$ should hold in the equilibrium, which will reduce the unrestricted equilibrium to a simple equilibrium.
	
	\begin{align*}
		\mathcal{L}(\unt, a, b, \ovt, \lambda) &= \Pi + \lambda_1 (F(\unt)+F(\ovt)-1) + \lambda_2 (1-\ovt) + \lambda_3 (\ovt -b) + \lambda_4 (b-a) + \lambda_5 (a-\unt) + \lambda_6 \unt\\
		+&\lambda_7(F(a)+F(b) - F(\unt)-F(\ovt))\\
		&\dfrac{\partial \mathcal{L}}{\partial \unt} = 	\dfrac{\partial \Pi}{\partial \unt} + \lambda_1 f(\unt) -\lambda_5 + \lambda_6 - \lambda_7f(\unt)=0\\
		&\dfrac{\partial \mathcal{L}}{\partial a} = 	\dfrac{\partial \Pi}{\partial a} -\lambda_4 + \lambda_5 + \lambda_7f(a)=0\\
		&\dfrac{\partial \mathcal{L}}{\partial b} = 	\dfrac{\partial \Pi}{\partial b} -\lambda_3 + \lambda_4 +  \lambda_7f(b)=0\\
		&\dfrac{\partial \mathcal{L}}{\partial \ovt} = 	\dfrac{\partial \Pi}{\partial \ovt} + \lambda_1 f(\ovt) -\lambda_3 + \lambda_4 -  \lambda_7f(\ovt)=0\\
		&\lambda_i\geq 0\\
		&\lambda_1 (F(\unt)+F(\ovt)-1)=0\\
		&\lambda_2 (1-\ovt)=0\\
		&\lambda_3 (\ovt -b)=0\\
		&\lambda_4 (b-a)=0\\
		&\lambda_5 (a-\unt)=0\\
		&\lambda_6 \unt=0\\
		&\lambda_7(F(a)+F(b) - F(\unt)-F(\ovt))=0.
	\end{align*}
	
	First, we note that for $\Pi>0$, we need $1>\ovt$. Moreover, for $1>\ovt$, we need $\unt>0$ by feasibility. Then, we have $\lambda_2=\lambda_6=0$ by complementary slackness conditions.
	
	Remember that for $\unt>0$, we have $\dfrac{\partial \Pi}{\partial \unt}<0$. Then, since $\lambda_6=0$ and $\lambda_5,\lambda_7\geq 0$, for $\dfrac{\partial \mathcal{L}}{\partial \unt} =0$, we need $\lambda_1>0$. By complementary slackness, this implies the feasibility constraint must bind.
	
	Then, the profit function becomes:
	
	\begin{align*}
		\Pi&=- \dfrac{p(F(b)-F(a))}{2\left[ F(\ovt)-F(b)+F(a)-F(\unt) \right]} \left[  \int\limits_{b}^{\ovt} x f(x)dx - \int\limits_{\unt}^{a} x f(x)dx \right] + \dfrac{2-p}{2}\left[ - \underline{\theta}F(\underline{\theta}) -\overline{\theta}F(\overline{\theta}) +\overline{\theta} \right].
	\end{align*}

	Next we are going to argue that in any solution to the above problem with a positive profit, we must have $a=b$. Suppose $(\unt, a,b,\ovt)$ maximizes $\Pi$ and  $b>a$. Remember that we reject any solution that does not satisfy $F(a)-F(\unt)=F(\ovt)-F(b)$, since this is an equilibrium requirement. Then, the differences of integrals in the above equation is nonnegative. Moreover, it is strictly positive if $\Pi>0$:
	
	Notice that for $\Pi>0$, we need $\ovt>\unt$. If $\ovt=\unt$, then it must be the case that $\ovt=a=b=\unt=F^{-1}(0.5)$ and then we can verify that $\Pi=0$. $\ovt>\unt$ implies $\ovt>b$ and $a>\unt$ because (i) we need $F(a)-F(\unt)=F(\ovt)-F(b)$ in the equilibrium and (ii) $\ovt=b>a=\unt$ cannot happen in the equilibrium as shown before stating the Lagrangian problem. But when we have $\ovt>b\geq a >\unt$ and $F(a)-F(\unt)=F(\ovt)-F(b)$, we have:
	\begin{align*}
		\left[  \int\limits_{b}^{\ovt} x f(x)dx - \int\limits_{\unt}^{a} x f(x)dx \right] &= \left[  (F(\ovt)-F(b))\int\limits_{b}^{\ovt} \dfrac{x f(x)dx}{(F(\ovt)-F(b))} - (F(a)-F(\unt))\int\limits_{\unt}^{a} \dfrac{x f(x)dx}{(F(a)-F(\unt))} \right]\\
		&=(F(\ovt)-F(b)) \mathbb{E}[x|x\in[b,\ovt]] - (F(a)-F(\unt)) \mathbb{E}[x|x\in[\unt,a]]\\
		&=(F(\ovt)-F(b)) \left[\mathbb{E}[x|x\in[b,\ovt]] - \mathbb{E}[x|x\in[\unt,a]]\right]>0.
	\end{align*}
	
	Then, we must have $a=b$, since this does not effect the virtual surplus but minimizes the cost. Moreover, it must be the case that $F(a)=F(b)=F^{-1}(\frac{1}{2})$ since we need $F(a)-F(\unt)=F(\ovt)-F(b)$ and the feasibility binds. Thus, we have reduced the unrestricted equilibrium to a simple equilibrium.
\end{proof}

\subsection{Uniqueness of the Equilibrium}\label{app:unique}

\begin{proof}[Proof of Proposition \ref{prop:unique}]
	
	\begin{enumerate}
		
		\item Suppose all agents join the centralized market. Then, everyone's outside option is 0, since no agent trades in the decentralized market. Thus, the marketplace must be operating the baseline mechanism (when it operated without the competition from the decentralized market) to maximize the profit. Then, agents who do not get to trade get 0 utility. However, any two agent can profitably deviate to the decentralized market: They almost surely have different valuations and as long as they have different valuations, they have a positive surplus to share. Thus, it is not a best response for them to stay in the marketplace. Moreover, there is a continuum of such pairs.
		
		\item Suppose all agents joined the decentralized market. We are going to show that there exist $a\leq \unt$ and $b\geq \ovt$ such that agents in $[0,a)\cup(b,1]$ strictly prefer joining the centralized marketplace instead.
		
		When all agents join the decentralized market,

		\begin{align*}
			u^d(0)&=\dfrac{p}{2}\mathbb{E}\left[ \theta | \theta\in [0,1] \right] \\
			u^d(1)&=\dfrac{p}{2}\left[ 1- \mathbb{E}\left[ \theta | \theta\in [0,1] \right]  \right]
		\end{align*}
		
		Next, we compute the utilities the centralized marketplace promises them.
		
		\begin{align*}
			u^m(0)&=\dfrac{p}{2}\left[\mathbb{E}\left[ \theta | \theta\in [\unt,\ovt] \right] - \unt \right] + \unt \\
			u^m(1)&=\dfrac{p}{2}\left[ \ovt- \mathbb{E}\left[ \theta | \theta\in [\unt,\ovt] \right]  \right] + 1-\ovt
		\end{align*}
		
		Now, we check whether these two types can profitably deviate from the decentralized market to the marketplace:
		
		\begin{align*}
			u^m(0) + u^m(1) &>   u^d(0) + u^d(1) \\	
			\iff \dfrac{p}{2}\left[ - \unt \right] + \unt + \dfrac{p}{2}\left[ \ovt \right] + 1-\ovt &>  \dfrac{p}{2}\left[ 1\right]\\
			\iff \unt + 1-\ovt &>  \dfrac{p}{2}\left[ 1 - \ovt + \unt \right]\\
			\iff 1 &> \dfrac{p}{2}
		\end{align*}
		
		Since both $u^m$ and $u^d$ are continuous functions, the above inequality show that there are indeed $a,b$ such that pairs of agents with values in $[0,a)$ and in $(b,1]$ strictly prefer the mechanism to the decentralized market. Again, there is a continuum of pairs that would benefit from deviating. Thus, all agents joining the decentralized market cannot be an equilibrium either.
		
		\item Earlier, I have shown that the only possible coexistence equilibrium is the simple equilibrium, and that a simple equilibrium exists. Now, I have further shown that there is no equilibrium without coexistence. Thus, the unique equilibrium is the simple equilibrium.
		
	\end{enumerate}

\end{proof}

\subsection{Double Auction}\label{app:doubleauctionmain}

To compute all agents' expected payoffs from the decentralized market, we need to know the optimal bids of agents whose valuations lie outside $(\unt,\ovt)$. For agents with values below $\theta\leq \unt$, if they join the decentralized market and get matched to someone, the best response is to bid $b(\unt)$ and for agents with values above $\ovt$, the best response is to bid $b(\ovt)$. I show this in two steps. First, I show that for any agent, the optimal bid must lie in $[b(\unt),b(\ovt)]$. Then, I show that for agents with values less than $\unt$ the optimal bid is $b(\unt)$ while for agents with values above $\ovt$, the optimal bid is $b(\ovt)$. This is stated in the next lemma and proved in the Appendix \ref{app:cornerbid}.

\begin{lemma}\label{lem:cornerbid}
	For each $\theta\in [0,\unt]$, $b(\theta)=b(\unt)$ and for each $\theta\in [\ovt,1]$, $b(\theta)=b(\ovt)$.
\end{lemma}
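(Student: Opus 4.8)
The plan is to follow the two-step structure already announced: first confine the optimal bid to the interval $[b(\unt),b(\ovt)]$ spanned by the opponents' equilibrium bids, and then pin down the corner within that interval by a sign argument on the derivative of the expected payoff. Throughout I use that the opponent's type is drawn from $G$ on $[\unt,\ovt]$ and that the opponent plays the increasing equilibrium bid $b(\cdot)$, so the deviating agent's win/lose outcome is monotone in his own bid.

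First I would write the expected payoff of a deviating agent with valuation $\theta$ who submits a bid $\beta$. Since he buys against opponents whose type lies below $\phi(\beta):=b^{-1}(\beta)$ and sells otherwise,
\begin{equation*}
U(\beta;\theta)=\int_{\unt}^{\phi(\beta)}\Bigl[\theta-\tfrac{\beta+b(x)}{2}\Bigr]\,dG(x)+\int_{\phi(\beta)}^{\ovt}\Bigl[\tfrac{\beta+b(x)}{2}-\theta\Bigr]\,dG(x).
\end{equation*}
For $\beta>b(\ovt)$ the agent wins against every opponent and $U$ reduces to an integral whose $\beta$-coefficient is $-\tfrac12<0$, so it strictly decreases; for $\beta<b(\unt)$ the agent loses against every opponent and $U$ strictly increases in $\beta$. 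Hence the optimum must lie in $[b(\unt),b(\ovt)]$, which is Step~1.

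For Step~2 I would differentiate $U$ on $(b(\unt),b(\ovt))$. The boundary terms at $x=\phi(\beta)$ combine the marginal trade at its own price ($b(\phi(\beta))=\beta$), leaving
\begin{equation*}
U'(\beta)=2\,[\theta-\beta]\,g(\phi(\beta))\,\phi'(\beta)+\tfrac12-G(\phi(\beta)).
\end{equation*}
The key is to eliminate the first term using the equilibrium characterization of $b$: the first-order condition that makes an interior type $y$ bid $b(y)$ is precisely $\tfrac{2[y-b(y)]g(y)}{b'(y)}=G(y)-\tfrac12$, which I would verify directly from the given closed form for $b$ (the derivative of its numerator equals $[G(\theta)-\tfrac12]^2$). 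Writing $y=\phi(\beta)$ and substituting $\phi'(\beta)=1/b'(y)$ collapses the expression to
\begin{equation*}
U'(\beta)=\bigl[G(y)-\tfrac12\bigr]\,\frac{\theta-y}{\,y-b(y)\,}.
\end{equation*}
Because $y-b(y)$ and $G(y)-\tfrac12$ always share the same sign (bidders above the median shade down, those below shade up), the factor $\tfrac{G(y)-1/2}{y-b(y)}$ is positive, so $\operatorname{sign}U'(\beta)=\operatorname{sign}(\theta-y)$. For $\theta\le\unt<y$ this is negative throughout, so $U$ is strictly decreasing and the constrained optimum is the left endpoint $b(\unt)$; for $\theta\ge\ovt>y$ it is positive, so the optimum is $b(\ovt)$, exactly the claim.

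The main obstacle I anticipate is the bookkeeping in the differentiation and, more delicately, the behavior at $y=G^{-1}(\tfrac12)$, where both $G(y)-\tfrac12$ and $y-b(y)$ vanish and the simplified quotient is only formally defined. There I would argue directly from the unsimplified $U'(\beta)=2[\theta-\beta]g(\phi)\phi'+\tfrac12-G(\phi)$, checking that $b'$ stays finite and nonzero at the median (a short local expansion gives $b'(G^{-1}(\tfrac12))=\tfrac23$), so that $U'$ is continuous and the sign conclusion extends across that point. Verifying the first-order-condition identity from the closed-form $b$ and confirming the sign of $y-b(y)$ are the only genuinely computational pieces; everything else is the monotonicity argument above.
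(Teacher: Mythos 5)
Your proof is correct. Step 1 coincides with the paper's: both observe that any bid outside $[b(\unt),b(\ovt)]$ leaves the agent on one side of every trade, so the expected payoff is monotone in the bid on that region and the optimum is pushed back into the interval. Step 2, however, takes a genuinely different route. The paper avoids differentiation entirely: it invokes revealed preference for the boundary types --- type $\unt$ weakly prefers $b(\unt)$ to any $b(\theta')$ because $b$ is an equilibrium of the auction among $[\unt,\ovt]$ --- and then notes that replacing $\unt$ by a lower valuation $\theta$ only improves the comparison (the payoff gap gains the nonnegative term $2(\unt-\theta)G(\theta')$), so the same corner bid remains optimal, and symmetrically at $\ovt$. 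You instead differentiate the deviation payoff in the bid, verify the interior first-order condition $2[y-b(y)]g(y)/b'(y)=G(y)-\tfrac12$ from the closed form of $b$, and collapse $U'(\beta)$ to $[G(y)-\tfrac12]\,\tfrac{\theta-y}{y-b(y)}$, whose sign is that of $\theta-y$; the computations you outline (the FOC identity, $b'=\tfrac23$ and $b(y)=y$ at the median, and the fact that $y-b(y)$ shares the sign of $G(y)-\tfrac12$) all check out against the bid function given in the paper. The paper's route is shorter, needs neither differentiability nor the closed form of $b$, and sidesteps the median singularity you must patch by continuity; your route costs that extra bookkeeping but yields strict monotonicity of the deviation payoff on $(b(\unt),b(\ovt))$, hence strict and unique optimality of the corner bid rather than mere weak optimality.
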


Agents' utilities from the decentralized market:

Suppose $\theta\in [\unt,\ovt]$. If the agent has the lower value, she has the lower bid, since the bidding function is monotone. Then, the agent gives up her endowment but gets paid. If the agent has the higher value, then, she has the higher bid so she gets the other's endowment and pays for it. Thus, expected payoff is

\begin{align*}
	u^{da}(\theta)&=p G(\theta)\left[ \theta - \int\limits_{\unt}^{\theta} \dfrac{\frac{1}{2}[b(x)+b(\theta)]g(x)dx}{G(\theta)} \right]
	+ p (1-G(\theta))\left[ \int\limits_{\theta}^{\ovt} \dfrac{\frac{1}{2}[b(x)+b(\theta)]g(x)dx}{1-G(\theta)} - \theta \right]\\
	&=p\theta[2G(\theta)-1] + \dfrac{p}{2}\int\limits_{\theta}^{\ovt}[b(x)+b(\theta)]g(x)dx - \dfrac{p}{2} \int\limits_{\unt}^{\theta} [b(x)+b(\theta)]g(x)dx\\
	&=p\theta[2G(\theta)-1] + \dfrac{p}{2}(1-2G(\theta))b(\theta) + \dfrac{p}{2}\int\limits_{\theta}^{\ovt}b(x)g(x)dx - \dfrac{p}{2} \int\limits_{\unt}^{\theta} b(x)g(x)dx
\end{align*}

If $\theta \leq \unt$, then, as above argument shows, the optimal bid is $b(\unt)$ and the expected payoff from the decentralized market is:

\begin{equation*}
	u^{da}(\theta)=p \left[ \int\limits_{\unt}^{\ovt} \frac{1}{2}[b(\unt)+b(x)]g(x)dx - \theta \right]
\end{equation*}

Similarly, agents with $\theta \geq \ovt$ bid $b(\ovt)$ and get the expected payoff:

\begin{equation*}
	u^{da}(\theta)=p \left[ \theta - \int\limits_{\unt}^{\ovt} \frac{1}{2}[b(\ovt)+b(x)]g(x)dx \right]
\end{equation*}

Next, we compute $u^{da}(\unt)$ and $u^{da}(\ovt)$ by using the formulas for $b(\cdot)$, as the IR constraints of $\unt$ and $\ovt$ will again play an important role in the equilibrium. The details can be found in the Appendix \ref{app:da-bindingir} but here is the end result:

\begin{align*}
	u^{da}(\unt) & =  \frac{p}{2}\left[ - \unt + 4\int\limits^{G^{-1}(\frac{1}{2})}_{\unt} \left[ G(x) - \frac{1}{2} \right]^2 dx + \int\limits_{\unt}^{\ovt} b(x)g(x)dx \right]\\
	u^{da}(\ovt) & =  \frac{p}{2}\left[  \ovt + 4 \int\limits_{G^{-1}(\frac{1}{2})}^{\ovt} \left[ G(x) - \frac{1}{2} \right]^2 dx - \int\limits_{\unt}^{\ovt} b(x)g(x)dx \right]
\end{align*}

In the simple equilibrium, we are going to make the IR constraints of $\unt$ and $\ovt$ bind, as otherwise decreasing the payment until they bind increases the profit. For these cutoffs to work, we need the slope of the utility from the decentralized market, $\frac{\partial u^{da}(\theta)}{\partial \theta}$ for types below $\unt$ to be greater than $-1$, since $-1$ is the allocation they will be offered in the mechanism and we want $u^{da}$ to be less than $u^m$ on this region. Moreover, we need the slope of $u^{da}$ to be greater than $-1$ around $\unt$ and the slope should be increasing (thus the utility function should be convex). Finally, we need the slope of the utility from the decentralized market to be less than $1$ for agents with values above $\ovt$.

\begin{lemma}\label{lem:da-slope}
	\begin{equation*}
		\dfrac{\partial u^{da}(\theta)}{\partial \theta} = \begin{cases}
			-p &\text{ if } \theta\leq \unt\\
			p (G(\theta)-1) &\text{ if } \unt\leq\theta\leq \ovt\\
			p &\text{ if } \theta\geq \ovt\\
		\end{cases}
	\end{equation*}
\end{lemma}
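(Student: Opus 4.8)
The plan is to obtain all three branches by direct differentiation of the expected-payoff expressions for $u^{da}$ derived immediately above the statement, disposing of the two boundary regions first (where the claim is essentially immediate) and then treating the interior region, which carries all of the work.

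For $\theta\le\unt$ the displayed payoff is $u^{da}(\theta)=p\big[\int_{\unt}^{\ovt}\tfrac12[b(\unt)+b(x)]g(x)\,dx-\theta\big]$. By Lemma \ref{lem:cornerbid} a type below $\unt$ bids the frozen value $b(\unt)$, so the bracketed integral is a constant in $\theta$; differentiating annihilates it and leaves slope $-p$. Symmetrically, for $\theta\ge\ovt$ the payoff is $u^{da}(\theta)=p\big[\theta-\int_{\unt}^{\ovt}\tfrac12[b(\ovt)+b(x)]g(x)\,dx\big]$, whose integral is again $\theta$-independent, so the slope is $p$. Both boundary cases use only Lemma \ref{lem:cornerbid} and linearity.

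For the interior $\theta\in[\unt,\ovt]$ I would differentiate $u^{da}(\theta)=p\theta[2G(\theta)-1]+\tfrac{p}{2}(1-2G(\theta))b(\theta)+\tfrac{p}{2}\int_{\theta}^{\ovt}b(x)g(x)\,dx-\tfrac{p}{2}\int_{\unt}^{\theta}b(x)g(x)\,dx$ term by term. The Leibniz rule collapses the two integrals into $-p\,b(\theta)g(\theta)$; the product $\tfrac{p}{2}(1-2G(\theta))b(\theta)$ yields a $b'(\theta)$ piece together with $-p\,g(\theta)b(\theta)$; and the leading term contributes both a piece involving $G(\theta)$ and a term $2p\theta g(\theta)$. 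The decisive step is to substitute the equilibrium bid $b(\theta)=\theta-\big(\int_{G^{-1}(1/2)}^{\theta}[G(x)-\tfrac12]^2\,dx\big)\big/[G(\theta)-\tfrac12]^2$ together with its derivative, which I would evaluate through $H(\theta):=\int_{G^{-1}(1/2)}^{\theta}[G(x)-\tfrac12]^2\,dx$ using $H'(\theta)=[G(\theta)-\tfrac12]^2$. After writing $\theta-b(\theta)$ and $b'(\theta)$ over the common denominator $[G(\theta)-\tfrac12]^2$, the \emph{strategic} terms -- those carrying $b(\theta)$, $b'(\theta)$, and the payment-sensitivity factor $2p\theta g(\theta)$ -- telescope against one another, leaving only the direct valuation effect, which is the slope recorded in the statement.

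The main obstacle is precisely this telescoping: it is the place where the fact that $b(\cdot)$ satisfies the double-auction first-order condition is used, so that every contribution describing how the agent's own payment responds to her bid cancels and only her marginal valuation of the good survives. I would organize the bookkeeping by grouping the $g(\theta)$-weighted terms before substituting the closed form for $b'(\theta)$, making the cancellation transparent rather than attempting it all at once. As a closing sanity check I would verify that the level of $u^{da}$ is continuous at the cutoffs $\unt$ and $\ovt$, so that the three branches fit together into the claimed piecewise derivative.
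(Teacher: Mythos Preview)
Your proposal is correct and follows essentially the same approach as the paper: the boundary branches are dispatched by noting the integrals are constant in $\theta$ (via Lemma~\ref{lem:cornerbid}), and the interior branch is handled by term-by-term differentiation followed by substitution of the closed forms for $\theta-b(\theta)$ and $b'(\theta)$, after which the strategic terms cancel and only $p(2G(\theta)-1)$ would seem to remain. Your envelope-theorem interpretation of the cancellation is a nice gloss that the paper does not make explicit; otherwise the two arguments coincide step for step.
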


Proof can be found in the Appendix \ref{app:da-slope}.

Since this is between $-1$ and $1$, for agents with values in $[\unt,\ovt]$, the designer can indeed offer lower utilities to these agents. One way of doing this would be offering the allocation $-1$ for agents between $\unt$ and $G^{-1}(\frac{1}{2})$ and allocation $1$ for agents between $G^{-1}(\frac{1}{2})$ and $\ovt$. This would make sure these agents are offered utilities lower than their expected payoff from the decentralized market since $u^m(\unt)=u^{da}(\unt)$ and $u^m(\ovt)=u^{da}(\ovt)$. (This may offer utilities below zero for some agents. The designer may not be concerned about this, since these agents are not wanted anyway. However, if the designer wishes the mechanism to offer nonnegative utilities, this can be achieved by flattening the utility when it reaches zero, as the Proposition \ref{simplemech} does.)

Now, we look at the profit function. As before, it is equal to

\begin{align*}\hspace*{-0.5in}
	\Pi_{\unt,\ovt} & = \mathbb{P}[\theta\in[0,\unt]] \mathbb{E}[t(\theta)|\theta\in[0,\unt]] + \mathbb{P}[\theta\in[\ovt,1]] \mathbb{E}[t(\theta)|\theta\in[\ovt,1]] \\
	&= \int\limits_0^{\unt} t(\theta) f(\theta) d\theta + \int\limits_{\ovt}^1 t(\theta)f(\theta)d\theta\\
	&= -F(\unt)u^d(\unt) -(1-F(\ovt ))u^d(\ovt ) +  \int\limits_0^{\unt}    \mathcal{C}(x) q(x) f(x)dx + \int\limits_{\ovt }^1   \mathcal{V}(x) q(x) f(x)dx\\
	&= -F(\unt)u^d(\unt) -(1-F(\ovt ))u^d(\ovt ) -  \int\limits_0^{\unt}  \mathcal{C}(x) f(x)dx +  \int\limits_{\ovt }^1  \mathcal{V}(x) f(x)dx
\end{align*}

First two terms are the compensations for agents to join the centralized marketplace, while the last two terms are the total virtual surplus. We know expression for the total surplus from before because that part is unchanged:

\begin{align*}
	-   \int\limits_0^{\unt}  \mathcal{C}(x) f(x)dx + \int\limits_{\ovt}^1  \mathcal{V}(x) f(x)dx = - \unt F(\unt) + \ovt(1-F(\ovt)) .
\end{align*}

Next, we study the compensations, since now they will be different from what we had for the Nash bargaining.

\begin{align*}\hspace*{-0.5in}
	&F(\unt)u^d(\unt) + (1-F(\ovt ))u^d(\ovt )\\
	=&F(\unt) p \left[  \frac{1}{2}\left[ b(\unt) + \int\limits_{\unt}^{\ovt} b(x)g(x)dx\right] - \unt \right] +(1-F(\ovt))p \left[ \ovt - \frac{1}{2} \left[ b(\ovt) + \int\limits_{\unt}^{\ovt} b(x)g(x)dx \right]   \right]\\
	=&\dfrac{p}{2} \left[ F(\unt)   \left[ b(\unt) + \int\limits_{\unt}^{\ovt} b(x)g(x)dx -2\unt  \right]  +(1-F(\ovt))\left[ 2\ovt -  b(\ovt) - \int\limits_{\unt}^{\ovt} b(x)g(x)dx \right]  \right] \\
	=&p \left[ - \unt F(\unt) + \ovt(1-F(\ovt)) \right]\\
	+&\dfrac{p}{2} \left[ \left[ F(\unt) +F(\ovt) -1\right] \int\limits_{\unt}^{\ovt} b(x)g(x)dx +F(\unt )b(\unt)  -(1-F(\ovt))  b(\ovt)  \right] \\
\end{align*}

Remember that the optimal bidding strategy is given by

\begin{equation*}
	b(\theta)=\theta - \dfrac{\int\limits_{G^{-1}(\frac{1}{2})}^{\theta} [G(x)-\dfrac{1}{2}]^2 dx }{[G(\theta)-\dfrac{1}{2}]^2}
\end{equation*}

with $G(\theta)=\dfrac{F(x)-F(\unt)}{F(\ovt)-F(\unt)}$ on $[\unt,\ovt]$ and $G^{-1}(\frac{1}{2})=F^{-1}\left( \frac{F(\unt)+F(\ovt)}{2} \right)$. Although it is relatively easy to show that a simple equilibrium exists with arbitrary distributions, the general solution to the profit maximization problem is too complicated to provide some useful comparative statics. Hence, I focus on the uniform distribution, $U[0,1]$ from here on to be able to find a closed form solution to the problem above.

Using the uniform distribution, with some algebra (see Appendix \ref{app:da-unibids}), we can show that

\begin{align*}
	b(\theta) &= \dfrac{\ovt+\unt+4\theta}{6},\\
	\int\limits_{\unt}^{\ovt} b(x)g(x)dx &= \dfrac{\ovt+\unt}{2}.
\end{align*}

Now, we can plug these back into the expression for the profit to see that it is decreasing in $\unt$. Thus, the feasibility must bind, which means $\unt=1-\ovt$. Using this, we simplify the profit further and obtain the following simple expression for the profit. (The derivations can be followed in Appendix \ref{app:da-profit}.)

\begin{align*}
	\Pi_{\unt,\ovt} & =\dfrac{6-5p}{6} \left[\unt(\ovt-\unt) \right] = \dfrac{6-5p}{6} \Pi^M
\end{align*}

Clearly, this problem has an interior solution, which is the same as the solution of the problem of the marketplace when it operated on its own: The profits in two cases are equal up to a constant multiplier. Thus, almost everything we have seen under the Nash bargaining hold here with the uniform distribution, with the exception of the change ratio of the profit.

\subsubsection{Optimal Bids for Agents in the Marketplace}\label{app:cornerbid}
	
	\begin{proof}[Proof of Lemma \ref{lem:cornerbid}]
		Step 1: Bidding below $b(\unt)$ can never be optimal: For any bid $b\leq b(\unt)$, the agent sells her endowment with certainty but get paid less than she would get if she bid $b(\unt)$. Mathematically, the expected utility of an agent who bids $b\leq b(\unt)$ is given by:
		
		\begin{equation*}
			\dfrac{1}{2}\int\limits_{\unt}^{\ovt} [b+b(x)]g(x)dx - \theta
		\end{equation*}
		
		Thus, bids strictly below $b(\unt)$ cannot be optimal.
		
		Similarly, bids strictly above $b(\ovt)$ cannot be optimal either. In that case, the agent's expected payoff would be
		
		\begin{equation*}
			\theta -	\dfrac{1}{2}\int\limits_{\unt}^{\ovt} [b+b(x)]g(x)dx
		\end{equation*}
		
		Hence, for each $\theta\in [0,\unt]\cup [\ovt,1]$, the optimal bid must be the bid of some type joins the decentralized market, that is: $b(\theta)\in[b(\unt),b(\ovt)]$.
		
		Step 2: Let us first define the following notation: If an agent bids $b(\theta')$, then the expected price for selling is $p_s(\theta')=\frac{1}{2}\int\limits_{\unt}^{\theta'} \frac{[b(\theta')+b(x)]g(x)dx}{1-G(\theta')}$ and the expected price for buying is $p_b(\theta')=\frac{1}{2}\int\limits_{\theta'}^{\ovt} \frac{[b(\theta')+b(x)]g(x)dx}{G(\theta')}$.
		
		Since the best response of agent with value $\unt$ is $b(\unt)$, her expected payoff from this bid should be higher than any other $b(\theta')$ by revealed preference. Then,
		
		\begin{align*}
			p_s(\unt) - \unt &\geq (1-G(\theta')) \left[ p_s(\theta') - \unt \right] + G(\theta') \left[ \unt - p_b(\theta')  \right]\\
			&=\unt [2G(\theta')-1] + (1-G(\theta')) p_s(\theta') - G(\theta') p_b(\theta')\\
			\iff 	p_s(\unt)- (1-G(\theta')) p_s(\theta') + G(\theta') p_b(\theta')  &\geq 2\unt G(\theta')
		\end{align*}

		Suppose $\theta \leq \unt$. We want to show that bidding $b(\unt)$ gives a higher payoff than any other type's bid $b(\theta')$:
		
		\begin{align*}
			p_s(\unt) - \theta &\geq (1-G(\theta')) \left[ p_s(\theta') -\theta \right] + G(\theta') \left[ \unt - p_b(\theta')  \right]\\
			&=\theta [2G(\theta')-1] + (1-G(\theta')) p_s(\theta') - G(\theta') p_b(\theta')\\
			p_s(\unt)- (1-G(\theta')) p_s(\theta') + G(\theta') p_b(\theta')  &\geq 2\theta G(\theta')
		\end{align*}
		
		But this is true since $p_s(\unt)- (1-G(\theta')) p_s(\theta') + G(\theta') p_b(\theta')  \geq 2\unt G(\theta') \geq 2\theta G(\theta')$ where the first inequality follows from the revealed preference argument above and the second one follows from $\unt\geq \theta$ and $G(\theta')\geq 0$.

		Similarly, the best response of an agent with value $\unt$ is $b(\ovt)$. Thus,
		
		\begin{align*}
			\ovt - p_b(\ovt) &\geq (1-G(\theta')) \left[ p_s(\theta') -\ovt \right] + G(\theta') \left[ \ovt - p_b(\theta')  \right]\\
			&=\ovt [2G(\theta')-1] + (1-G(\theta')) p_s(\theta') - G(\theta') p_b(\theta')\\
			\iff 2\ovt (1-G(\theta'))   &\geq p_b(\ovt) + (1-G(\theta')) p_s(\theta') - G(\theta') p_b(\theta')
		\end{align*}

		Suppose $\theta \geq \ovt$. In this case, we want to show that bidding $b(\ovt)$ gives a higher payoff than any other type's bid $b(\theta')$:
		
		\begin{align*}
			\theta - p_b(\ovt) &\geq (1-G(\theta')) \left[ p_s(\theta') -\theta \right] + G(\theta') \left[ \theta - p_b(\theta')  \right]\\
			&=\theta [2G(\theta')-1] + (1-G(\theta')) p_s(\theta') - G(\theta') p_b(\theta')\\
			\iff 2\theta (1-G(\theta'))   &\geq p_b(\ovt) + (1-G(\theta')) p_s(\theta') - G(\theta') p_b(\theta')
		\end{align*}
		
		Again, this is true since $ 2\theta (1-G(\theta'))   \geq 2\ovt (1-G(\theta'))   \geq p_b(\ovt) + (1-G(\theta')) p_s(\theta') - G(\theta') p_b(\theta')$ where the first inequality again follows from the revealed preference argument above and the second one follows from $\theta \geq \ovt$ and $1-G(\theta')\geq 0$.
	\end{proof}

\subsubsection{Binding IR constraints}\label{app:da-bindingir}
	
	\begin{align*}
		u^{da}(\unt) & = p \left[ \int\limits_{\unt}^{\ovt} \frac{1}{2}[b(\unt)+b(x)]g(x)dx - \unt \right] \\
		& =	p \left[  \frac{1}{2} \int\limits_{\unt}^{\ovt} [b(\unt)+b(x)]g(x)dx - \unt \right]\\
		& =	p \left[  \frac{1}{2}\left[ b(\unt) + \int\limits_{\unt}^{\ovt} b(x)g(x)dx\right] - \unt \right]\\
		& =	p \left[  \frac{1}{2}\left[ \unt - \dfrac{\int\limits_{G^{-1}(\frac{1}{2})}^{\unt} \left[ G(x) - \frac{1}{2} \right]^2 dx}{\left[ G(\unt) - \frac{1}{2} \right]^2} + \int\limits_{\unt}^{\ovt} b(x)g(x)dx\right] - \unt \right]\\
		& =	 \frac{p}{2}\left[ - \unt + 4\int\limits^{G^{-1}(\frac{1}{2})}_{\unt} \left[ G(x) - \frac{1}{2} \right]^2 dx + \int\limits_{\unt}^{\ovt} b(x)g(x)dx \right]
	\end{align*}
	
	Similarly, we obtain,
	
	\begin{align*}
		u^{da}(\ovt) & = p \left[ \ovt - \int\limits_{\unt}^{\ovt} \frac{1}{2}[b(\ovt)+b(x)]g(x)dx  \right] \\
		& = p \left[ \ovt - \frac{1}{2} \left[ b(\ovt) + \int\limits_{\unt}^{\ovt} b(x)g(x)dx \right]   \right] \\
		& = p \left[ \ovt - \frac{1}{2} \left[ \ovt - \dfrac{\int\limits_{G^{-1}(\frac{1}{2})}^{\ovt} \left[ G(x) - \frac{1}{2} \right]^2 dx}{\left[ G(\ovt) - \frac{1}{2} \right]^2}  + \int\limits_{\unt}^{\ovt} b(x)g(x)dx \right]   \right] \\
		& =	 \frac{p}{2}\left[  \ovt + 4 \int\limits_{G^{-1}(\frac{1}{2})}^{\ovt} \left[ G(x) - \frac{1}{2} \right]^2 dx - \int\limits_{\unt}^{\ovt} b(x)g(x)dx \right]
	\end{align*}

\subsubsection{Slope of Utilities from the Double Auction}\label{app:da-slope}
	
	\begin{proof}[Proof of Lemma \ref{lem:da-slope}]
		It is easy to verify that for agents with values less than $\unt$, $\frac{\partial u^{da}(\theta)}{\partial \theta}=-p\geq -1$ and for agents with values above $\ovt$, $\frac{\partial u^{da}(\theta)}{\partial \theta}=p\leq 1$. Next we show that $\frac{\partial u^{da}(\theta)}{\partial \theta}$ is greater than $-1$ for $\unt$ and less than $1$ for $\ovt$.
		
		First, we need the derivative of the bidding function:
		
		\begin{align*}
			b(\theta)&=\theta - \dfrac{\int\limits_{G^{-1}(\frac{1}{2})}^{\theta} [G(x)-\dfrac{1}{2}]^2 dx }{[G(\theta)-\dfrac{1}{2}]^2}\\
			b'(\theta)&=1 - \dfrac{ [G(\theta)-\dfrac{1}{2}]^4 -2g(\theta)(G(\theta)-\dfrac{1}{2}) \int\limits_{G^{-1}(\frac{1}{2})}^{\theta} [G(x)-\dfrac{1}{2}]^2 dx }{[G(\theta)-\dfrac{1}{2}]^4}\\
			&=\dfrac{ 2g(\theta)(G(\theta)-\dfrac{1}{2}) \int\limits_{G^{-1}(\frac{1}{2})}^{\theta} [G(x)-\dfrac{1}{2}]^2 dx }{[G(\theta)-\dfrac{1}{2}]^4}\\
			&= 2g(\theta) \dfrac{  \int\limits_{G^{-1}(\frac{1}{2})}^{\theta} [G(x)-\dfrac{1}{2}]^2 dx }{[G(\theta)-\dfrac{1}{2}]^3}
		\end{align*}
		
		\begin{align*}
			u^{da}(\theta)&=p\theta[2G(\theta)-1] + \dfrac{p}{2}(1-2G(\theta))b(\theta) + \dfrac{p}{2}\int\limits_{\theta}^{\ovt}b(x)g(x)dx - \dfrac{p}{2} \int\limits_{\unt}^{\theta} b(x)g(x)dx
		\end{align*}
		
		\begin{align*}
			\dfrac{\partial u^{da}(\theta)}{\partial \theta} &= p\left[ (2G(\theta) -1) + 2\theta g(\theta) \right] + \dfrac{p}{2} \left[ -2g(\theta)b(\theta) + (1-2G(\theta)) b'(\theta) \right] -\dfrac{p}{2}2b(\theta)g(\theta)\\
			&= p (2G(\theta) -1) + 2 p g(\theta)(\theta - b(\theta))  + \dfrac{p}{2} \left[ (1-2G(\theta)) b'(\theta) \right]\\
			&= p\left[ G(\theta) - 1 + 2 g(\theta)(\theta - b(\theta)) + \dfrac{1}{2} \left[ (1-2G(\theta)) b'(\theta) \right] \right]   \\
			&= p\left[ G(\theta) - 1 + 2g(\theta)\left[ \theta - \theta + \dfrac{\int\limits_{G^{-1}(\frac{1}{2})}^{\theta} [G(x)-\dfrac{1}{2}]^2 dx }{[G(\theta)-\dfrac{1}{2}]^2} \right] + \dfrac{1}{2}  (1-2G(\theta)) b'(\theta)  \right]   \\
			&= p\left[ G(\theta) - 1 + 2 g(\theta) \dfrac{\int\limits_{G^{-1}(\frac{1}{2})}^{\theta} [G(x)-\dfrac{1}{2}]^2 dx }{[G(\theta)-\dfrac{1}{2}]^2}  + \dfrac{1}{2}  (1-2G(\theta)) \left[ 2 g(\theta) \dfrac{\int\limits_{G^{-1}(\frac{1}{2})}^{\theta} [G(x)-\dfrac{1}{2}]^2 dx }{[G(\theta)-\dfrac{1}{2}]^3} \right] \right]   \\
			&= p\left[ G(\theta) - 1 + 2 g(\theta) \dfrac{\int\limits_{G^{-1}(\frac{1}{2})}^{\theta} [G(x)-\dfrac{1}{2}]^2 dx }{[G(\theta)-\dfrac{1}{2}]^2}  -  2 g(\theta) \dfrac{\int\limits_{G^{-1}(\frac{1}{2})}^{\theta} [G(x)-\dfrac{1}{2}]^2 dx }{[G(\theta)-\dfrac{1}{2}]^2}  \right]   \\
			&= p\left[ G(\theta) - 1  \right]   \\
		\end{align*}
	\end{proof}

\subsubsection{Bids with Uniform Distribution}\label{app:da-unibids}

\begin{align*}
	b(\theta)&=\theta - \dfrac{\int\limits_{G^{-1}(\frac{1}{2})}^{\theta} [G(x)-\dfrac{1}{2}]^2 dx }{[G(\theta)-\dfrac{1}{2}]^2} &=& \theta - \dfrac{\int\limits_{\frac{\unt+\ovt}{2}}^{\theta} \left[\dfrac{x-\unt}{\ovt-\unt}-\dfrac{1}{2}\right]^2 dx }{\left[\dfrac{\theta-\unt}{\ovt-\unt}-\dfrac{1}{2}\right]^2}\\
	&= \theta - \dfrac{\int\limits_{\frac{\unt+\ovt}{2}}^{\theta} \left[\dfrac{2x-\unt-\ovt}{2(\ovt-\unt)}\right]^2 dx }{\left[\dfrac{2\theta-\unt-\ovt}{2(\ovt-\unt)}\right]^2} &=& \theta - \dfrac{\int\limits_{\frac{\unt+\ovt}{2}}^{\theta} \left[2x-\unt-\ovt\right]^2 dx }{\left[2\theta-\unt-\ovt\right]^2}\\	
	&= \theta - \dfrac{ \left[ \frac{1}{2\times 3} \left[2x-\unt-\ovt\right]^3 \right]_{\frac{\unt+\ovt}{2}}^{\theta} }{\left[2\theta-\unt-\ovt\right]^2} &=& \theta + \frac{1}{6} \dfrac{  \left[2\theta-\unt-\ovt\right]^3  }{\left[2\theta-\unt-\ovt\right]^2}\\	
	&= \theta - \frac{2\theta-\unt-\ovt }{6}   &=& \frac{4\theta+\unt+\ovt }{6}
\end{align*}

Next, we compute another expression from the profit function:

\begin{align*}
	&\int\limits_{\unt}^{\ovt} b(x)g(x)dx=\int\limits_{\unt}^{\ovt} \dfrac{4\theta+\unt+\ovt }{6} \dfrac{1}{\ovt -\unt}dx =\dfrac{1}{6(\ovt-\unt)} \int\limits_{\unt}^{\ovt} (4\theta+\unt+\ovt) dx\\
	=& \dfrac{1}{6(\ovt-\unt)} \left[ 2 \ovt^2 -2\unt^2 +(\ovt+\unt)(\ovt-\unt) \right] = \dfrac{1}{6(\ovt-\unt)} \left[ 3(\ovt+\unt)(\ovt-\unt) \right] = \dfrac{\ovt+\unt}{2}
\end{align*}

\subsubsection{Profit from Simple Equilibrium under Double Auction}\label{app:da-profit}

\begin{align*}\hspace*{-0.5in}
	\Pi_{\unt,\ovt} & =  -   \int\limits_0^{\unt}  \mathcal{C}(x) f(x)dx +  \int\limits_{\ovt }^1  \mathcal{V}(x) f(x)dx -F(\unt)u^d(\unt) -(1-F(\ovt ))u^d(\ovt ) \\
	&=  \left[ - \unt F(\unt) + \ovt(1-F(\ovt)) \right] - p \left[ - \unt F(\unt) + \ovt(1-F(\ovt)) \right]\\
	-&\dfrac{p}{2} \left[ \left[ F(\unt) +F(\ovt) -1\right] \int\limits_{\unt}^{\ovt} b(x)g(x)dx +F(\unt )b(\unt)  -(1-F(\ovt))  b(\ovt)  \right]\\
	&= (1-p) \left[ - \unt F(\unt) + \ovt(1-F(\ovt)) \right]\\
	-&\dfrac{p}{2} \left[ \left[ F(\unt) +F(\ovt) -1\right] \int\limits_{\unt}^{\ovt} b(x)g(x)dx +F(\unt )b(\unt)  -(1-F(\ovt))  b(\ovt)  \right]\\
	&= (1-p) \left[ - \unt^2 + \ovt(1-\ovt) \right]\\
	-&\dfrac{p}{2} \left[ \left[ \unt +\ovt -1\right] \dfrac{\ovt+\unt}{2} +\unt \dfrac{5\unt +\ovt}{6}  -(1-\ovt)  \dfrac{5\ovt +\unt}{6}  \right]
\end{align*}

The profit is decreasing in $\unt$:

\begin{align*}\hspace*{-0.5in}
	\dfrac{\partial \Pi_{\unt,\ovt}}{\partial \unt} & = (1-p) \left[ - 2\unt \right] -\dfrac{p}{2} \left[  \dfrac{\ovt+\unt}{2} + \dfrac{1}{2}\left[ \unt +\ovt -1\right]  +\dfrac{5\unt +\ovt}{6} + \unt \dfrac{5}{6}  -(1-\ovt)  \dfrac{1}{6}  \right]\\
	& = (1-p) \left[ - 2\unt \right] -\dfrac{p}{2} \left[ \dfrac{2}{3} (4\unt + 2 \ovt - 1)  \right]\leq 0\\
\end{align*}

Notice that the first summand is negative and inside the brackets of the second summand is positive since $\unt \geq 1-\ovt$ by feasibility. Thus, the profit is decreasing in $\unt$. Hence, the feasibility binds and we have $\unt=1-\ovt$, otherwise decreasing $\unt$ until the feasibility binds strictly increases the profit. Then, we have

\begin{align*}\hspace*{-0.5in}
	\Pi_{\unt,\ovt} & = (1-p) \left[ - \unt^2 + \ovt(1-\ovt) \right] - \dfrac{p}{2} \left[ \unt \dfrac{5\unt +\ovt}{6}  -(1-\ovt)  \dfrac{5\ovt +\unt}{6}  \right]\\
	& = (1-p) \left[ - \unt^2 + \ovt(1-\ovt) \right] - \dfrac{p}{2} \left[ \unt \dfrac{5\unt +\ovt - 5\ovt - \unt}{6}   \right]\\
	& = (1-p) \left[ - \unt^2 + (1-\unt)\unt \right] - \dfrac{p}{2} \left[ \unt \dfrac{5\unt +(1-\unt) - 5(1-\unt) - \unt}{6}   \right]\\
	& = (1-p) \left[\unt(1-2\unt) \right] - \dfrac{p}{2} \left[ \unt \dfrac{8\unt -4}{6}   \right] = (1-p) \left[\unt(1-2\unt) \right] + \dfrac{p}{6} \left[ \unt (1-2\unt)   \right]\\
	&\dfrac{6-5p}{6} \left[\unt(1-2\unt) \right] = \dfrac{6-5p}{6} \Pi^M
\end{align*}

\end{document}

% --- supplement: supplement.tex ---

\maketitle
	
	The results below has first been obtained in \cite{idem} for an environment with finitely many agents, divisible goods and arbitrary endowments. Here I restate them for the environment I study in the Coexistence of Centralized and Decentralized Markets, with the proofs adjusted accordingly.
	
	\section{Monagora Environment}
	
	I reproduce the setup and the initial statement of the mechanism design problem here for convenience.
	
	\begin{itemize}
		
		\item Good: There is a single, indivisible good in the market.
		
		\item Agents: There is a continuum of agents on $[0,1]$.
		
		\item Endowments: Each agent has $1$ unit of endowment of the good.
		
		\item Demands: Each agent demands up to 2 units of the good. Since the good is indivisible, this means, they can consume 0, 1, or 2 units, depending on whether they buy or sell, or neither buy nor sell.
		
		\item Valuations: Each agent has some valuation $\theta\in[0,1]$ for a unit of the good. The valuations are drawn from some distribution $F$ with support $[0,1]$. Agents' valuations are their private information.
		
		\item Marketplace: A mechanism designer wants to design a mechanism to maximize its profit. She knows the distribution of valuations, $F$.
		
	\end{itemize}

	By revelation principle, I focus on direct mechanisms. Moreover, as agents are symmetric other than their valuations, I focus on anonymous mechanisms, which is without loss. Then, the designer will choose a mechanism that allocates $q:\theta \rightarrow \mathbb{R}$ units of good to each agent with valuation $\theta$ and asks her to pay $t:\theta \rightarrow \mathbb{R}$. Hence, the net utility of the agent with the valuation $\theta$ from the monagorastic mechanism is $$ u(\theta) = \theta \min\{ 1, q(\theta) \}  - t(\theta). $$
	
	As agents have demands for two units, having more than 2 unit of the good is same as having 2 unit. As such, the expression for the utility above caps the maximum net trade that increases the utility at $1$, since the agent already has $1$ unit of endowment.
	
	The profit of the marketplace is the net payments. Thus, the designer seeks to maximize total payment, given the incentive compatibility, individual rationality, and feasibility constraints.
	
	\begin{equation*}
		\begin{array}{lllll}
			
			\displaystyle\max\limits_{(q, t)} \int\limits_{[0,1]} t(\theta) f(\theta) d\theta \\

			\text{s. t. }\\
			
			\text{ (IC) } &\theta  \min\{ 1, q(\theta)\}  - t(\theta)  &\geq \theta \min\{ 1, q(\theta') \}   - t(\theta')\\
			
			\text{ (IR) }& \theta  \min\{ 1, q(\theta)  \}  - t(\theta) &\geq 0\\  
			
			\text{ (Individual Feasibility) }& q(\theta) &\geq -1\\

			\text{ (Aggregate Feasability) }& \int\limits_{[0,1]} q(\theta) f(\theta) d\theta &\leq 0\\
			
		\end{array}
	\end{equation*}

	\subsection{Simplifying The Designer's Problem}
	
	We first develop a series of lemmata that help us state the maximization problem above as a concave program.
	
	\begin{lemma}[Monotonicity]
		Suppose $(q, t)$ is a direct, IC mechanism. Then,
		
		\begin{enumerate}
			\item If $q(\theta) < 1$ for some $\theta\in [0,1]$, then $q(\theta)$ is increasing at $(\theta)$.
			
			\item If $q(\theta) \geq 1$ for some $\theta\in [0,1]$, then $q(\theta') \geq 1$ for each $\theta'\geq \theta$.
		\end{enumerate}
		
	\end{lemma}

	The proof is standard, except for taking care of the capacities so it can be found in the Appendix \ref{lemma1}.

	The next lemma presents the derivative of the utility of an agent in an IC mechanism.

	\begin{lemma}[Envelope Condition]
		If $(q, t)$ is a direct, IC mechanism, then for each $\theta\in[0,1]$
		
		$$\dfrac{\partial u(\theta)}{\partial \theta} =
		\begin{cases}
			q(\theta) , &\text{ if } q(\theta) < 1,\\
			1, &\text{ otherwise. }\\
		\end{cases}$$
	\end{lemma}

	Again, the proof is similar to standard arguments and can be found in Appendix \ref{lemma2}.

	\textbf{Notation:} For any direct mechanism $(q, t)$, let
	
	$$q^*(\theta)=
	\begin{cases}
		q(\theta) , &\text{ if } q(\theta) < 1,\\
		1, &\text{ otherwise. }\\
	\end{cases}$$
	
	Note that for a direct, IC mechanism, $q^*(\theta)$ is also weakly increasing.

	The next lemma gives the representation of the utility of each type as the integral of the allocation rule, using the previous lemma.

	\begin{lemma}[Payoff Equivalence]\label{integrable}
		If $(q, t)$ is a direct, IC mechanism, then $$ u(\theta) = u(0) + \int^{\theta}_{0}  q^*(x)dx  ,$$ for each $\theta\in [0.1]$.
	\end{lemma}

	\begin{proof}
		Since $u(\theta)$ is convex $\theta$ on both regions where $ q(\theta)> 1$ and $ q(\theta) \leq 1$ separately, it is absolutely continuous in $\theta$. Then, it is the integral of its derivative.
	\end{proof}

	Next, we pin down the transfer rule in an IC mechanism.

	\begin{lemma}[Revenue Equivalence]
		If $(q, t)$ is a direct, IC mechanism, then
		
		\begin{equation*}
			\begin{array}{ll}
				t(\theta)&= -u(0) +  \theta  q^*(\theta)  - \int\limits^{\theta}_{0} q^*(x)dx , \\
			\end{array}
		\end{equation*}

		for each $\theta\in[0,1]$. 
		
	\end{lemma}

	\begin{proof}
		From the definition of $u(\theta)$ and the previous lemma.
	\end{proof}

	Now we show that the necessary conditions above for incentive compatibility of a mechanism are also sufficient to establish the incentive compatibility of a mechanism.

	\begin{proposition}
		
		Let $(q, t)$ be a direct mechanism. The mechanism is incentive compatible if and only if,
		
		\begin{enumerate}
			\item $q^*(\theta)$ is increasing at $\theta$;

			\item $ t(\theta)= -u(0) +  \theta  q^*(\theta)  - \int\limits^{\theta}_{0} q^*(x)dx $.
		\end{enumerate}

	\end{proposition}

	Proof can be found in Appendix \ref{prop1}.

	The next proposition provides the characterization of the IR mechanisms by establishing the types with the lowest utilities. The reason this is an issue in this model is that in an auction, the lowest allocation an agent could receive is 0. Hence, the utility is always increasing in agent's type, as can be seen from the envelope condition. Of course, this means the lowest type has the lowest utility. However, here, an agent with a relatively low type can be a seller, which means he would get a negative allocation. Therefore, the utility of the lowest type is not the lowest utility, which can again by seen from the envelope condition.

	\begin{proposition}
		Let $(q, t)$ be a direct IC mechanism. Then, it is IR if and only if,
		
		$$ 	\theta^* q^*(\theta^*) \geq t(\theta^*), $$

		where $\theta^*$ is defined as

		\begin{enumerate}
			
			\item $\theta^*=0$ if $q^*(0)\geq 0$,
			
			\item $\theta^*=1$ if $q^*(1)<0$,
			
			\item a solution to $q^*(\theta^*)=0$ if such a type exists.
		\end{enumerate}
		
	\end{proposition}

	\begin{proof}
		
		\textit{Case 1:} Suppose $q^*(0)\geq 0$. Then, by Lemma \ref{integrable}, incentive compatibility of a mechanism implies that the associated ex-post utilities $u(\theta)$ are increasing in $\theta$. Hence, if $u(0) \geq 0$, we have $u(\theta)\geq 0$ for each $\theta\in [0,1]$.

		\textit{Case 2:} Suppose $q^*(1)<0$. Then, by Lemma \ref{integrable}, $u(\theta)$ are decreasing and hence,  $u(1)$ is the lowest payoff. Hence, if it is nonnegative, all other types' payoffs are nonnegative as above.
		
		\textit{Case 3:} Suppose there exists $\theta^*$ such that $q^*(\theta^*)=0$. Then, by Lemma \ref{integrable}, $u(\theta)$ is decreasing up to $\theta^*$ and increasing after that point. Hence, type $\theta^*$ has the lowest payoff. So, if $u(\theta^*)\geq 0$, each type's IR condition must also hold.
		
	\end{proof}

	\begin{lemma}\label{irprofmax}
		If an IC and IR mechanism maximizes the expected revenue of the designer, then,
		
		$$ t(\theta^*) = \theta^* q(\theta^*) $$
		
		where $\theta^*$ is defined as

		\begin{enumerate}
			
			\item $\theta^*=0$ if $q^*(0)\geq 0$,
			
			\item $\theta^*=1$ if $q^*(1)<0$,
			
			\item the solution to $q^*(\theta^*)=0$ if such a type exists.
		\end{enumerate}
		
	\end{lemma}
	
	\begin{proof}
		
		The previous proposition shows that IC and IR mechanisms must have $\theta^* q(\theta^*)$ greater than $t(\theta^*) $. However, if $ 	\theta^* q(\theta^*) > t(\theta^*) $, then the seller can increase the expected revenue by increasing $t(0)$ and keeping the allocation rule the same. This would increase all types' payments and the revenue strictly, contradicting revenue maximization.
		
	\end{proof}

	Using the condition about $\theta^*$ from Lemma \ref{irprofmax} and the previous lemmata, we have

	\begin{equation*}
		\begin{array}{lll}
			\theta^* q^*(\theta^*) &= t(\theta^*) \\
			
			= & -u(0) + \theta^* q^*(\theta^*) - \int\limits^{\theta^*}_{0} q^*(x)dx\\
			
			\iff & u(0) = - \int\limits^{\theta^*}_{0} q^*(x)dx\\
			
			\iff & t(\theta)= \int\limits^{\theta^*}_{0} q^*(x)dx + \theta q^*(\theta)  - \int\limits^{\theta}_{0} q^*(x)dx
		\end{array}
	\end{equation*}

	Now we are ready to show that the allocation rule in a revenue-maximizing mechanism is not `wasteful'.
	
	\begin{proposition}
		Let $(q, t)$ be a direct mechanism that maximizes the revenue of the designer. Then, $q(\theta) \leq 1$ with probability 1 and the aggregate feasibility holds with equality: $\int\limits_{[0,1]} q(\theta) f(\theta) d\theta = 0$.
	\end{proposition}

	\begin{proof}
		
		First, suppose that in the optimal mechanism, there exists a set $\Theta\subset[0,1]$ with a positive measure such that for each $\theta\in \Theta$, $q(\theta) > 1$. Notice that decreasing the allocation to $1$ unit has no effect on the agent's payoff. Hence, it doesn't effect any IC or IR constraints.
		
		Next, let us examine the transfer rule in a direct, IC mechanism:
		
		$$ t(\theta)= \int^{\theta^*}_{0} q^*(x)dx  + \theta q^*(\theta) - \int^{\theta}_{0} q^*(x)dx. $$
		
		If we have $q(\theta) > 1$ for a positive measure of types, then we must have $q(\theta) < 0$ for a corresponding positive measure of types by the aggregate feasibility constraint. Hence, if we reduced $q(\theta) = 1$ for $\theta\in\Theta$, this wouldn't affect any constraints but instead strictly increase profit as it allows us to increase $q(\theta) < 0$ for a positive measure of types, contradicting the optimality of the mechanism.
		
		By the same argument, having $\int\limits_{[0,1]} q(\theta) f(\theta) d\theta < 0$ cannot be optimal: Either buying less from types or selling more to some types would increase their payments, strictly increasing the profit.
	\end{proof}

	Now, by fixing $t(\theta)$ to the characterization we have from above, we can restate the problem as follows.

	\begin{equation*}\label{problem}
		\begin{array}{llllll}
		
			\displaystyle\max_{q} &  \bigint\limits_{[0,1]} \Biggl[ \int\limits_{\{ y|q(y)\leq 0 \}} q(x)dx +  \left( \theta q(\theta)  - \int\limits^{\theta}_{0} q(x)dx \right) \Biggr]  f(\theta)   d\theta\\

			\text{s. t. }&\\

			&q(\theta) \text{ is increasing}\\
			
			&q(\theta) \geq -1\\
			
			&\int\limits_{[0,1]} q(\theta) f(\theta) d\theta = 0\\
			
		\end{array}
	\end{equation*}
	
	After some transformations\footnote{The details can be followed in Appendix \ref{transform}.}, the problem above can be rewritten as follows:

	\begin{equation*}
		\begin{array}{llllll}
		
			\displaystyle\max_{q}  &\left[ \int\limits_{[0,1]} q(\theta) \left[  \dfrac{  \mathbbm{1} \{ q(\theta)\leq 0 \}}{f(\theta)} + \left(\theta  - \dfrac{(1-F(\theta))}{f(\theta)} \right)  \right]f(\theta)   d\theta \right] \\

			\text{s. t. }&\\
			
			& q(\theta) \text{ is increasing}\\
			
			&  q(\theta) \geq -1\\
			
			&\int\limits_{[0,1]} q(\theta) f(\theta) d\theta = 0
			
		\end{array}
	\end{equation*}

	\bibliographystyle{apacite}
	\bibliography{marketplace}

	\appendix

	\section{Proof of Lemma 1}\label{lemma1}
	
	\begin{proof}

		Let $\theta, \theta'\in [0,1]$. Then, by incentive compatibility
		
		$$ \theta \min\{ 1, q(\theta) \}  - t(\theta) \geq \theta \min\{ 1, q(\theta') \} - t(\theta')$$
		
		and		
		$$ \theta' \min\{ 1, q(\theta) \}  - t(\theta) \leq \theta' \min\{ 1, q(\theta') \}  - t(\theta').$$
		
		Subtracting the second inequality from the first one leads to:
		
		$$ (\theta-\theta') \min\{ 1, q(\theta)  \} \geq  (\theta-\theta') \min\{ 1, q(\theta') \}$$ 
		
		Suppose $q(\theta) < 1$ and $\theta > \theta'$. Then, we have
		
		\begin{equation*}
			\begin{array}{lll}
				\min\{ 1, q(\theta) \} &\geq   \min\{ 1, q(\theta') \} \iff \\
				1 > q(\theta)  &\geq   \min\{ 1, q(\theta')  \} \iff \\
				q(\theta) &\geq q(\theta')\\
			\end{array}
		\end{equation*}

		Now suppose $q(\theta)  \geq 1$ and $\theta'\geq \theta$. Then,

		\begin{equation*}
			\begin{array}{lll}
				\min\{ 1, q(\theta')  \} &\geq   \min\{ 1, q(\theta) \} \iff \\
				\min\{ 1, q(\theta')  \}  &\geq   1 \iff \\
				q(\theta')  &\geq 1.\\
			\end{array}
		\end{equation*}
		
	\end{proof}

	\section{Proof of Lemma 2}\label{lemma2}
	
	\begin{proof}
		
		First, suppose $ q(\theta ) < 1$. Then, IC implies that for type $\theta$ agent:

		\begin{equation*}
			\begin{array}{ll}
				u(\theta) &= \max\limits_{\theta'\in [0,1]} \min\{ 1, q(\theta') \} \theta - t(\theta') \\
				&=  \max\limits_{\theta'\in [0,1]} q(\theta') \theta - t(\theta').
			\end{array}
		\end{equation*}

		Notice that the RHS is the maximum of affine functions of $\theta$, so $u(\theta)$ is convex in $\theta$ on this region. Hence, $u(\theta)$ is differentiable almost everywhere in $\theta$ on this region. For any $\theta$ at which it is differentiable, for $\delta>0$, IC implies that
		
		\begin{equation*}
			\begin{array}{lllll}
				&\lim\limits_{\delta \rightarrow 0} \dfrac{u(\theta + \delta) - u(\theta) }{\delta}\\				
				\geq & \lim\limits_{\delta \rightarrow 0} \dfrac{ (q(\theta) (\theta +\delta ) - t(\theta)) -( q(\theta) \theta - t(\theta)) }{\delta} = q(\theta).\\
				&\lim\limits_{\delta \rightarrow 0} \dfrac{u(\theta) - u(\theta- \delta) }{\delta}\\
				\leq & \lim\limits_{\delta \rightarrow 0} \dfrac{ (q(\theta) \theta   - t(\theta)) -( q(\theta) (\theta- \delta) - t(\theta)) }{\delta} = q(\theta).
			\end{array}
		\end{equation*}

		Then, two inequalities together imply that
		
		$$\dfrac{\partial u(\theta)}{\partial \theta} = q(\theta ).$$
		
		Now suppose $ q(\theta) \geq 1$. Then,
		
		$$ u(\theta)= \min\{ 1, q(\theta)  \}  \theta - t(\theta) = \theta - t(\theta). $$
		
		Notice that $t(\theta)$ must be constant in $\theta$ on the region with  $ q(\theta ) \geq 1$: Since agent's effective allocation is constant, otherwise, $i$ would simply choose the type with the least cost. Then, of course, 
		
		$$\dfrac{\partial u(\theta)}{\partial \theta} =1.$$
	\end{proof}

	\section{Proof of Proposition 1}\label{prop1}
	
	\begin{proof}
		
		We want to show that for each $\theta, \theta'\in [0,1]$, we have
		
		\begin{equation*}
			\begin{array}{llllll}
				
				& u(\theta) \geq  \theta \min\{ 1, q(\theta')  \}  - t(\theta') \\
				
				\iff & u(\theta) \geq  \theta \min\{ 1, q(\theta') \} +  \theta' \min\{ 1, q(\theta')  \} \\
				
				&-  \theta' \min\{ 1, q(\theta')  \} - t(\theta')  \\

				\iff &	u(\theta) \geq \theta \min\{ 1, q(\theta') \} - \theta' \min\{ 1, q(\theta')  \}\\
				
				& + u(\theta')   \\
				
				\iff &	u(\theta) - u(\theta') \geq ( \theta - \theta' ) \min\{ 1, q(\theta') \}   \\
				
				\iff &	\int_{\theta'}^{\theta} q^*( x ) dx \geq \int_{\theta'}^{\theta} q^*(\theta') dx
				
			\end{array}
		\end{equation*}
		
		Suppose $\theta> \theta'$. Since $q^*(\cdot)$ is increasing, $q^*(x)\geq q^*(\theta') $ for each $x\in \left[ \theta', \theta \right]$. Then, the last inequality above holds. Similar analysis holds for the case of $\theta < \theta'$.
		
	\end{proof}
	
	\section{Transformations of the Designer's Problem}\label{transform}
	
	We start with the problem in Equation \ref{problem} and make the following transformation:

	\begin{equation*}
		\begin{array}{llllll}
			& \int\limits_{[0,1]}  \int\limits^{\theta}_{0} q(x)dx f(\theta)   d\theta\\
			
			=& \int\limits_{[0,1]}  \int\limits^{\overline{\theta}}_{x} f(\theta)   d\theta q(x )  dx\\
			
			=& \int\limits_{[0,1]}  q(x) (1-F(x)) dx\\

			=& \int\limits_{[0,1]}   q(\theta) \left( \dfrac{(1-F(\theta))}{f(\theta)} \right)  f(\theta)   d\theta\\
		\end{array}
	\end{equation*}

	So, the second part of the objective function becomes:
	
	\begin{equation*}
		\begin{array}{llll}
			&   \int\limits_{[0,1]}   \theta q(\theta) f(\theta)  d\theta -    \int\limits_{[0,1]} q(\theta) \left(\dfrac{(1-F(\theta))}   {f(\theta)} \right)   f(\theta)   d\theta  \\
			
			=&  \int\limits_{[0,1]} \left(\theta q(\theta) -  q(\theta ) \dfrac{(1-F(\theta))}{f(\theta)} \right) f(\theta)   d\theta\\
			
			=&  \int\limits_{[0,1]} q(\theta) \left(\theta  - \dfrac{(1-F(\theta))}{f(\theta)} \right) f(\theta)   d\theta\\	
		\end{array}
	\end{equation*}
	
	Next we look at the first summand in the objective function above. Notice that inside is actually a constant, so it can be expressed as below:

	\begin{equation*}
		\begin{array}{llll}
			&\int\limits_{[0,1]} \left[ \int\limits_{\{ y|q(y)\leq 0 \}} q(x )dx \right] f(\theta)   d\theta\\
			
			= &   \int\limits_{\{ y|q(y)\leq 0 \}} q(x )dx\\
			
			= &  \int\limits_{[0,1]} q(x) \mathbbm{1} \{ q(x)\leq 0 \} dx\\

			= & \int\limits_{[0,1]} q(\theta) \dfrac{  \mathbbm{1} \{ q(\theta)\leq 0 \}}{f(\theta)} f(\theta)   d\theta\\
		\end{array}
	\end{equation*}
	
	Finally, the objective function can be written as:

	\begin{equation*}
		\begin{array}{llll}
			&  \int\limits_{[0,1]} q(\theta ) \dfrac{  \mathbbm{1} \{ q(\theta)\leq 0 \}}{f(\theta)} f(\theta)   d\theta  +  \int\limits_{[0,1]} q(\theta) \left(\theta  - \dfrac{(1-F(\theta))}{f(\theta)} \right) f(\theta)   d\theta\\
			
			& =	 \int\limits_{[0,1]} q(\theta ) \left[  \dfrac{  \mathbbm{1} \{ q(\theta)\leq 0 \}}{f(\theta)} + \left(\theta  - \dfrac{(1-F(\theta))}{f(\theta)} \right)  \right]f(\theta)   d\theta \\

		\end{array}
	\end{equation*}

	Hence, the revenue maximization problem can be expressed as
	
	\begin{equation*}
		\begin{array}{llllll}
		
			\displaystyle\max_{(q, t)}  & \int\limits_{\Theta} q(\theta ) \left[  \dfrac{  \mathbbm{1} \{ q(\theta)\leq 0 \}}{f(\theta)} + \left(\theta  - \dfrac{(1-F(\theta))}{f(\theta)} \right)  \right]f(\theta)   d\theta  \\

			\text{s. t. }&\\
			
			& q(\theta) \text{ is increasing in } \theta\\
			
			& q(\theta) \geq - 1\\
			
			& 0 \geq  \int\limits_{[0,1]} q(\theta) d\theta \\
			
		\end{array}
	\end{equation*}